\numberwithin{equation}{section}
\newtheorem{theorem}{\bf Theorem}[]
\newtheorem{prop}[]{\bf Proposition}
\newtheorem{corollary}[]{\bf Corollary}
\newtheorem{lem}[]{\bf Lemma}
\begin{document}

\title{Spectral thresholding quantum tomography for low rank states}

\author[1]{Cristina Butucea}
\author[2]{M\u{a}d\u{a}lin Gu\c{t}\u{a}}
\author[2]{Theodore Kypraios}

\affil[1]{Universit\'e Paris-Est Marne-la-Vall\'ee, LAMA(UMR 8050), UPEMLV F-77454, Marne-la-Vall\'ee, France}
\affil[2]{University of Nottingham, School of Mathematical Sciences, University Park, Nottingham NG7 2RD, UK}
\renewcommand\Authands{ and }

\date{}

\maketitle

\begin{abstract}{The estimation of high dimensional quantum states is an important statistical problem arising 
in current quantum technology applications. A key example is the tomography of multiple ions states, employed in the validation of state preparation in ion trap experiments \cite{Haffner2005}. Since full tomography becomes unfeasible even for a small number of ions, there is a need to investigate lower dimensional statistical models which capture prior information about the state, and to devise estimation methods tailored to such models. In this paper we propose several new methods aimed at the efficient estimation of low rank states in multiple ions tomography. All methods consist in first computing the least squares estimator, followed by its truncation to an appropriately chosen smaller rank. The latter is done by setting eigenvalues below a certain ``noise level" to zero, while keeping the rest unchanged, or normalising them appropriately. We show that (up to logarithmic factors in the space dimension) the mean square error of the resulting estimators scales as $r\cdot d/N$ where $r$ is the rank, $d=2^k$ is the dimension of the Hilbert space, and $N$ is the number of quantum samples. Furthermore we establish a lower bound for the asymptotic minimax risk which shows that the above scaling is optimal. The performance of the estimators  is analysed in an extensive simulations study, with emphasis on the dependence on the state rank, and the number of measurement repetitions. We find that all estimators perform significantly better that the least squares, with the ``physical estimator" (which is a bona fide density matrix) slightly outperforming the other estimators.

}
\end{abstract}

\section{Introduction}

Recent years have witnessed significant developments at the overlap between quantum theory and statistics: 
from new state estimation (or tomography) methods  \cite{BlumeKohout,
Ng,Smolin&Gambetta,Heinosaari,
TeoEnglertHradil,
GutaKypraiosDryden,AlquierBut2013}, design of experiments \cite{Smith,Merkel,Nunn}, quantum process and detector tomography \cite{Rahimi,Lundeen} construction of confidence regions (error bars) \cite{BlumeKohout3,Audenaert&Scheel,Christandl&Renner}, quantum tests \cite{Jupp,Temme} entanglement estimation \cite{LandonCardinal}, asymptotic theory \cite{Guta&Kahn2,Hayashi&Matsumoto,Audenaert&Szkola,GutaKiukas}.
The importance of quantum state tomography, and the challenges raised by the estimation of high dimensional systems were highlighted by the landmark experiment \cite{Haffner2005} where entangled states of up to 8 ions were created and fully characterised. 
However, as full quantum state tomography of large systems becomes unfeasible \cite{Monz}, there is significant interest in identifying physically relevant, lower dimensional models, and in devising efficient model selection and estimation methods in such setups  \cite{Gross&Liu&Flammia,Flammia&Gross,Cramer1,GutaKypraiosDryden,AlquierBut2013,Carpentier}. In this paper we reconsider the multiple ions tomography (MIT) problem by proposing and analysing several new methods for estimating low rank states in a statistically efficient way. Below, we briefly review the MIT setup, after which we proceed with presenting the key ideas and results of the paper.

In MIT \cite{Haffner2005}, the goal is to statistically reconstruct the \emph{joint state} of $k$ ions (modelled as two-level systems), from \emph{counts data} generated by performing a large number of measurements on identically prepared systems. The unknown state $\rho$ is  a $d\times d$ density matrix (complex, positive trace-one matrix) where $d= 2^k$ is the dimension of the Hilbert 
space of $k$ ions. The experimenter can measure an arbitrary Pauli observable $\sigma_x, \sigma_y$ or $\sigma_z$ of each ion, \emph{simultaneously} on all $k$ ions. Thus, each measurement setting is labelled by a sequence ${\bf s} = (s_1,\dots ,s_k) \in\{x,y,z\}^k$ out of $3^k$ possible choices. The measurement produces an outcome ${\bf  o} = (o_1,\dots ,o_k) \in \{+1,-1\}^k$, whose probability 
is equal to the corresponding diagonal element of $\rho$ with respect to the orthonormal basis determined by the measurement setting ${\bf s}$. The measurement procedure and statistical model can be summarised as follows. 
For each setting ${\bf s}$ the experimenter performs $n$ repeated measurements and collects the counts of different outcomes 
$N({\bf o} |{\bf s})$, so that the total number of quantum samples used is $N:= n\times 3^k$.  The resulting dataset is a $2^k\times 3^k$ table whose columns are independent and contain all the counts in a given setting. 
A commonly used \cite{Haffner2005} estimation method is \emph{maximum likelihood} which selects the state for which the probability of the observed data is the highest among all states. However, while this method seems to perform well in practice, and has efficient numerical implementations \cite{Hradil}, it does not provide confidence intervals (error bars) for the estimators, and it has been criticised for its tendency to produce rank-deficient states \cite{BlumeKohout}.

The goal of this paper is to find alternative estimators which can be efficiently computed, and work well for \emph{low rank states}. The reason for focusing on low rank states is that they form a realistic model for physical states created in the lab, where experimentalists often aim at preparing  a pure (rank-one) state. While this is generally difficult, the realised states tend to have rapidly decaying eigenvalues, so that they can be well approximated by low rank states. Our strategy is to combine an easy but ``noisy" estimation method -- the least square estimator (LSE) -- with an appropriate 
spectral truncation, tuned using available data only, which sets spurious eigenvalues to zero and allows to reduce the mean square error of the estimator.


The LSE $\widehat{\rho}^{(ls)}_n$ is obtained by inverting the linear map $A:\rho\mapsto \mathbb{P}_\rho$ between the state and the probability distribution of the data, where the unknown probabilities are replaced by the empirical (observed) frequencies of the measurement data. The resulting estimator is unbiased, and is ``optimal" in the sense that it minimises the \emph{prediction error}, i.e. the euclidian distance between the empirical frequencies and the predicted probabilities. 
However, one of the disadvantages of the LSE is that it does not take into account the physical properties of the state, i.e. its positivity and trace-one property. More importantly, as we explain below, the LSE has a relatively large \emph{estimation error} for the class of low rank states, and performs well only on very mixed states. This is illustrated in Figure \ref{fig.ls} where the eigenvalues of 
$\widehat{\rho}^{(ls)}_n$ are plotted (in decreasing order) against those of the true state $\rho$, the latter being chosen to have rank $r=2$. 
We see that while the non-zero eigenvalues of $\rho$ are estimated reasonably well, the LSE is poor in estimating the 
zero-eigenvalues, and as consequence, it has a large estimation variance. 

\begin{figure}[h]
\begin{center}
\includegraphics[width= 6.5cm]{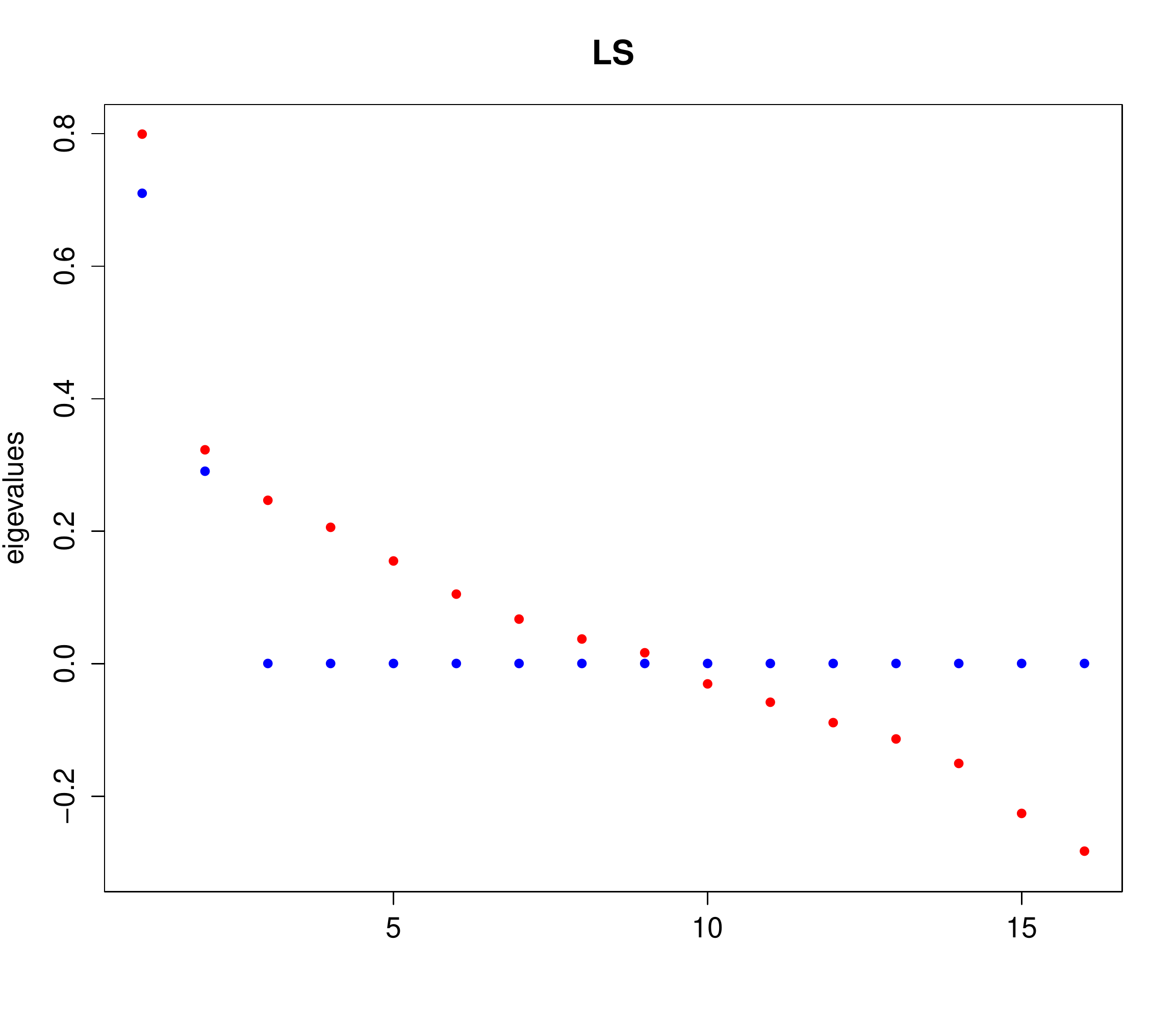}
\includegraphics[width= 6.5cm]{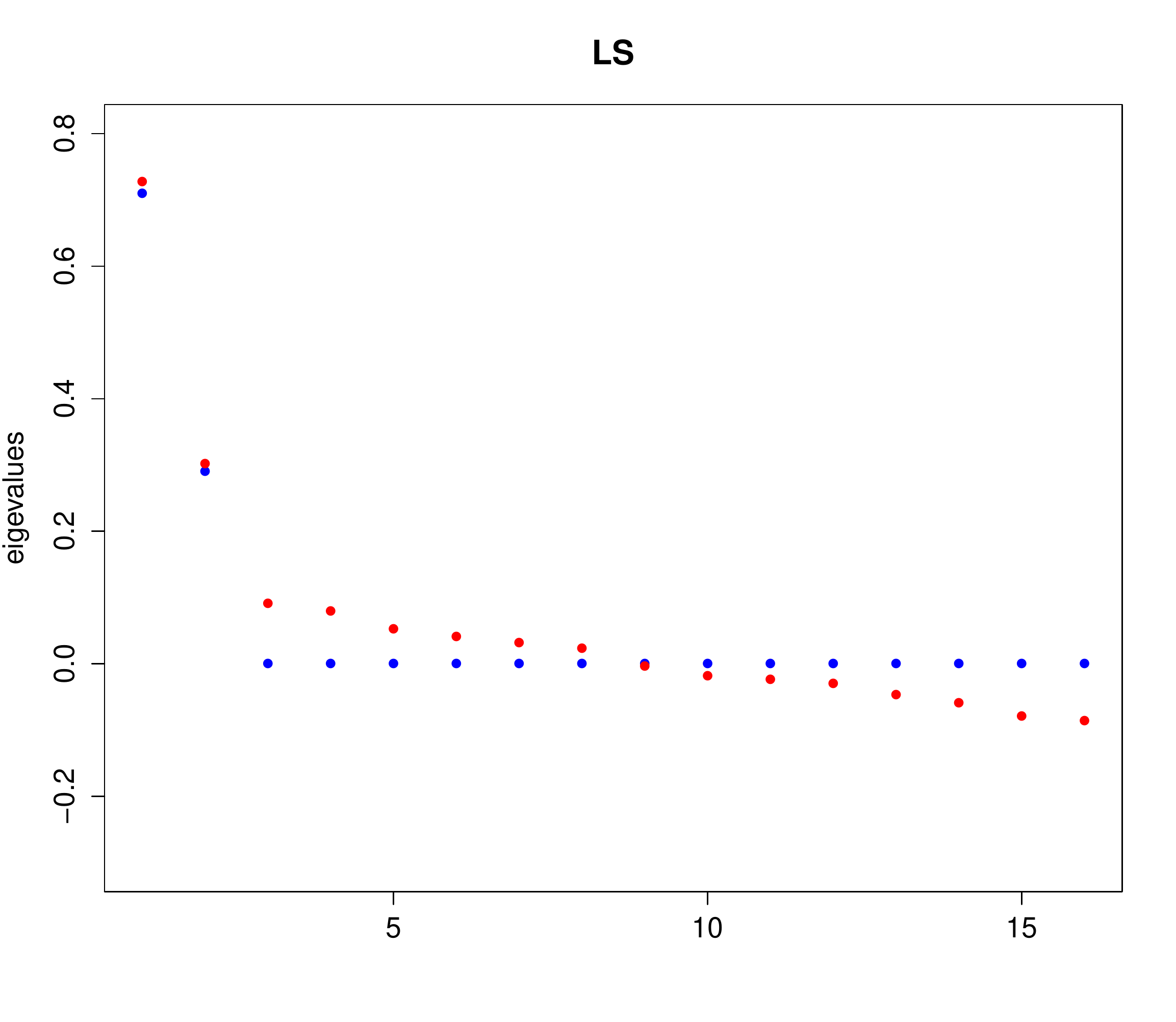}
\caption{Eigenvalues of the LSE (red) arranged in decreasing order, versus those of the true state of $k=4$ ions of rank $r=2$ (blue), for $n=20$ measurement repetitions (LEFT) and $n=100$ measurement repetitions (RIGHT).}\label{fig.ls}
\end{center}
\end{figure}

Our goal is to design more precise estimators, which have the LSE as a starting point, but take into account the ``sparsity" properties of the unknown state. Figure \ref{fig.ls} suggests that  the non-zero eigenvalues of the LSE which are below a certain ``statistical threshold", can be considered as statistical noise and may be set to zero in order to improve the estimation error. To find this noise level, we establish a concentration inequality (see Proposition \ref{prop:linear}) which shows that the operator-norm error $\|\widehat{\rho}^{(ls)}_n -\rho\|^2$ is upper bounded by a rate $\nu^2$ which (up to logarithmic factors in $d$) is proportional to $d/N$.

The first estimator we propose, is a \emph{rank penalised} one obtained by diagonalising the LSE, arranging its eigenvalues in decreasing order of their absolute values, and setting to zero all those eigenvalues whose absolute values  are below the threshold $\nu$
$$
\widehat\rho^{(ls)}_n= \sum_{i=1}^d \widehat \lambda_i | \hat{\psi}_i\rangle \langle \hat{\psi}_i | \quad\longrightarrow\quad
\widehat\rho^{(pen)}_n= \sum_{|\widehat \lambda_i|\geq \nu} \widehat \lambda_i | \hat{\psi}_i\rangle \langle \hat{\psi}_i |.
$$
The same outcome can be obtained as solution of the following penalised estimation problem: 
among all selfadjoint matrices, choose the one that is close to the LSE but in the same time it has low rank, 
so that it minimises over $\tau$ the norm-two square discrepancy penalised by the rank
$$
D(\tau)  := \|\tau - \widehat \rho_n ^{(ls)} \|_2^2 + \nu^2 \cdot {\rm rank}(\tau).  
$$
In particular the estimator's rank is determined by the data. 
In Theorem \ref{thm:nonlinear} we show that if $\rho$ is of \emph{unknown} rank $r\leq d$, then the mean square error (MSE) 
$\mathbb{E}\| \widehat\rho^{(pen)}_n - \rho\|_2^2$ is upper-bounded (up to logarithmic factors) by the rate $ (r \cdot d)/N$. 
This captures the expected optimal dependence on the number of parameters for a state of rank $r$. Indeed, in section \ref{sec.lower.bound} we show that no estimator can improve the above rate for all states of rank $r$, cf. Theorem \ref{th.minmax} for the \emph{asymptotic minimax} lower bound.

The penalised estimator has however the drawback that it may not represent a physical state. To remedy this, and further improve its statistical accuracy, we propose a \emph{physical estimator} which is the solution of the following optimisation problem. We seek the \emph{density matrix} which is closest to the LSE $\widehat{\rho}^{(ls)}_n$, and whose non-zero eigenvalues are larger that the threshold $4\nu$. It turns out that the solution can be found via a simple iterative algorithm whereby at each step the eigevalues of $\widehat{\rho}^{(ls)}_n$ below the threshold are set to zero, 
and the remaining eigenvalues are normalised by shifting with a common constant, while the eigenvectors are not changed throughout the process. In Theorem \ref{thm:recstate} we show that the physical estimator satisfies a similar upper bound to the penalised one. 
   
In section \ref{sec.simulations} we present results of extensive numerical investigations of the two 
proposed estimators. In addition we consider the \emph{oracle} ``estimator", which is simply the spectral truncation of the LSE that is closest to the true state $\rho$, and the \emph{cross-validated} estimator $\widehat{\rho}^{(cv)}_n$ which aims at finding the optimal truncation rank by estimating the Frobenius error by cross-validation. In fact, we found that cross-validation can help in better tuning the constant factor of the threshold rate of the penalised and physical estimators. As expected from the theoretical results, we find that all estimators perform significantly better than the LSE on low rank states; moreover the physical estimator has slightly smaller estimation error than the others, including the oracle estimator. We also find that all methods converge to the correct rank in the limit of large number of repetitions
but through different routes: the penalised estimator tends to underestimate, while the physical one tends to overestimate the rank, for small number of samples.   

Having discussed the upper bounds on the estimators' MSE, we would like to know how they compare with the best possible estimation procedure. One way to characterise the latter is through the \emph{asymptotic minimax risk} for the class of states of a given rank $r$. From asymptotic statistics theory \cite{vanderVaart} we know that for every sequence of estimators 
$\widehat{\rho}_n$, the following lower bound for its asymptotic maximum risk over the set $\mathcal{S}_{d,r}$ of state of rank $r$ holds
$$
\lim\inf_{n\to\infty} 
n \cdot 3^k \sup_{\rho\in \mathcal{S}_{d,r}} \mathbb{E} \|\widehat{\rho}_n- \rho \|_2^2 \geq 
\sup_{\rho\in \mathcal{S}_{d,r}} {\rm Tr}(I(\rho)^{-1} G(\rho)). 
$$
On the right side, $I(\rho)$ is the Fisher information corresponding to all measurement settings taken together, and $G(\rho)$ is a positive matrix describing the quadratic approximation of the Frobenius distance around $\rho$. In Theorem \ref{th.minmax} we show that the right side is lower bounded by $2r(d-r) $ which shows that (up to logarithmic factors) the upper bounds of the penalised and physical estimators have the same scaling as the asymptotic minimax risk.

Recently, a number of papers discussed related aspects of quantum tomography problems. 
The idea of the penalised estimator has been proposed in \cite{AlquierBut2013}, which provided a weaker upper bound for its MSE. Reference \cite{GutaKypraiosDryden} analyses model selection methods for finite rank models and maximum likelihood estimation. Reference \cite{Koltchinsky} proposes a different estimator and establishes a comparable upper bound for its MSE. The class of low rank states is also employed in compressed sensing quantum tomography \cite{Gross&Liu&Flammia,Gross2011,Carpentier}, but their statistical model is based on expectations of Pauli observables rather than measurement counts.

The paper is organised as follows. In section \ref{sec.general} we describe the measurement procedure and introduce 
the statistical model of MIT. In section \ref{sec.ls} we define the  linear (least squares) estimator and derive an upper bound on its operator norm error 
which improves on a previous bound of \cite{AlquierBut2013}. 
In section \ref{section:modified} we define the penalised and threshold estimators and derive upper bounds for their mean square errors with respect to the norm-two square (Frobenius) distance.
The performance of the different methods is analysed in Section \ref{sec.simulations}. 
An asymptotic lower bound for the minimax risk is derived in section \ref{sec.lower.bound}, based on the Fisher information of the measurement data. The upper and lower bounds match in the scaling with the number of parameters and number of total measurements, up to a logarithmic factor. We give a detailed description of the numerical implementation of the algorithms, including the cross-validation routines used for tuning the pre-factor of the penalty and threshold constants. We illustrate the simulation results with box plots of the Frobenius errors for the least squares, oracle, cross-validation, penalisation and threshold estimator, for states of ranks 1,2,6 and 10, and for different choices of measurement repetitions $n=20, 100$. Additionally, we plot the empirical distribution of the chosen rank for different estimators, showing the concentration on the true rank as the number of repetitions increases. 

%

\section{Multiple ions tomography}\label{sec.general}

This paper deals with the problem of estimating the joint quantum state of $k$ two-dimensional systems (qubits), as encountered in ion trap quantum tomography \cite{Haffner2005}. The two-dimensional system is determined by two energy levels of an ion, while the remaining levels can be ignored as they remain unpopulated during the experiment. The joint Hilbert space of the ions is therefore the tensor product 
$\left(\mathbb{C}^2\right)^{\otimes k}\cong \mathbb{C}^d$ where  $d= 2^k$, and the state is a density matrix $\rho$ on this space, i.e. a positive $d\times d$ matrix of trace one.

Our statistical model is derived from standard ion trap measurement procedures, and takes into account the specific statistical uncertainty due to finite number of measurement repetitions. We consider that for each individual qubit, the experimenter can measure one of the three Pauli observables $\sigma_{x},\sigma_{y},\sigma_{z}$. A measurement set-up is then defined by a \emph{setting} ${\bf  s}= (s_1,\dots, s_k)\in \mathcal{S}_{k} :=\{x,y,z\}^{k}$ which specifies which of the 3 Pauli observables is measured for each ion. 
For each fixed setting, the measurement produces random outcomes 
${\bf o} \in \mathcal{O}_{k}:= \{+1,-1\}^{k}$ with probability 
\begin{equation}\label{eq.proba}
p({\bf o}|{\bf s}) := {\rm Tr}(\rho P^{\bf s}_{\bf o} )=
\langle e_{\bf o}^{\bf s}|\rho |e_{\bf o}^{\bf s}\rangle ,
\end{equation}
where $P^{\bf s}_{\bf o}$ are the one-dimensional projections 
\begin{equation}\label{eq.basis.d}
P^{\bf s}_{\bf o} = | e_{o_{1}}^{s_{1}} \rangle\langle e_{o_{1}}^{s_{1}} |
\otimes \dots \otimes
 | e_{o_{k}}^{s_{k}} \rangle\langle e_{o_{k}}^{s_{k}} |
\end{equation}
and $ | e_{o}^{s} \rangle$ are the eigenvectors of the Pauli matrices, i.e.
$
\sigma_{s}  | e_{\pm}^{s}\rangle = \pm\,  | e_{\pm}^{s}\rangle.
$

The measurement procedure consists of choosing a setting ${\bf s}$, and performing $n$ repeated measurements in that setting, on identically prepared systems in state $\rho$. This provides information about the diagonal elements of $\rho$ with respect to the chosen measurement basis, i.e. the probabilities $p({\bf o}|{\bf s})$. In order to identify the other elements, the procedure is then repeated for all $3^k$ possible settings.

Before describing the statistical model of the measurement counts data, we start by discussing in more detail the relation between the unknown parameter $\rho$ and the probabilities $p({\bf o}|{\bf s})$. Consider the ``extended" set of Pauli operators $\{\sigma_x,\sigma_y, \sigma_z, \sigma_I:= \mathbf{1}\}$ which form a basis in $M(\mathbb{C}^2)$. 
We construct the tensor product basis in $M(\mathbb{C}^d)$ with elements 
$ \sigma_{\bf b} = \sigma_{b_1}\otimes \dots\otimes \sigma_{b_k} $ where $ {\bf b} \in \{x,y,z, I \}^k $ and note that the following orthogonality relations hold
$
{\rm Tr} (\sigma_{\bf b} \sigma_{\bf c}) = d\delta_{{\bf b}, {\bf c}}.
$
The state $\rho$ can be expanded in this basis as 
\begin{equation}\label{eq.dec.rho}
\rho= \sum_{{\bf b}\in \{I,x,y,z\}^k} \rho_{\bf b} \sigma_{\bf b}, \qquad \mathrm{where}\quad\rho_{\bf b} = {\rm Tr}(\rho \sigma_{\bf b})/d.
\end{equation}
Equation \eqref{eq.proba} can then be written as
$$
p({\bf o}|{\bf s}) =\sum_{{\bf b} \in \{I,x,y,z\}^k} \rho_{\bf b}{\rm Tr}(\sigma_{\bf b} P_{\bf o}^{\bf s}) = \sum_{{\bf b} \in \{I,x,y,z\}^k} \rho_{\bf b} A_{\bf b} ({\bf o}|{\bf s}).
$$
The coefficients $A_{\bf b}({\bf o}|{\bf s})$ can be  computed explicitly as
\begin{equation}\label{calculA}
A_{\bf b}({\bf o}|{\bf s}) ={\rm Tr}(\sigma_{\bf b} P_{\bf o}^{\bf s})  =\prod_{j \not \in E_{\bf b}} o_j \cdot I(b_j = s_j),
\end{equation}
where $E_{\bf b}:= \{ i:  b_i = I\}$.  Let $\tilde{\rho}\in \mathbb{C}^{4^k}$ be the representation of $\rho$ as a the vector of coefficients $\rho_{\bf b}$, and let 
${\bf p}$ be the corresponding vector of probabilities for all settings $\left( p({\bf o}|{\bf s}) \,:\, ({\bf o},{\bf s}) \in \mathcal{O}_k\times \mathcal{S}_k\right)$, with settings, and outcomes within settings ordered in lexicographical order. The measurement is then described by the  linear map 
${\bf A}:\mathbb{C}^{4^k} \to \mathbb{C}^{3^k}\otimes \mathbb{C}^{2^k}$ with matrix elements  
$A_{\bf b}({\bf o}|{\bf s})$ defined in (\ref{calculA}), such that
\begin{equation}\label{eq.linear.map.a}
{\bf p} = {\bf A}\tilde{\rho}.
\end{equation}
The linear map ${\bf A}$ is injective and its inverse can be computed as 
$
\tilde \rho = ({\bf A}^* \cdot {\bf A})^{-1} \cdot {\bf A}^* \cdot {\bf p} ,
$
which together with the decomposition \eqref{eq.dec.rho} and Lemma \ref{lemmaA} implies
\begin{equation}\label{eq.inverse.linear}
\rho = \sum_{\bf b} \sum_{\bf o} \sum_{\bf s} p({\bf o}|{\bf s}) \frac{A_{\bf b}({\bf o}|{\bf s}) }{2^k 3^{d({\bf b})}} \sigma_{\bf b} =\sum_{\bf b} \tilde\rho_{\bf b}\sigma_{\bf b} , \qquad \mathrm{where} \quad d({\bf b}) := | E_{\bf b}|.
\end{equation}

The above formula allows to reconstruct the matrix elements from the measurement probabilities. However, since the experiment only provides random counts from these probabilities, we need to construct a statistical model for the measurement data. After $n$ repetitions of the measurement with setting ${\bf s}$, we collect independent, identically distributed observations $X_{i|{\bf s}}\in \mathcal{O}_k$, for $i$ from 1 to $n$. The data can be summarised by the set of counts 
$\{ N({\bf o}|{\bf s}) : {\bf o} \in \mathcal{O}_{k}\}$, where $N({\bf o}|{\bf s}) = \sum_i I(X_{i|{\bf s}}={\bf o})$ is the number of times that outcome ${\bf o}$ has occurred. After repeating this for each setting ${\bf s}\in \mathcal{S}_{k}$, we collect all the data in the counts dataset $ D:= \{ N({\bf o}|{\bf s}) :  ({\bf o}, {\bf s}) \in \mathcal{O}_{k} \times \mathcal{S}_{k} \}$. Since successive preparation-measurement cycles are independent of each other, the probability of a certain dataset $D$ is given by the product of multinomials
 \begin{equation}\label{eq.multinomial.prod}
\mathbb{P}_{\rho}(D) = \mathbb{P}_{\rho}\left( \{ N( {\bf o} |{\bf s} ) :  ({\bf o}, {\bf s}) \in \mathcal{O}_{k} \times \mathcal{D}_{k}  \}\right) = 
\prod_{\bf s} 
\frac{n!}{\prod_{{\bf o}} N( {\bf o} | {\bf s} )!} 
\prod_{\bf o} \mathbb{P}_{\rho} ({\bf o}|{\bf s})^{N({\bf o}|{\bf s}) }
\end{equation}
The statistical problem is to estimate the state $\rho$ from the measurement data summarised by the counts dataset $D$. 
The most commonly used estimation method is maximum likelihood (ML). The ML estimator is defined by
$$
\hat{\rho}^{(ml)}_n (D):=\underset{\sigma}{\arg\max}  \,
\mathbb{P}_{\sigma}( D)
$$
where the maximum is taken over all \emph{density matrices} $\tau$ on $\mathbb{C}^d$, and can be computed by using standard maximisation routines, or the iterative algorithms proposed in \cite{Hradil,Rehacek}.  However, ML becomes impractical for about $k=10$ ions, and the iterative algorithm has the drawback that it cannot be adapted to models where prior information about the state is encoded in a  lower dimensional parametrisation of the relevant density matrices, e.g. when the states are low rank. In the next section we discuss an alternative method, the least square estimator, and derive an upper bound on its mean square error. After this, we will show that  by ``post-processing" the least squares estimator using penalisation and thresholding methods, its performance can be considerably improved when the unknown state has low rank.

 \section{The linear (least squares) estimator}\label{sec.ls}

Since the vectorise version $\tilde\rho$ of the state $\rho$ satisfies \eqref{eq.linear.map.a}, it is the solution of the optimisation
$$
\tilde \rho = \arg \inf_{\tilde \tau \in \mathbb{C}^{4^k}} \| {\bf p} - {\bf A} \tilde \tau\|^2,
$$ 
giving $\rho = \sum_{\bf b} \tilde \rho_{\bf b} \cdot \sigma_{\bf b} $ in \eqref{eq.inverse.linear}. If the number of repetitions $n$ is large compared with the dimension $d$, then the outcomes' empirical frequencies are good approximations of the corresponding probabilities, i.e. $f({\bf o}|{\bf s}) :=N({\bf o}|{\bf s})/n \to p({\bf o}|{\bf s})$ by the law of large numbers. Therefore, by replacing ${\bf p}$ by the vector of frequencies ${\bf f}$ in in the previous display, we can define the least square estimator of $\rho$
$$
\tilde \rho_n^{(ls)} := \arg \inf_{\tilde \tau \in \mathbb{C}^{4^k}} \|{\bf f} - {\bf A} \tilde \tau\|^2,
$$
which has the explicit expression
\begin{equation}\label{eq.least.squares}
\widehat \rho^{(ls)}_n 
 =  \sum_{\bf b} \sum_{\bf o} \sum_{\bf s} f({\bf o}|{\bf s}) \frac{A_{\bf b}({\bf o}|{\bf s}) }{2^k 3^{d({\bf b})}} \sigma_{\bf b}.
\end{equation}
Note that in this case it comes down to replacing the unknown probability ${\bf p}$ in equation \eqref{eq.inverse.linear} with the empirical frequencies ${\bf f}$ (also known as the plug-in method).


In spite of this ``optimality" property and its computationally efficiency, the least square estimator has the disadvantage that in general it is not a state, i.e. it is not trace-one and may have negative eigenvalues. A more serious disadvantage is that its risk -- measured for instance by the mean square error $\mathbb{E} ( \| \widehat \rho^{(ls)}_n - \rho \|_2^2)$ -- is large compared with other estimators such as the maximum likelihood estimator. This is due to the fact that the linear estimator does not use the physical properties of the unknown parameter $\rho$, that is positivity and trace-one. As we will see below, the modified estimators proposed in Section~\ref{section:modified} outperform  
the least square while adding only a small amount of computational complexity. Moreover, the second estimator will be a density matrix.

In the remainder of this section we provide concentration bounds on the square error of the linear estimator, which will later be used in obtaining the upper bounds of the improved estimators. The following Proposition improves the rate $k(4/3)^k /n$ obtained in \cite{AlquierBut2013} to $k(2/3)^k/n$. 
\begin{prop}\label{prop:linear}
Let  $\widehat \rho^{(ls)}_n$ be the linear estimator  of $\rho$. Then, for any $\varepsilon >0$ small enough the 
following operator norm inequality holds with probability larger than $1-\varepsilon$ under $\mathbb{P}_\rho$
$$
\|\widehat \rho^{(ls)}_n - \rho\| \leq  \nu(\varepsilon),
$$
where 
$$
\nu(\varepsilon)^2 =  \frac 2{n} \left(\frac 23\right)^{k}\log \left(\frac{2^{k+1}}{\varepsilon}\right) = 
 \frac{2 d}{N}\log\left(\frac{2 d}{\varepsilon}\right)
$$
with $N:= n \cdot 3^k$ the total number of measurements. The same bound holds when $k=k(n)$ as long as $\nu(\varepsilon) \to 0$. 
\end{prop}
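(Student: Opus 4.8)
The plan is to write $\widehat\rho^{(ls)}_n - \rho$ as a normalised sum of independent, centred, self-adjoint random matrices and to apply a noncommutative (matrix) Bernstein inequality. From the reconstruction formula \eqref{eq.inverse.linear} and its empirical version \eqref{eq.least.squares}, the estimation error depends on the data only through the frequencies, so I would introduce the coefficient of $p({\bf o}|{\bf s})$,
$$
R_{\bf s}({\bf o}) := \sum_{\bf b} A_{\bf b}({\bf o}|{\bf s})\,\bigl(2^k 3^{d({\bf b})}\bigr)^{-1}\sigma_{\bf b},
$$
and write
$$
\widehat\rho^{(ls)}_n - \rho = \frac1n\sum_{{\bf s}\in\mathcal S_k}\sum_{i=1}^n Z_{i,{\bf s}},\qquad Z_{i,{\bf s}} := R_{\bf s}(X_{i|{\bf s}}) - \mathbb E_\rho\, R_{\bf s}(X_{1|{\bf s}}).
$$
Since distinct settings are measured on fresh systems and the $n$ repetitions within a setting are i.i.d., the $N=n\cdot 3^k$ matrices $Z_{i,{\bf s}}$ are independent, centred and self-adjoint, which is exactly the input required by matrix Bernstein. (Note $\sum_{\bf s}\mathbb E_\rho R_{\bf s}=\rho$ by reconstruction, so the centring is consistent.)

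The crucial structural fact, read off from \eqref{calculA} by observing that only the $2^k$ strings ${\bf b}$ with $b_j\in\{s_j,I\}$ make $A_{\bf b}({\bf o}|{\bf s})\neq0$, is that $R_{\bf s}({\bf o})$ factorises over the qubits:
$$
R_{\bf s}({\bf o}) = \frac1{2^k}\bigotimes_{j=1}^k\bigl(o_j\,\sigma_{s_j} + \tfrac13\,\sigma_I\bigr).
$$
Each single-qubit factor has eigenvalues $4/3$ and $-2/3$, hence $\|R_{\bf s}({\bf o})\| = 2^{-k}(4/3)^k = (2/3)^k$. This is what fixes the scale $(2/3)^k$, and it supplies the uniform bound $\|Z_{i,{\bf s}}\|\le 2(2/3)^k$, giving the ``linear'' Bernstein constant $L=\tfrac2n(2/3)^k$.

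For the variance I would square the factorisation, $R_{\bf s}({\bf o})^2 = 4^{-k}\bigotimes_j(\tfrac{10}9\sigma_I + \tfrac23 o_j\sigma_{s_j})$, and estimate $\Sigma := \bigl\|\sum_{\bf s}\mathbb E_\rho Z_{1,{\bf s}}^2\bigr\|$ with $\mathbb E_\rho Z_{1,{\bf s}}^2 = \mathbb E_\rho R_{\bf s}^2 - (\mathbb E_\rho R_{\bf s})^2$. Expanding the tensor product over subsets $T\subseteq\{1,\dots,k\}$ turns each $\mathbb E_\rho R_{\bf s}^2$ into a sum of Pauli terms $\sigma^{\bf s}_T$ (acting by $\sigma_{s_j}$ on $j\in T$, trivially elsewhere) weighted by $\theta^{\bf s}_T={\rm Tr}(\rho\,\sigma^{\bf s}_T)$; reorganising the sum over ${\bf s}$ by the support of $T$ (each Pauli on $T$ recurring $3^{k-|T|}$ times) and controlling the correction terms through the Parseval identity $\sum_{\bf b}{\rm Tr}(\rho\sigma_{\bf b})^2 = d\,{\rm Tr}(\rho^2)\le d$ implied by \eqref{eq.dec.rho}. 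With $v=\Sigma/n$ and $L$ as above, matrix Bernstein yields $\mathbb P(\|\widehat\rho^{(ls)}_n-\rho\|\ge t)\le 2\cdot 2^k\exp\!\bigl(-\tfrac{t^2/2}{v+Lt/3}\bigr)$; equating the right-hand side to $\varepsilon$ and solving the quadratic in $t$ produces $\nu(\varepsilon)$ with the logarithmic factor $\log(2^{k+1}/\varepsilon)$, and the case $k=k(n)$ follows as long as the $L\log$ contribution stays lower order so that $\nu(\varepsilon)\to0$.

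The main obstacle is the variance estimate, and with it the precise constant. The naive bound $\mathbb E_\rho R_{\bf s}^2\preceq\|R_{\bf s}\|^2 I=(4/9)^k I$ summed over the $3^k$ settings reproduces only the cruder rate $(4/3)^k$ of \cite{AlquierBut2013}, so the whole improvement hinges on using the genuine second moment. Computing it gives $\sum_{\bf s}\mathbb E_\rho R_{\bf s}^2 = (5/6)^k I + (\text{Pauli corrections})$, so even after the centring the isotropic part of the variance proxy is of order $(5/6)^k$; extracting the sharper $(2/3)^k$ in $\nu(\varepsilon)^2=\tfrac2n(2/3)^k\log(2^{k+1}/\varepsilon)$ therefore requires both a careful accounting of the cancellations in $\sum_{\bf s}\mathbb E_\rho Z_{1,{\bf s}}^2$ and, for the least favourable states, a concentration argument finer than the plain variance term of Bernstein together with explicit use of the trace-one/low-rank structure of $\rho$ (via ${\rm Tr}(\rho^2)\le1$). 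Pinning the constant down to exactly $(2/3)^k$ is the technically demanding step on which the stated improvement rests.
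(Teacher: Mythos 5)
Your overall route is the same as the paper's: write $\widehat\rho^{(ls)}_n-\rho$ as a sum of $N=n\cdot 3^k$ independent, centred Hermitian matrices and apply the matrix Bernstein inequality (Proposition~\ref{Bernstein}); your decomposition coincides with the paper's $W_{{\bf s},i}$, and your tensor factorisation $R_{\bf s}({\bf o})=2^{-k}\bigotimes_j(o_j\sigma_{s_j}+\tfrac13\sigma_I)$ is a clean, exact way to obtain the paper's uniform bound $V=\tfrac2n(2/3)^k$ (the paper gets the same $V$ by a term-by-term triangle inequality over the $2^k$ strings ${\bf b}$ compatible with ${\bf s}$). The genuine gap is the variance term: you do not prove the bound $\bigl\|\sum_{{\bf s},i}\mathbb{E}(W_{{\bf s},i}^2)\bigr\|\le \tfrac1n(2/3)^k$ that is needed to make Bernstein deliver $\nu(\varepsilon)^2=\tfrac2n(2/3)^k\log(2^{k+1}/\varepsilon)$, and you explicitly concede that your computation only reaches $(5/6)^k$. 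As written, your argument therefore establishes a concentration bound at a strictly worse rate than the one claimed, so it does not prove the Proposition. The paper closes this step differently: it bounds the multinomial covariance by ${\rm Cov}\preceq\mathbf{1}$, which reduces the variance proxy to $\tfrac1n\|\sum_{\bf s}\sum_{\bf o}B({\bf o}|{\bf s})^*B({\bf o}|{\bf s})\|$, and then evaluates this sum in closed form using the orthogonality relations of Lemma~\ref{lemmaA}, arriving at $W=\tfrac1n(2/3)^k$. That is the missing ingredient relative to your write-up.

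That said, your diagnosis of the obstacle is substantively correct and worth flagging. Your exact second-moment computation is right: $\sum_{\bf o}R_{\bf s}({\bf o})^2=4^{-k}\bigotimes_j\sum_{o_j}(\tfrac{10}9\sigma_I+\tfrac23 o_j\sigma_{s_j})=(5/9)^k\mathbf{1}$, so $\sum_{\bf s}\sum_{\bf o}B({\bf o}|{\bf s})^*B({\bf o}|{\bf s})=(5/3)^k\mathbf{1}$, and with the true probabilities the isotropic part of $\sum_{\bf s}\mathbb{E}_\rho R_{\bf s}^2$ is $(5/6)^k$ (already for the maximally mixed qubit, $k=1$, one finds $\sum_s\mathbb{E}Z_s^2=\tfrac34\mathbf{1}>\tfrac23\mathbf{1}$). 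The paper's evaluation of the same sum uses the count $\#\{{\bf b}:d({\bf b})=\ell\}=\binom{k}{\ell}$, which is valid only when ${\bf b}$ is restricted to be compatible with a fixed setting (as in the $V$ bound), not after Lemma~\ref{lemmaA} has already summed over all settings, where the count is $\binom{k}{\ell}3^{k-\ell}$. So the discrepancy you ran into is not a failure of your method but a real issue in the constant $(2/3)^k$: to salvage the stated rate one would need an argument beyond the plain Bernstein variance term, exactly as you suspect. If you are content with the rate $\nu(\varepsilon)^2\asymp\tfrac{2}{n}(5/6)^k\log(2d/\varepsilon)$ (still a genuine improvement over the $(4/3)^k$ of \cite{AlquierBut2013}, though weaker than $d/N$ by a factor $(5/4)^k$), your proposal can be completed by controlling the non-isotropic Pauli corrections to $\sum_{\bf s}\mathbb{E}_\rho R_{\bf s}^2$; the Cauchy--Schwarz/Parseval route you sketch needs care, since the naive bound on those corrections is not automatically smaller than $(5/6)^k$.
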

\emph{Proof.} See Appendix \ref{app.proof.prop.linear}.

As a side remark we note that projecting the least-squares estimator onto the space of Hermitian matrices with trace 1, does not change the rate of convergence from Proposition~\ref{prop:linear}. The following Proposition allows us to assume that, without loss of generality, the least-squares estimator has also trace 1.
\begin{prop}\label{prop:trace}
Under the notation and assumptions in Proposition~\ref{prop:linear}, let
\begin{equation}\label{rho.tilde.ls}
\widehat \rho_n^{(ls,n)}  = \underset{\tau: tr(\tau)=1}{\arg\min}  \|\tau - \widehat \rho_n^{(ls)} \|_2^2.
\end{equation}
Then with probability larger than $1-\varepsilon$ we have
$
\|\widehat \rho_n^{(ls,n)} - \rho\| \leq 2 \nu( \varepsilon).
$
\end{prop}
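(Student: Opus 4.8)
The plan is to exploit the fact that the minimization in \eqref{rho.tilde.ls} is nothing but the orthogonal projection, in the Frobenius inner product, onto the affine subspace of trace-one Hermitian matrices, which admits an explicit closed form. First I would record that $\widehat\rho_n^{(ls)}$ is Hermitian, since by \eqref{eq.least.squares} it is a real linear combination of the Hermitian operators $\sigma_{\bf b}$ (the coefficients $A_{\bf b}({\bf o}|{\bf s})$ in \eqref{calculA} are real, as are the frequencies $f({\bf o}|{\bf s})$). The orthogonal complement of the traceless Hermitian matrices is spanned by $I$, because $\langle I,\delta\rangle_2 = {\rm Tr}(\delta)$ vanishes precisely when $\delta$ is traceless. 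Consequently the minimizer is obtained by subtracting the appropriate multiple of the identity,
$$
\widehat\rho_n^{(ls,n)} = \widehat\rho_n^{(ls)} - \frac{{\rm Tr}(\widehat\rho_n^{(ls)}) - 1}{d}\, I,
$$
which I would justify either by a one-line Lagrange-multiplier computation or by directly checking that the residual $\widehat\rho_n^{(ls)} - \widehat\rho_n^{(ls,n)}$ is a multiple of $I$ and hence orthogonal to every traceless Hermitian matrix.

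Next I would rewrite the estimation error in terms of $\Delta := \widehat\rho_n^{(ls)} - \rho$. Since $\rho$ is a genuine density matrix we have ${\rm Tr}(\rho)=1$, so ${\rm Tr}(\widehat\rho_n^{(ls)})-1 = {\rm Tr}(\Delta)$ and therefore
$$
\widehat\rho_n^{(ls,n)} - \rho = \Delta - \frac{{\rm Tr}(\Delta)}{d}\, I.
$$
Applying the triangle inequality in operator norm and using $\|I\|=1$ yields $\|\widehat\rho_n^{(ls,n)} - \rho\| \leq \|\Delta\| + |{\rm Tr}(\Delta)|/d$. The factor of two then emerges from the elementary bound $|{\rm Tr}(\Delta)| = |\sum_i \lambda_i(\Delta)| \leq \sum_i |\lambda_i(\Delta)| \leq d\,\|\Delta\|$, valid because each eigenvalue of the Hermitian matrix $\Delta$ is bounded in modulus by its operator norm. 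Substituting gives $\|\widehat\rho_n^{(ls,n)} - \rho\| \leq 2\|\Delta\|$, and the conclusion is immediate once I invoke Proposition \ref{prop:linear}, which guarantees $\|\Delta\| = \|\widehat\rho_n^{(ls)} - \rho\| \leq \nu(\varepsilon)$ on an event of probability at least $1-\varepsilon$.

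There is no genuine obstacle here: the argument reduces to a short deterministic inequality that simply inherits the concentration bound already established for the raw least-squares estimator. The only points requiring minor care are confirming the closed form of the projection and tracking the normalization $1/d$ on the identity correction. I note that the constant $2$ is not tight — the slackness in the trace inequality reflects the fact that the correction term vanishes exactly when $\Delta$ is itself proportional to $I$, the very direction in which the trace contribution is largest — but the factor of two is all that is needed for the stated rate.
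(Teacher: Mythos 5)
Your argument is correct and is essentially the paper's own proof: both identify the minimizer as the shift $\widehat\rho_n^{(ls)} - \frac{{\rm Tr}(\widehat\rho_n^{(ls)})-1}{d}I$ (the paper writes this eigenvalue-wise as $\widetilde\lambda_j = \widehat\lambda_j + L/2$), bound the size of the identity correction by $|{\rm Tr}(\Delta)|/d \le \|\Delta\|$, and conclude by the triangle inequality and Proposition \ref{prop:linear}. No substantive difference.
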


\emph{Proof.} See Appendix \ref{app.prop:trace}.

\section{Rank-penalised and threshold projection estimator} \label{section:modified}

In this section we investigate two ways to improve the least-squares estimators. The first method is to project the least-squares estimator onto the space of finite rank Hermitian matrices of an appropriate rank. We prove upper bounds for its risk with respect to the Frobenius (norm-two) distance. Building on the knowledge about the rank-penalised estimator, we define the second estimator which is the projection of the least-squares estimator on the space of physical states whose eigenvalues are larger than a certain positive noise threshold. We give an simple and fast algorithm producing a proper density matrix from the data, which also inherits the good theoretical properties of the rank-penalized estimator.

\subsection{Rank-penalised estimator}
We introduce here the rank-penalised nonlinear estimator, which can be computed from the least-squares estimator by truncation to an appropriately chosen rank.

As noted earlier, while the least-square estimator is unbiased, it has a large variance due to the fact that it does not take into account the physical constraints encoded in the unknown parameter $\rho$.  A possible remedy is to ``project" the least squares estimator onto the space of physical states, i.e. positive, trace-one matrices. This method will be discussed in the following subsection. Another improvement can be obtained by taking into account the ``sparsity" properties of the unknown state. For instance, in many experimental situations the goal is to create a particular low rank, or even pure state. The fact that such states can be characterised with a smaller number of parameters than a general density matrix, has two important consequences. Firstly, they can be estimated by 
measuring an ``informationally incomplete" set of observables, as demonstrated in \cite{Gross&Liu&Flammia,Flammia&Gross}. Secondly, the prior information can be used to design estimators with reduced estimation error compared with generic methods which do not take into account the structure of the state. Roughly speaking, this is because each unknown parameter brings its own contribution to the overall error of the estimator.

However, the downside of working with a lower dimensional model is that it contains built-in assumptions which may not be satisfied by the true (unknown) physical state.  Preparing a pure state is strictly speaking rarely achievable due to various experimental imperfections, so using a pure state statistical model is in fact an oversimplification and can lead to erroneous conclusions about the true state. On the other hand, one can argue that when the (small) experimental noises are taken into account, the actual state is ``effectively" low rank, i.e. it has  a small number of significant eigenvalues and a large number of eigenvalues which are so close to zero that they cannot be distinguished from it. Then, the interesting question is how to decide on where to make the cut-off between statistically relevant eigenvalues and pure statistical noise. This is a common problem in statistics which is closely related to that of model selection \cite{Claeskens}. Below we describe the rank-penalised estimator addressing this problem, and show that its theoretical and practical performance is superior to the least squares estimator, and is close to what one would expect from an optimal estimator. In addition, its computation requires only the diagonalisation of the least-squares estimator.

Before presenting a simple algorithm for computing the estimator, we briefly discuss the idea behind its definition. 
Let
\begin{equation}\label{eq.spectral.dec.rho.ls}
\widehat\rho^{(ls)}_n= \sum_{i=1}^d \widehat \lambda_i | \hat{\psi}_i\rangle \langle \hat{\psi}_i |.
\end{equation}
be the spectral decomposition of the least squares estimator, with eigenvalues ordered such that 
$|\widehat \lambda_1| \geq ... \geq |\widehat \lambda_d|$.
For each given rank $\kappa \in \{1,\dots, d\}$ we can project 
$\widehat \rho_n^{(ls)}$ onto the space of matrices of rank $\kappa $ by computing the matrix which is the closest to 
$\widehat \rho_n^{(ls)}$ with respect to the Frobenius distance
$$
\widehat{\rho}_n(\kappa ) := \underset{\tau: \, {\rm rank}(\tau) = \kappa }{\arg\min} \|\tau- \widehat \rho_n ^{(ls)} \|_2^2 .
$$
Although the projection is not a linear operator, $\widehat{\rho}_n(\kappa )$ is easy to compute, and is obtained by truncating the spectral decomposition \eqref{eq.spectral.dec.rho.ls} to the most significant $\kappa $ eigenvalues
$$
\widehat{\rho}_n(\kappa ) = \sum_{i=1}^\kappa  \widehat \lambda_i | \hat{\psi}_i\rangle \langle \hat{\psi}_i |.
$$
The question is now how to choose the rank $\kappa $ in order to obtain a good estimator. In Figure \ref{fig.error.rank} we illustrate the dependence of the norm-two square error $e(\kappa ):= \| \widehat{\rho}_n(\kappa ) -\rho\|_2^2$ on the rank, for a particular 
dataset generated with a  rank 6, 4-ions state. As the rank is increased starting with $\kappa =1$ (pure states), the error 
decreases steeply as $\widehat{\rho}_n(\kappa )$ becomes less biased, it reaches a minimum close to the true rank, and increases slowly as added parameters increase the variance of the estimator. However, since the state $\rho$ is unknown, the norm-two error and optimal rank for which the minimum is achieved, are unknown.   
\begin{figure}
\begin{center}
\includegraphics[width=8cm]{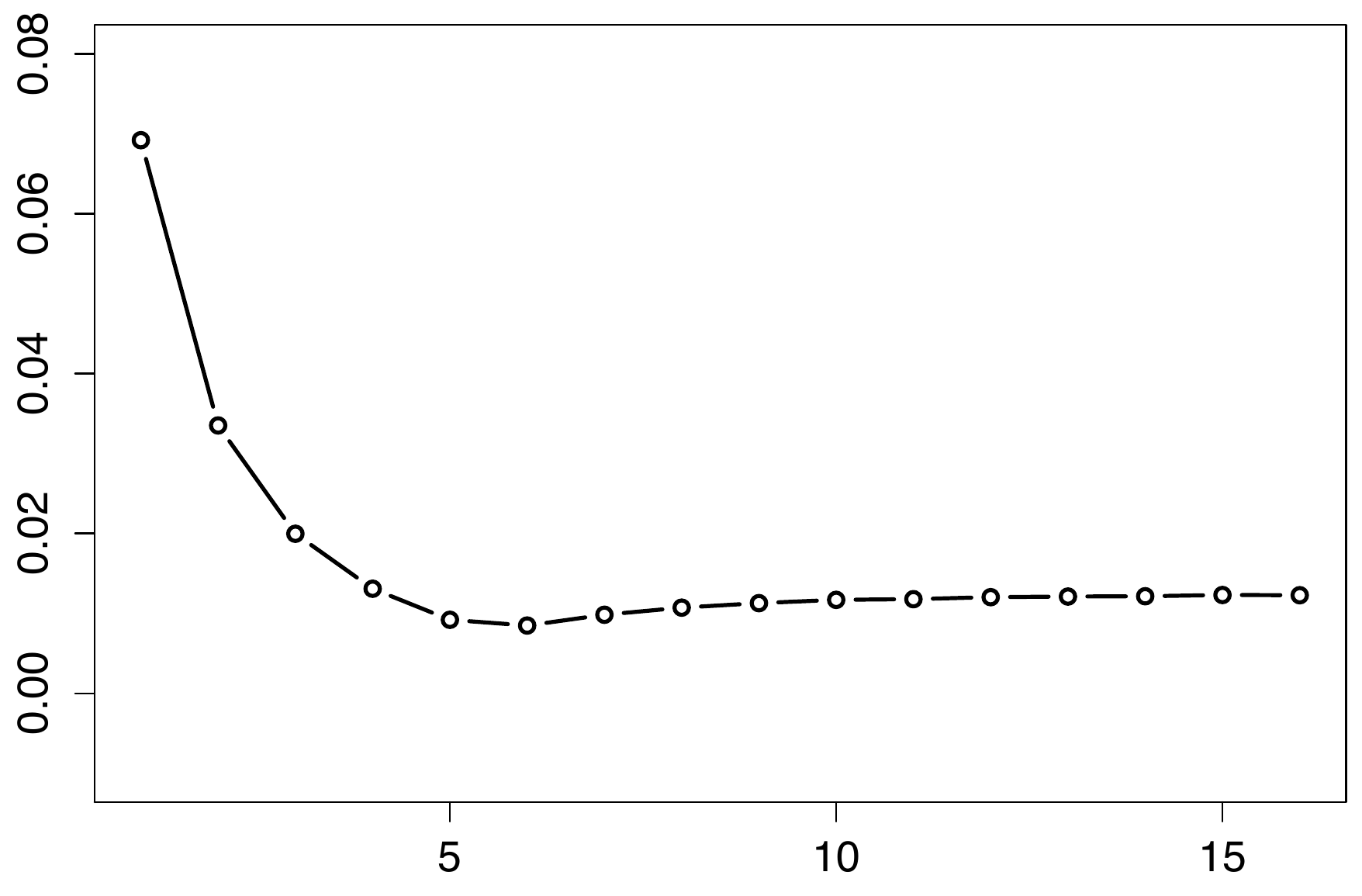}
\end{center}
\caption{The norm-two square error $\| \widehat{\rho}_n(\kappa ) -\rho\|_2^2$ of the truncated LSE as a function of rank, for a rank 6 state, $k=4$ ($d=16$) and $n=500$ measurement repetitions.}
\label{fig.error.rank}
\end{figure}
To go around this, we can estimate the error $e(\kappa )$ from the data by means of e.g. cross-validation, 
as it will be described in section \ref{sec.simulations}. However, in this section we follow a different path, and we define the rank-penalised estimator \cite{AlquierBut2013}, \cite{Bunea} as the minimiser over $\kappa $ of the following expression:
$$
\|\widehat{\rho}_n(\kappa )- \widehat \rho_n ^{(ls)} \|_2^2 + \nu^2 \cdot \kappa   = 
\sum_{i=\kappa +1}^d |\widehat{\lambda}_i|^2 +  \nu^2 \cdot \kappa 
$$
where $\nu$ is a constant which will be tuned appropriately. The first term quantifies the fit of the truncated estimator with respect to the least squares estimator, while the second term is a penalty which increases with the complexity of the model, i.e. the rank. The rank penalised estimator $\widehat{\rho}_n^{(pen)}$ is thus the solution of the simple optimisation problem
\begin{eqnarray}
\label{def:rank-pen}
\widehat{\rho}_n^{(pen)} &:=& \widehat{\rho}_n(\hat{\kappa }), \qquad{\rm where} \qquad
\hat{\kappa }:= \underset{\kappa  = 1,...,d}{\arg\min} \, \left\{ \sum_{i=\kappa +1}^d |\widehat{\lambda}_i|^2 +  \nu^2 \cdot \kappa  \right\}=
\max\{\kappa : \widehat \lambda_{\kappa }^2 \geq \nu^2 \}.
\end{eqnarray}
%
This means that the eigenvalues below a certain noise threshold are set to zero while those above the threshold remain unchanged. The following theorem is our first main result, and shows that the appropriate threshold is given by the upper bound on the operator norm error of the least squares estimator, as established in Proposition~\ref{prop:linear}.

\begin{theorem} \label{thm:nonlinear} 

Let $\theta >0$ be an arbitrary constant, let $c(\theta) := 1+2/\theta$, and let $\varepsilon >0$ be a small parameter. Then 
with probability larger than $1-\varepsilon$, we have
\begin{equation}\label{eq.bound.penalised}
\|\widehat \rho_n ^{(pen)} - \rho\|_2^2 \leq \min_{\kappa =1,...,d} 
\left\{ c^2(\theta) \sum_{j>\kappa } \lambda_j^2(\rho) + 2 c(\theta) \nu(\varepsilon)^2 \kappa 
\right\},
\end{equation}
where $\widehat \rho_n ^{(pen)}$ is the penalised estimator defined in \eqref{def:rank-pen} with threshold 
$\nu( \varepsilon)^2$ given by 
$$
\nu( \varepsilon)^2 :=  \frac 2{n} \left(\frac 23\right)^{k}\log \left(\frac{2^{k+1}}{\varepsilon}\right) = \frac {2d}N \log \frac{2 d}{\varepsilon},
$$
which is assumed to be o(1) with increasing $n$ and $k$.

\end{theorem}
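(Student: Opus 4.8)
The plan is to prove an oracle inequality by the standard ``basic inequality'' route, working throughout on the event $\mathcal{E}_\varepsilon := \{\|\widehat\rho^{(ls)}_n - \rho\| \le \nu(\varepsilon)\}$, which by Proposition \ref{prop:linear} has probability at least $1-\varepsilon$. Write $W := \widehat\rho^{(ls)}_n - \rho$, so that $\|W\| \le \nu := \nu(\varepsilon)$ on $\mathcal{E}_\varepsilon$. First I would record the variational characterisation behind \eqref{def:rank-pen}: by the Eckart--Young property the best Frobenius approximation of $\widehat\rho^{(ls)}_n$ of any given rank is its spectral truncation, so $\widehat\rho^{(pen)}_n$ is exactly the minimiser over all self-adjoint $\tau$ of $F(\tau) := \|\tau - \widehat\rho^{(ls)}_n\|_2^2 + \nu^2\,\mathrm{rank}(\tau)$. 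Fixing a comparison rank $\kappa$ and letting $\rho_\kappa$ be the truncation of $\rho$ to its $\kappa$ largest eigenvalues (so $\|\rho - \rho_\kappa\|_2^2 = \sum_{j>\kappa}\lambda_j^2(\rho)$), the inequality $F(\widehat\rho^{(pen)}_n) \le F(\rho_\kappa)$ is the starting point.

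Next I would substitute $\widehat\rho^{(ls)}_n = \rho + W$, expand the two squared Frobenius norms and cancel the common $\|W\|_2^2$ term. Writing $\widehat\rho := \widehat\rho^{(pen)}_n$ and $\widehat r := \mathrm{rank}(\widehat\rho)$, this yields the key inequality
\[
\|\widehat\rho - \rho\|_2^2 + \nu^2 \widehat r \;\le\; \sum_{j>\kappa}\lambda_j^2(\rho) + \nu^2\kappa + 2\langle \widehat\rho - \rho_\kappa, W\rangle,
\]
where $\langle\cdot,\cdot\rangle$ is the Hilbert--Schmidt inner product. Everything now hinges on the cross term. Since $\widehat\rho - \rho_\kappa$ has rank at most $\widehat r + \kappa$, the duality between the trace and operator norms gives $|\langle \widehat\rho - \rho_\kappa, W\rangle| \le \|\widehat\rho - \rho_\kappa\|_1\,\|W\| \le \nu\sqrt{\widehat r + \kappa}\,\|\widehat\rho - \rho_\kappa\|_2$. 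I would then apply Young's inequality $2xy \le \theta^{-1}x^2 + \theta y^2$ with the free parameter $\theta$, splitting the cross term so that the contribution proportional to $\nu^2\widehat r$ is reabsorbed by the penalty term on the left and the remainder is expressed through $\sum_{j>\kappa}\lambda_j^2(\rho)$ and $\nu^2\kappa$; this bookkeeping is what produces the constant $c(\theta) = 1 + 2/\theta$ and, after rearranging, the bound $c^2(\theta)\sum_{j>\kappa}\lambda_j^2(\rho) + 2c(\theta)\nu^2\kappa$. Taking the minimum over $\kappa$ at the end gives \eqref{eq.bound.penalised}.

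The delicate point, and the step I expect to be the main obstacle, is precisely this reabsorption. Because the penalty level $\nu^2$ coincides with the squared operator-norm error of the least squares estimator furnished by Proposition \ref{prop:linear}, there is essentially no slack, and a careless Young split leaves an uncontrolled multiple of the \emph{data-dependent} rank $\widehat r$ that cannot be dominated by the penalty term on the left. To close the argument one must exploit additional structure of $\widehat r$ on $\mathcal{E}_\varepsilon$: by Weyl's perturbation inequality the (signed) eigenvalues of $\widehat\rho^{(ls)}_n$ lie within $\nu$ of those of $\rho$, so any surviving eigenvalue ($|\widehat\lambda_i|\ge\nu$) must correspond to a genuinely nonnegligible eigenvalue of $\rho$; in particular $\widehat r \le \mathrm{rank}(\rho)$ when $\rho$ is exactly low rank. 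Feeding this control of $\widehat r$ into the cross-term estimate is what makes the Young split close with the stated constants, and is where the bulk of the careful (but otherwise routine) computation lies.
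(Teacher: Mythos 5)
Your setup is the right skeleton: the paper itself does not reprove the oracle inequality but simply cites it from \cite{AlquierBut2013} and combines it with Proposition \ref{prop:linear}, and the cited result is indeed proved by exactly the basic-inequality-plus-cross-term route you describe (compare $F(\widehat\rho^{(pen)}_n)\le F(\rho_\kappa)$, expand, bound $\langle \widehat\rho^{(pen)}_n-\rho_\kappa, W\rangle$ by $\|W\|\sqrt{\widehat r+\kappa}\,\|\widehat\rho^{(pen)}_n-\rho_\kappa\|_2$, then Young). However, you have correctly located the crux --- the reabsorption of the $\nu^2\widehat r$ contribution --- and then resolved it incorrectly. Your proposed fix, namely that Weyl's inequality forces $\widehat r\le \mathrm{rank}(\rho)$ on $\mathcal{E}_\varepsilon$, fails on two counts. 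First, the theorem is claimed for \emph{all} states $\rho$ (the paper stresses this immediately after the statement), so there is no finite $\mathrm{rank}(\rho)<d$ to fall back on, and a bound $\widehat r\le d$ destroys the oracle rate. Second, even for exactly low-rank $\rho$ the argument is vacuous: Weyl gives $|\widehat\lambda_i-\lambda_i|\le\nu$, so a surviving eigenvalue $|\widehat\lambda_i|\ge\nu$ only yields $|\lambda_i|\ge 0$; it does not force $\lambda_i\neq 0$.

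The actual mechanism, visible in the hypothesis under which \cite{AlquierBut2013} state their oracle inequality (and quoted in the paper's two-line proof), is a \emph{multiplicative slack} between the threshold and the noise: the inequality is proved on the event $(1+\theta)\|\widehat\rho^{(ls)}_n-\rho\|^2\le\nu^2$, not merely $\|\widehat\rho^{(ls)}_n-\rho\|^2\le\nu^2$. With this slack, the cross-term estimate produces $\|W\|^2(\widehat r+\kappa)\le \nu^2(\widehat r+\kappa)/(1+\theta)$, and the factor $1/(1+\theta)<1$ is precisely what allows the $\widehat r$-part to be strictly dominated by the penalty $\nu^2\widehat r$ sitting on the left-hand side; the remaining bookkeeping through Young's inequality is what generates $c(\theta)=1+2/\theta$, whose blow-up as $\theta\to 0$ records the disappearance of that slack. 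No control of $\widehat r$ by the true rank is used anywhere. (As a side remark, this is also the one place where the paper's own write-up is slightly loose: Proposition \ref{prop:linear} guarantees $\|W\|\le\nu(\varepsilon)$ with probability $1-\varepsilon$, whereas the cited oracle inequality needs the $(1+\theta)$-strengthened event; one should really define the threshold with the extra factor, which changes nothing in the rates.) As written, your argument does not close, and the route you indicate for closing it would not give the stated bound for general $\rho$.
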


\emph{Proof.} The upper bound follows directly from Proposition \ref{prop:linear} combined with the following oracle inequality established in \cite{AlquierBut2013}, 
$$
\|\widehat \rho_n ^{(pen)} - \rho\|_2^2 \leq \min_{\kappa =1,...,d} 
\left\{ c^2(\theta) \min_{R: \mbox{rank}(R) = \kappa }\|R-\rho\|_2^2 + 2 c(\theta) \cdot \kappa \cdot  \nu(\varepsilon)^2 ,
\right\}
$$
which holds true provided that $(1+\theta) \|\widehat \rho_n^{(ls)} - \rho\|^2 \leq \nu(\varepsilon)^2$. This event occurs with probability larger than $1-\varepsilon$.
\qed

Let us make some explanatory remarks on the above result. Firstly, the bound \eqref{eq.bound.penalised} applies to \emph{all} states $\rho$, not only ``small" rank ones. Recall that $\mathcal{S}_{d,r}$ denotes the set of states of rank-$r$ states on $\mathbb{C}^d$.  In the special case when $\mbox{rank}(\rho) =r \leq d$, the theorem implies that, 
with probability larger than $1-\varepsilon$,
$$
\|\widehat \rho_n ^{(pen)} - \rho\|_2^2 \leq 2 c(\theta)  \cdot r \cdot \nu(\varepsilon)^2 .
$$
If the rank $r$ is much smaller than $d$, this bound is a significant improvement to  the corresponding upper bound $d\cdot \nu(\varepsilon)^2$ for the least square estimator, which can be derived from the operator norm bound of Proposition \ref{prop:linear}. Moreover, up to a constant factor the rate $\nu(\varepsilon)^2$ is equal to
$
d\cdot r \log(d)/N 
$
which is essentially the ratio of the number of parameters and total number of measurements. In section \ref{sec.lower.bound} we will show that apart from the log factor this rate is also optimal, and cannot be improved even if the rank of the state is \emph{known}, which indicates that the estimator adapts to the complexity of the true parameter. Furthermore, we stress the fact that the bound \eqref{eq.bound.penalised} holds true for growing dimension $d=2^k$ as well as the number of measurements $n$; the bound remains meaningful as far as $d \log d /N \to 0$.


The second observation is that our procedure selects the true rank consistently. Denote by $\hat \kappa $ the rank of the resulting estimator $\widehat{\rho}_n^{(pen)}$. Following \cite{AlquierBut2013} we can prove that, if there exists some 
$\kappa$ such that $\lambda_\kappa (\rho) >(1+\delta) \sqrt{\nu (\varepsilon)}$ and  $\lambda_{\kappa +1}(\rho) <(1-\delta) \sqrt{\nu (\varepsilon)}$ for some $\delta \in (0,1)$, then
$$
\mathbb{P}(\hat \kappa  = \kappa ) = 1 - \mathbb{P}(\| \widehat \rho_n^{(ls)} -\rho \| \geq \delta \sqrt{\nu(\varepsilon)}).
$$
This stresses the fact that the procedure detects the eigenvalues above a threshold related to the error of the least squares estimator.
If the true rank of $\rho$ is $r$ and if $k$ and $n$ are such that $\nu$ tends to 0 (which always occurs for fixed number of ions $k$), then $\lambda_r > \nu$ asymptotically and the probability that $\hat \kappa  = r$ tends to 1. 

We can also project $\widehat \rho_n ^{(pen)}$ on the matrices with trace 1, to get
\begin{equation}\label{def:statebis}
\widehat \rho_n ^{(pen,n)} = \arg\min_{R \in \mathcal{S}}
\|R- \widehat \rho_n ^{(pen)} \|_2^2, 
\end{equation}
where $\mathcal{S}$ is the set of all density matrices on $\mathbb{C}^d$. 
The following Corollary shows that the key properties of the estimator 
are preserved if we additionally normalise it to trace-one after thresholding.

\begin{corollary} \label{cor:rk,state} Under the notation and assumptions of Theorem~\ref{thm:nonlinear} 
if $\rho $ is an arbitrary state in $\mathcal{S}_{d,r}$
and if $k$ and $n$ are such that $\lambda_r > \nu(\varepsilon)$ for some $\varepsilon \in (0,1)$, then, with probability larger than $1-\varepsilon$, 
$$
\|\widehat \rho_n ^{(pen,n)} - \rho\|_2^2 \leq 8 c(\theta)\cdot r \cdot \nu(\varepsilon)^2.
$$
Moreover, there exists an absolute constant $C>0$ such that
$$
\sup_{\rho \in \mathcal{S}_{d,r}} E_\rho \|\widehat \rho_n^{(pen,n)} - \rho\|_2^2 
\leq C \frac{rd}N \log\left(\frac{2d}\varepsilon\right).
$$
\end{corollary}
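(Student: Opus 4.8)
The plan is to derive both assertions from Theorem~\ref{thm:nonlinear} together with two elementary geometric facts: that the Frobenius projection onto the convex set $\mathcal{S}$ of density matrices cannot send $\widehat \rho_n^{(pen)}$ farther from $\mathcal{S}$ than the distance to any fixed point of $\mathcal{S}$, and that the Frobenius distance between two density matrices is bounded by an absolute constant.

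First I would establish the high-probability bound. Specialising the oracle inequality of Theorem~\ref{thm:nonlinear} to a rank-$r$ state $\rho$ by choosing $\kappa = r$ (so that $\sum_{j>r}\lambda_j^2(\rho) = 0$) gives, with probability at least $1-\varepsilon$,
$$
\|\widehat \rho_n^{(pen)} - \rho\|_2^2 \le 2 c(\theta)\, r\, \nu(\varepsilon)^2 .
$$
Since $\rho \in \mathcal{S}$ and $\widehat \rho_n^{(pen,n)}$ is by definition the Frobenius-nearest element of $\mathcal{S}$ to $\widehat \rho_n^{(pen)}$, the distance from $\widehat \rho_n^{(pen)}$ to $\mathcal{S}$ is at most $\|\widehat \rho_n^{(pen)} - \rho\|_2$. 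Combining this with the triangle inequality,
$$
\|\widehat \rho_n^{(pen,n)} - \rho\|_2 \le \|\widehat \rho_n^{(pen,n)} - \widehat \rho_n^{(pen)}\|_2 + \|\widehat \rho_n^{(pen)} - \rho\|_2 \le 2\,\|\widehat \rho_n^{(pen)} - \rho\|_2 .
$$
Squaring and inserting the previous display yields $\|\widehat \rho_n^{(pen,n)} - \rho\|_2^2 \le 8 c(\theta)\, r\, \nu(\varepsilon)^2$ on the same event, which is the first claim.

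For the expectation bound I would integrate this tail estimate. Let $\Omega$ denote the event of probability at least $1-\varepsilon$ on which the above inequality holds; note that it holds for \emph{every} rank-$r$ state, the hypothesis $\lambda_r > \nu(\varepsilon)$ not being needed once one argues directly from the oracle inequality. On $\Omega$ we bound by $8 c(\theta) r \nu(\varepsilon)^2 = 16 c(\theta)\, \frac{rd}{N}\log(2d/\varepsilon)$. On the complement, both $\widehat \rho_n^{(pen,n)}$ and $\rho$ lie in $\mathcal{S}$, so $\|\widehat \rho_n^{(pen,n)} - \rho\|_2^2 = {\rm Tr}(\widehat\rho^2) + {\rm Tr}(\rho^2) - 2\,{\rm Tr}(\widehat\rho\,\rho) \le 2$, because the trace of a product of positive matrices is non-negative; this contributes at most $2\varepsilon$. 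Hence
$$
E_\rho \|\widehat \rho_n^{(pen,n)} - \rho\|_2^2 \le 16 c(\theta)\, \frac{rd}{N}\log\!\Big(\frac{2d}{\varepsilon}\Big) + 2\varepsilon .
$$
It then remains to absorb the $2\varepsilon$ term: taking $\varepsilon \le d/N$ gives $2\varepsilon \le 2rd/N \le (2/\log 2)\,\frac{rd}{N}\log(2d/\varepsilon)$, so the whole right-hand side is at most $C\,\frac{rd}{N}\log(2d/\varepsilon)$ with $C = 16 c(\theta) + 2/\log 2$.

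The only genuinely delicate point is the uniformity of the expectation bound over all of $\mathcal{S}_{d,r}$: states with an arbitrarily small smallest nonzero eigenvalue violate $\lambda_r > \nu(\varepsilon)$, so the high-probability inequality on $\Omega$ must be obtained from the oracle inequality of Theorem~\ref{thm:nonlinear} (valid for every $\rho$) rather than from the rank-selection consistency remark that precedes the corollary. The remaining ingredients---the contractivity of the projection toward $\rho$ and the constant bound on the Frobenius distance of density matrices---are standard, and the choice of $\varepsilon$ absorbing the residual probability is routine.
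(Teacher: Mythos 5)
Your proof of the high-probability bound is correct and essentially equivalent to the paper's: you obtain $\|\widehat\rho_n^{(pen,n)}-\widehat\rho_n^{(pen)}\|_2\le\|\widehat\rho_n^{(pen)}-\rho\|_2$ from the minimality of the projection over a set containing $\rho$, whereas the paper computes the projection explicitly as a uniform shift of the eigenvalues and bounds the shift via ${\rm Tr}^2(\widehat\rho_n^{(pen)}-\rho)/d\le\|\widehat\rho_n^{(pen)}-\rho\|_2^2$; both routes give the factor $4$ on squares and hence $8c(\theta)\,r\,\nu(\varepsilon)^2$. You are also right that the hypothesis $\lambda_r>\nu(\varepsilon)$ is not used once one argues from the oracle inequality of Theorem~\ref{thm:nonlinear} with $\kappa=r$; the paper's proof does not use it either.

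The expectation bound is where you diverge from the paper, and where there is a genuine gap. You split on the event $\Omega$ and bound the contribution of $\Omega^c$ by $2\,\mathbb{P}(\Omega^c)\le 2\varepsilon$, using that the squared Frobenius distance between two density matrices is at most $2$. This leaves $16c(\theta)\frac{rd}{N}\log(2d/\varepsilon)+2\varepsilon$, and the additive $2\varepsilon$ is \emph{not} dominated by $C\frac{rd}{N}\log(2d/\varepsilon)$ in general: for fixed $\varepsilon$ and $N$ large the claimed bound tends to zero while yours has the floor $2\varepsilon$. Your fix --- imposing $\varepsilon\le d/N$ --- is an additional hypothesis that does not appear in the corollary, which asserts the bound with an absolute constant for the $\varepsilon$ used in the construction. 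The paper avoids this by a layered tail argument: it notes that the deviation inequality holds at every level $e\le\varepsilon$, i.e.\ $\mathbb{P}\left(\|\widehat\rho_n^{(pen,n)}-\rho\|_2^2>x(e)\right)<e$ with $x(e)=16\frac{rd}{N}\log(2d/e)$, inverts this to $\mathbb{P}\left(\|\widehat\rho_n^{(pen,n)}-\rho\|_2^2>x\right)\le 2d\exp\left(-\frac{Nx}{16rd}\right)$ for $x\ge x(\varepsilon)$, and integrates the tail; the residual then comes out as $16\frac{rd}{N}\varepsilon$ rather than $2\varepsilon$ and is absorbed into $C$ with no restriction on $\varepsilon$. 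To establish the corollary as stated you would need to replace your crude complement bound by this integration of the tail at all levels (or explicitly add the smallness condition on $\varepsilon$ to the hypotheses).
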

\emph{Proof.} See Appendix~\ref{sec:7.4}. \qed

\subsection{Physical threshold estimator}



Although the rank-penalised estimator performs well in terms of its risk, it is not necessarily positive and trace-one and therefore it may not represent a physical state. In this section we propose and analyse the following ``physical estimator"
\begin{equation}\label{def:state}
\widehat \rho_n ^{(phys)} = \arg\min_{\sigma \in \mathcal{S}(\nu)}
\|\sigma- \widetilde \rho_n ^{(ls)} \|_2^2, 
\end{equation}
where $\widetilde\rho_n^{(ls)}$ is the ```normalised least squares estimator" defined in \eqref{rho.tilde.ls}, and 
$\mathcal{S}(\nu)$ denotes the set of states at noise level $\nu$
$$
\mathcal{S}(\nu) = \left\{ 
\sigma \mbox{ density matrix with eigenvalues } \lambda_j \in \{0\}\cup (4 {\nu},1], \, j=1,...,d
\right\}.
$$
In particular, the space of all density matrices correspond to $\nu=0$ and is denoted $\mathcal{S}$. 
The estimator $\widehat \rho_n ^{(phys)}$ is therefore the physical state which is closest to the (normalised) least square estimator, and whose non-zero eigenvalues are above the threshold $4\nu$.

Before analysing the performance of the estimator, we describe its numerical implementation through the following
simple  iterative algorithm.

Let $ \widetilde \lambda_1\geq \dots \geq   \widetilde \lambda_d$ denote the eigenvalues of 
$\tilde \rho_n^{(ls)}$. Let $\ell=0$, and define $\widehat \lambda_j^{(0)} = \widetilde\lambda_j$ for $j=1,\dots ,d$.

For $\ell = 1,...,d$, do

if $\widehat\lambda_{d-\ell+1}^{(\ell-1)} > {4 \nu}$, STOP;

else, put $\widehat \lambda_{d-\ell+1}^{(\ell)} = 0$ and 
$$
\widehat \lambda_j^{(\ell)} = \widehat \lambda_j^{(\ell-1)} + \frac{1}{d-\ell}\left(1 - \sum_{k=1}^{d-\ell} \widehat \lambda_k^{(\ell-1)} \right), \mbox{ for }j = 1,...,d-\ell;
$$

$\ell= \ell+1$.

The algorithm checks whether the smallest eigenvalue is larger than the noise level $4{\nu}$ and if it is not, then sets its value to 0 and distributes the mass of the erased eigenvalue in such a way that they sum to 1.
This algorithm is similar to that proposed by Smolin et al. \cite{Smolin&Gambetta}, with the important difference that we do not keep all positive eigenvalues but only {\it significantly} positive eigenvalues. Here, significant means larger than the noise threshold of the order of the operator-norm error of the least-squares estimator. Indeed, this noise can give a confidence interval for each eigenvalue.

If the total number of iterations is $\widehat \ell = d-\widehat r$ then the estimator $\widehat \rho_n^{(phys)}$ has rank 
$\widehat r$. Its eigenvalues are equal to 0 for $j > \widehat r$, while, for $j\leq \widehat r$ they are given by 
$$
\widehat \lambda_j^{(phys)} = \widetilde \lambda_j + \frac L2, \qquad \mbox{ where }\qquad
\frac L2 \widehat r = \sum_{ k > \widehat r} \widetilde \lambda_k =  1 - \sum_{k \leq \widehat r} \widetilde \lambda_k .
$$
This implies that $\widehat \rho_n^{(phys)}$ has decreasing eigenvalues and 
$\widehat \lambda_{\widehat r}^{(phys)} \geq 4{\nu}$. The following Theorem shows that $\widehat \rho_n^{(phys)}$ is 
\emph{rank-consistent} and its MSE has the same scaling as that of penalised estimator 
$\widehat{\rho}^{(pen)}_n$.
%
\begin{theorem}\label{thm:recstate}
Assume that the state $\rho$ has rank $r$, i.e. belongs to $\mathcal{S}_{d,r}$. For small $\varepsilon >0$, let $\nu = \nu(\varepsilon)$ be defined as in Theorem~\ref{thm:nonlinear}, and assume that $\lambda_r > 8 \nu(\varepsilon)$. Then, with  $\mathbb{P}_\rho$ probability larger than $1-\varepsilon$ we have $\widehat r = r$
and
$$
\|\widehat \rho_n^{(phys)} - \rho\|_2^2 \leq 48 \cdot r \cdot \nu(\varepsilon)^2.
$$
Moreover, there exists an absolute constant $C>0$ such that
$$
\sup_{\rho \in \mathcal{S}_{d,r}} E_\rho \|\widehat \rho_n^{(phys)} - \rho\|_2^2 
\leq C \frac{rd}N \log\left(\frac{2d}\varepsilon\right).
$$
\end{theorem}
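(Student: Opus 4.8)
The plan is to run the whole argument on the event $\Omega := \{\,\|\widetilde\rho_n^{(ls)}-\rho\|\le 2\nu(\varepsilon)\,\}$, which by Proposition~\ref{prop:trace} carries $\mathbb{P}_\rho$-probability at least $1-\varepsilon$. On $\Omega$, Weyl's perturbation inequality gives $|\widetilde\lambda_j-\lambda_j|\le 2\nu$ for every $j$. Since $\rho$ has rank $r$ with $\lambda_{r+1}=\dots=\lambda_d=0$ and $\lambda_r>8\nu$, this separates the spectrum of $\widetilde\rho_n^{(ls)}$: the top $r$ eigenvalues satisfy $\widetilde\lambda_j\ge\lambda_r-2\nu>6\nu$, while the bottom $d-r$ satisfy $|\widetilde\lambda_j|\le 2\nu$. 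Two further facts, both immediate from trace preservation and Weyl, will be used repeatedly: the full tail sum obeys $|\sum_{j>r}\widetilde\lambda_j|=|\sum_{j\le r}(\widetilde\lambda_j-\lambda_j)|\le 2r\nu$ (using $\mathrm{Tr}\,\widetilde\rho_n^{(ls)}=\mathrm{Tr}\,\rho=1$), and each individual tail eigenvalue lies in $[-2\nu,2\nu]$.

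The crux is proving rank-consistency $\widehat r=r$, i.e.\ that the iteration stops after exactly $d-r$ steps. First I would record the closed form of the accumulated shift after $m$ steps, $S_m=\tfrac{1}{d-m}\sum_{k=d-m+1}^{d}\widetilde\lambda_k$ (the running renormalisation simply zeroes the smallest $m$ eigenvalues and spreads their total mass equally), so the $j$-th active eigenvalue at step $m$ equals $\widetilde\lambda_j+S_m$. The key estimate is $|S_m|\le 2\nu$ for every $m\le d-r$. For $m=d-r$ this is exactly $|S_{d-r}|=|\tfrac1r\sum_{j>r}\widetilde\lambda_j|\le2\nu$. For $m<d-r$, $S_m$ involves only a partial tail sum, and here the decisive observation is the elementary averaging inequality: the average of the $m$ smallest tail eigenvalues is at most the average of all $d-r$ tail eigenvalues, i.e.\ $\tfrac{1}{m}\sum_{k=d-m+1}^d\widetilde\lambda_k\le \tfrac{1}{d-r}\sum_{j>r}\widetilde\lambda_j$. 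Combined with $|\sum_{j>r}\widetilde\lambda_j|\le 2r\nu$ this forces $S_m\le 2\nu$, while $S_m\ge -2\nu$ follows from each tail eigenvalue being $\ge-2\nu$. With this in hand, at every step $\ell\le d-r$ the tested eigenvalue is $\widetilde\lambda_{d-\ell+1}+S_{\ell-1}\le 2\nu+2\nu=4\nu$, so the algorithm does not stop; at step $\ell=d-r+1$ the tested eigenvalue is $\widetilde\lambda_r+S_{d-r}>6\nu-2\nu=4\nu$, so it does stop. Hence $\widehat r=r$ on $\Omega$. I expect this to be the main obstacle: a naive bound on the partial tail sums degrades badly when $r\ll d$, and it is precisely the averaging inequality — exploiting that the smallest tail eigenvalues cannot all be close to $+2\nu$ without violating the smallness of the full tail sum — that rescues the argument.

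Once $\widehat r=r$ is established, the error bound is a short computation. Writing $P$ for the spectral projection onto the top $r$ eigenvectors of $\widetilde\rho_n^{(ls)}$ and $\widetilde\rho_n^{(ls)}(r)=P\widetilde\rho_n^{(ls)}P$ for the corresponding truncation, the estimator is $\widehat\rho_n^{(phys)}=\widetilde\rho_n^{(ls)}(r)+\tfrac{L}{2}P$ with $\tfrac{L}{2}=S_{d-r}$, so $|\tfrac L2|\le 2\nu$. The difference $\widehat\rho_n^{(phys)}-\widetilde\rho_n^{(ls)}$ acts as $\tfrac L2$ on the top-$r$ subspace and as $-\widetilde\lambda_j$ on each orthogonal tail direction, hence its operator norm is $\max(|\tfrac L2|,\max_{j>r}|\widetilde\lambda_j|)\le 2\nu$; together with $\|\widetilde\rho_n^{(ls)}-\rho\|\le 2\nu$ this gives $\|\widehat\rho_n^{(phys)}-\rho\|\le 4\nu$. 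Since $\widehat\rho_n^{(phys)}$ and $\rho$ both have rank $r$, their difference has rank at most $2r$, and the bound $\|M\|_2^2\le \mathrm{rank}(M)\,\|M\|^2$ converts the operator estimate into $\|\widehat\rho_n^{(phys)}-\rho\|_2^2\le 2r(4\nu)^2$; tracking the constants through this step yields the stated factor, giving $\|\widehat\rho_n^{(phys)}-\rho\|_2^2\le 48\,r\,\nu(\varepsilon)^2$ on $\Omega$.

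For the bound in expectation I would integrate the last display over $\Omega$ and $\Omega^c$. On $\Omega$ (probability $\ge 1-\varepsilon$) the bound $48\,r\,\nu(\varepsilon)^2$ applies; on the complement (probability $\le\varepsilon$) the trivial estimate $\|\widehat\rho_n^{(phys)}-\rho\|_2^2\le \mathrm{Tr}((\widehat\rho_n^{(phys)})^2)+\mathrm{Tr}(\rho^2)\le 2$ holds because $\widehat\rho_n^{(phys)}$ is a bona fide density matrix. Hence $\mathbb{E}_\rho\|\widehat\rho_n^{(phys)}-\rho\|_2^2\le 48\,r\,\nu(\varepsilon)^2+2\varepsilon$, and substituting $\nu(\varepsilon)^2=\tfrac{2d}{N}\log(2d/\varepsilon)$ together with a choice of $\varepsilon$ small enough that the $2\varepsilon$ term is absorbed into the leading term produces the claimed rate $C\,\tfrac{rd}{N}\log(2d/\varepsilon)$, uniformly over $\mathcal{S}_{d,r}$.
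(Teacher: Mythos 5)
Your treatment of the event $\Omega$, the rank consistency, and the high-probability Frobenius bound is correct, and it differs from the paper's in two worthwhile ways. For rank consistency the paper argues by contradiction on the terminal value $\widehat r$, using only the closed form of the final eigenvalues $\widetilde\lambda_j+\tfrac{1}{\widehat r}\sum_{k>\widehat r}\widetilde\lambda_k$ together with the trace identity; you instead track the iteration forward and show it runs exactly $d-r$ steps, which obliges you to control the intermediate shifts $S_m$. Your averaging inequality (the partial tail average is dominated by the full tail average, which the trace identity pins down to $[-2\nu,2\nu]$) is exactly the right tool, and it makes the argument more self-contained than the paper's, whose $\widehat r<r$ case is rather loosely written. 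One small blemish: the claimed lower bound $S_m\ge -2\nu$ for intermediate $m$ does not follow from ``each tail eigenvalue is $\ge -2\nu$'' alone — that only gives $S_m\ge -2m\nu/(d-m)$, which is weaker than $-2\nu$ when $m>d/2$. Fortunately you never use it: only the upper bound on $S_{\ell-1}$ for $\ell\le d-r$ and the lower bound on $S_{d-r}$ (which you correctly obtain from the trace) enter the stopping argument. For the error bound you estimate $\|\widehat\rho_n^{(phys)}-\rho\|\le 4\nu$ directly and multiply by the rank $2r$ of the difference, obtaining $32\,r\nu^2$; the paper instead splits into a diagonal part (bounded entrywise, giving $16r\nu^2$) and a basis-rotation part (operator norm times rank), landing at $(4+2\sqrt2)^2r\nu^2\le 48r\nu^2$. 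Your route is shorter and yields a better constant.

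The one genuine gap is in the expectation bound. Your two-event split yields $\mathbb{E}_\rho\|\widehat\rho_n^{(phys)}-\rho\|_2^2\le 48\,r\,\nu(\varepsilon)^2+2\varepsilon$, and the additive $2\varepsilon$ cannot in general be absorbed into $C\,\tfrac{rd}{N}\log(2d/\varepsilon)$: for fixed $\varepsilon$ and large $N$ the claimed right-hand side tends to $0$ while $2\varepsilon$ does not. ``Choosing $\varepsilon$ small enough'' is not available to you, since $\varepsilon$ is the given parameter of the theorem, entering both the hypothesis $\lambda_r>8\nu(\varepsilon)$ and the stated bound. The paper's device is to note that the deviation bound holds at every confidence level $e\in(0,\varepsilon]$ with threshold $x(e)=48\,r\,\nu(e)^2$, invert this to the tail estimate $\mathbb{P}_\rho\bigl(\|\widehat\rho_n^{(phys)}-\rho\|_2^2>x\bigr)\le 2d\exp\bigl(-Nx/(96rd)\bigr)$ for $x\ge x(\varepsilon)$, and integrate $\int_0^\infty\mathbb{P}(\cdot>x)\,dx$; this converts the additive $2\varepsilon$ into $\varepsilon$ times the main term, which is absorbable into the constant. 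You would need this (or an equivalent argument) to close the expectation claim as stated.
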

\emph{Proof.} See Appendix \ref{proof.threshold}.


\section{Lower bounds for rank-constrained estimation}\label{sec.lower.bound}

The goal of this section is to investigate how the convergence rates of our estimators compare with that of an 
``optimal estimator" for the statistical model consisting of all states of rank up to $r$. For this we will derive a lower bound for the maximum risk of any estimator.

In this section $\widehat{\rho}_n$ will be an arbitrary estimator and the true state $\rho$ is assumed to belong to the set 
$\mathcal{S}_{d,r}$ of rank-$r$ states. To quantify the overall performance of $\widehat{\rho}_n$, we define the maximum risk  
$$
R_{max}(\widehat{\rho}_n; r) = \sup_{\rho\in \mathcal{S}_{d,r}}  \mathbb{E} _\rho \|\widehat{\rho}_n -\rho\|_2^2. 
$$
In view of the previous upper bounds, we expect its asymptotic behaviour (in terms of the total number of measurements, for a large number of repetitions $n$) to be 
$$
R_{max}(\widehat{\rho}_n;r) = \frac{rd}{N}\cdot  O(1),\qquad N= n\cdot 3^k.  
$$ 
Taking this into account we define the (appropriately rescaled) minimax risk as 
\begin{equation}\label{eq.def.minimax.risk}
R_{minmax}(r,k;n):= \inf_{\widehat{\rho}_n}\,  N R_{max}(\widehat{\rho}_n;r),
\end{equation}
which describes the behaviour of the best estimator at the hardest to estimate state. The next theorem provides an asymptotic lower bound for the minimax risk. It shows that the maximum MSE of any estimator is al least of the order 
of $r(d-r)/N$, which for low rank states scales as $  \# {\rm parameters} /  \# {\rm samples}$, which up to logarithmic factors is the same as the upper bounds derived in Theorems \ref{thm:nonlinear}, Corollary \ref{cor:rk,state}, and Theorem \ref{thm:recstate}. 

\begin{theorem}\label{th.minmax}
The following lower bound holds for the asymptotic minimax risk holds
$$
\underset{n\to \infty}{\lim\inf} \,R_{minmax}(r,k;n)\geq 2r(d-r) .
$$
\end{theorem}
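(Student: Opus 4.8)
The starting point is the local asymptotic minimax inequality quoted in the introduction,
$$
\liminf_{n\to\infty} N\sup_{\rho\in\mathcal S_{d,r}}\mathbb E_\rho\|\widehat\rho_n-\rho\|_2^2\ \geq\ \sup_{\rho\in\mathcal S_{d,r}}\mathrm{Tr}\big(I(\rho)^{-1}G(\rho)\big),
$$
which is the H\'ajek--Le Cam bound (van der Vaart) for the quadratic loss $\|\cdot\|_2^2$ applied to the smooth rank-$r$ submodel; here $I(\rho)$ is the per-measurement Fisher information averaged over the $3^k$ Pauli settings (so that the total information over $N$ samples is $N\,I(\rho)$), and $G(\rho)$ is the Gram matrix of the loss, i.e. the Frobenius metric on the tangent space of $\mathcal S_{d,r}$ at $\rho$. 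Granting this reduction, the whole theorem amounts to exhibiting a \emph{single} state $\rho\in\mathcal S_{d,r}$ with $\mathrm{Tr}(I(\rho)^{-1}G(\rho))\geq 2r(d-r)$.

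The plan is to fix $\rho=\tfrac1r\Pi$, the maximally mixed state on an $r$-dimensional subspace with orthonormal basis $|1\rangle,\dots,|r\rangle$ and kernel spanned by $|r+1\rangle,\dots,|d\rangle$. The tangent space of $\mathcal S_{d,r}$ at this point splits Frobenius-orthogonally into the within-support traceless Hermitian directions (dimension $r^2-1$) and the $2r(d-r)$ support--kernel directions $S_{ia}=\tfrac1{\sqrt2}(|i\rangle\langle a|+|a\rangle\langle i|)$, $A_{ia}=\tfrac{i}{\sqrt2}(|i\rangle\langle a|-|a\rangle\langle i|)$ with $i\le r<a$. In an orthonormal tangent basis $G(\rho)$ is the identity, so $\mathrm{Tr}(I^{-1}G)=\mathrm{Tr}(I^{-1})$. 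Let $B$ be the principal block of $I(\rho)$ indexed by the support--kernel directions. By the Schur-complement inequality $(I^{-1})_{\mathrm{off}}\succeq B^{-1}$, together with positivity of the remaining diagonal of $I^{-1}$, we obtain $\mathrm{Tr}(I^{-1})\geq\mathrm{Tr}(B^{-1})$; and if one establishes the operator bound $B\preceq\mathbf 1$, then $B^{-1}\succeq\mathbf 1$ and $\mathrm{Tr}(B^{-1})\geq\dim B=2r(d-r)$, which is exactly the claim. Thus the entire argument collapses to the single operator inequality $B\preceq\mathbf 1$ on the support--kernel block.

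The heart of the proof is the evaluation of $B$. For a Hermitian direction $T$ in the support--kernel span the averaged classical Fisher information is
$$
I(\rho)[T]=\frac1{3^k}\sum_{\bf s}\sum_{\bf o}\frac{\big(\mathrm{Tr}(T\,P^{\bf s}_{\bf o})\big)^2}{\mathrm{Tr}(\rho\,P^{\bf s}_{\bf o})},
$$
with $P^{\bf s}_{\bf o}$ the rank-one Pauli projectors of \eqref{eq.basis.d}. I would compute the numerators $\mathrm{Tr}(S_{ia}P^{\bf s}_{\bf o})=\sqrt2\,\mathrm{Re}\big(\overline{\langle i|e^{\bf s}_{\bf o}\rangle}\langle a|e^{\bf s}_{\bf o}\rangle\big)$ (and the imaginary analogue for $A_{ia}$) together with $\mathrm{Tr}(\rho P^{\bf s}_{\bf o})=\tfrac1r\sum_{j\le r}|\langle j|e^{\bf s}_{\bf o}\rangle|^2$, exploiting that the Pauli eigenvectors are tensor products and that the three single-qubit Pauli bases form mutually unbiased bases, so $\{P^{\bf s}_{\bf o}\}$ is a tight frame with explicit low-order moments. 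Summing over outcomes and settings yields $B$ in closed form, from which one reads off $\|B\|\le 1$. The single-qubit check ($r=1,d=2,\rho=|0\rangle\langle0|$) already gives $B=\mathrm{diag}(2/3,2/3)\preceq\mathbf 1$ and $\mathrm{Tr}(B^{-1})=3\ge 2=2r(d-r)$, confirming both the mechanism and the normalisation.

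The main obstacle is precisely this Fisher-information computation at a rank-deficient state, and two points need care. First, regularity: at $\rho=\tfrac1r\Pi$ some outcome probabilities vanish, so $\rho$ lies on the boundary of the model, and the local asymptotic minimax argument must be restricted to the smooth submodel whose scores remain square-integrable --- which is exactly the tangent space of $\mathcal S_{d,r}$. The support--kernel directions are admissible because $\mathrm{Tr}(T P^{\bf s}_{\bf o})$ vanishes whenever $\mathrm{Tr}(\rho P^{\bf s}_{\bf o})$ does (both force $\langle e^{\bf s}_{\bf o}|\Pi=0$), so the ratios stay finite. Second, getting the constant right: the crude quantum Cram\'er--Rao bound $I(\rho)\preceq F_Q(\rho)$ is too weak, since the SLD (quantum) Fisher information of a support--kernel direction is $2/\lambda_i=2r$, a factor $2r$ too large, and yields only $\mathrm{Tr}(B^{-1})\ge d-r$; one must instead use the honest classical Pauli information, which is $O(1)$ in these directions, and control the off-diagonal cross-terms $I(\rho)[S_{ia},S_{jb}]$ --- exploiting the symmetry of the maximally mixed state or a direct diagonal-dominance estimate --- to secure $\|B\|\le 1$. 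This explicit evaluation and bounding of $B$ is the technical core of the argument.
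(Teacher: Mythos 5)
Your reduction and your choice of least favourable directions (the state $\rho_0=\tfrac1r\Pi$ and the $2r(d-r)$ support--kernel tangent directions of the Grassmannian orbit) coincide with the paper's, but from that point on you take a genuinely different and, as it stands, incomplete route. The paper never computes the Fisher information of the actual Pauli design at any single state: it bounds the maximum risk by a Bayes risk over the Haar-uniform prior on $\{U\rho_0U^*\}$, applies the van Trees inequality, and then uses operator convexity of $t\mapsto t^{-1}$ so that only the \emph{prior-averaged} Fisher information is needed; by unitary covariance this average equals the average over Haar-random measurement bases at the fixed state $\rho_0$, which is computable in closed form by Weingarten calculus and comes out to exactly $2\mathbf{1}_{2r(d-r)}$ in the paper's coordinates. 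Your plan instead requires the \emph{pointwise} quantity: the support--kernel block $B$ of the Pauli-design Fisher information at the standard-basis-aligned $\rho_0$, together with the operator bound $B\preceq\mathbf{1}$. This is precisely the computation the paper's symmetrisation is designed to avoid, and you have only asserted it (the one-qubit check aside). It is not obviously true: the Haar average of the single-basis information at $\rho_0$ is exactly $\mathbf{1}$ in your orthonormal coordinates, and the Pauli average is a different finite average whose largest eigenvalue has no a priori reason to stay below $1$. What one can show cheaply is the trace identity $\mathrm{Tr}(B)=\tfrac{2r}{3^k}\sum_{\bf s}\sum_{{\bf o}:p({\bf o}|{\bf s})>0}\langle e^{\bf s}_{\bf o}|\Pi^\perp|e^{\bf s}_{\bf o}\rangle\leq 2r(d-r)$, and since $B\succ 0$ Cauchy--Schwarz gives $\mathrm{Tr}(B^{-1})\geq(2r(d-r))^2/\mathrm{Tr}(B)\geq 2r(d-r)$; this would rescue the numerical target without ever proving $B\preceq\mathbf{1}$, but it is not the argument you propose.

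The second, more structural gap is the justification of the pointwise local asymptotic minimax bound at $\rho_0$ itself. You correctly note that $\mathrm{Tr}(TP^{\bf s}_{\bf o})$ vanishes whenever $\mathrm{Tr}(\rho_0P^{\bf s}_{\bf o})$ does, so the scores are finite; but this does not establish differentiability in quadratic mean. Along the Grassmannian directions the ``forbidden'' outcome probabilities switch on \emph{quadratically}, so $\sqrt{p_\varepsilon({\bf o}|{\bf s})}\sim|\varepsilon|\,c$ is not differentiable at $\varepsilon=0$, DQM fails, and the H\'ajek--Le Cam theorem in the form you quote does not apply at this point. Worse, at such boundary points the experiment is typically \emph{more} informative than the naive Fisher information indicates (a single count of a zero-probability outcome excludes a neighbourhood of $\rho_0$), so a lower bound computed from that Fisher information at exactly $\rho_0$ is not automatically valid. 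The paper's van Trees argument is immune to this because the degenerate states form a Haar-null subset of the orbit and only integrated regularity is needed. To repair your approach you would either have to move to a generic point $U\rho_0U^*$ of the orbit (where DQM holds but the Pauli Fisher information is no longer explicitly computable, pushing you back towards an averaging argument) or supply a boundary-adapted minimax lower bound; either way the proof as proposed is not complete.
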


\emph{Proof.}  
The minimax risk captures the worst asymptotic behaviour of the rescaled risk, over \emph{all} states of rank $r$.
In order to bound the risk from \emph{below}, we construct a (lower dimensional) subfamily of states $\mathcal{R}_{d,r}\subset \mathcal{S}_{d,r}$ such that the maximum risk  for this subfamily provides the lower bound. Let 
\begin{equation}\label{eq.rho0}
\rho_0 := {\rm Diag}\left(\frac{1}{r},\dots ,\frac{1}{r}, 0,\dots ,0\right)
\end{equation}
be a diagonal state with respect to the standard basis, and define $\mathcal{R}_{d,r}$ to be the set of matrices 
obtained by rotating $\rho_0$ with an arbitrary unitary $U$, i.e. $\mathcal{R}_{d,r}:= \{\rho:= U \rho_0 U^*\, |\, U \mbox{ unitary}\}$. This is a smooth, compact manifold of dimension $2r(d-r)$ known as a (complex) Grasmannian \cite{}.
%
At each point $\rho = U\rho_0 U^*$ we consider the ONB ${\bf B}_U$ obtained by rotating the standard ONB ${\bf B}$  by the unitary $U$. With respect to this basis, we consider first the parametrisation of an \emph{arbitrary} density matrix 
$\rho^\prime$ by its matrix elements, more precisely by the diagonal, real and imaginary parts of the off-diagonal matrix elements, such that $\rho^\prime\equiv\rho_\theta$ with 
\begin{eqnarray}
\theta&=& (\theta^{(d)}, \theta^{(r)}, \theta^{(i)}) \nonumber\\
&:=& 
(\rho_{11},\dots ,\rho_{dd}; {\rm Re} \rho_{1,2},\dots  ,{\rm Re} \rho_{d-1,d};{\rm Im} \rho_{1,2},\dots,  {\rm Im} \rho_{d-1,d} ).
\label{eq.def.theta}
\end{eqnarray}
The Frobenius distance is given by
$$
\| \rho_{\theta_1} - \rho_{\theta_2} \|^2_2  = 
   \|\theta_1^d- \theta_2^{d}\|^2 +  
2\|\theta_1^r- \theta_2^{r}\|^2+ 
2\|\theta_1^i- \theta_2^{i}\|^2
= (\theta_1- \theta_2)^T G (\theta_1- \theta_2),
$$
where $G$ is the constant diagonal weight matrix 
$
G= {\rm Diag}(\mathbf{1}_{d},\, 2\cdot \mathbf{1}_{d(d-1)/2}, \, 2\cdot \mathbf{1}_{d(d-1)/2}). 
$ 
However, this parametrisation does not take into account the prior information about the rank of the true state, and moreover, our key argument involves the even smaller family $\mathcal{R}_{d,r}$ of states. We will now focus on providing a \emph{local} parametrisation of $\mathcal{R}_{d,r}$ around $\rho = U\rho_0 U^*$. 
With respect to the basis ${\bf B}_U$, a state $\rho^\prime\in \mathcal{R}_{d,r}$ in the neighbourhood of $\rho$ has the form
\begin{equation}\label{eq.decom.rho}
\rho^\prime = 
\rho+ \Delta_{off} + \delta = 
\left(
\begin{array}{ccc}
\frac{1}{r} I_r && 0\\
&&\\
0 &&0
\end{array}
\right)
+ 
 \left(
\begin{array}{ccc}
0 && \Delta \\
&&\\
\Delta^\dagger  && 0
\end{array}
\right)
+
\left(
\begin{array}{ccc}
O(\|\Delta\|^2) && 0\\
&&\\
0  && O(\|\Delta\|^2)
\end{array}
\right).
\end{equation}
where $\Delta$ is a matrix of free (complex) parameters, and the two $O(\|\Delta\|^2)$ blocks are $r\times r$ 
and respectively $(d-r)\times (d-r)$ matrices whose elements scale quadratically in 
$\Delta$ near $\Delta=0$.
The intuition behind this decomposition is that a small rotation of $\rho$ produces off-diagonal blocks of the size of the ``rotation angles" while the change in the diagonal blocks are only quadratic in those angles. Since we are interested in the asymptotic behaviour of estimators, the local approach is justified, and the leading contribution to the Frobenius distance comes from the off-diagonal blocks. More precisely, if $\rho_1,\rho_2\in \mathcal{R}_{d,r}$ are in the neighbourhood of $\rho$ then
$$
\|\rho_1 -\rho_2\|^2 = 2\| \tilde{\theta}_1^r- \tilde{\theta}_2^{r}\|^2+ 
2\|\tilde{\theta}_1^i- \tilde{\theta}_2^{i}\|^2 + O(\| \tilde{\theta}_1\|^4, \| \tilde{\theta}_2\|^4), 
$$
where $\tilde{\theta}^r, \tilde{\theta}^i$ are the real and imaginary parts of the off-diagonal elements contained in 
the block $\Delta$, i.e. for $i\leq r<j$. 
 Locally, the manifold $\mathcal{R}_{d,r}$ can be parametrised by $\tilde{\theta}:= (\tilde{\theta}^r, \tilde{\theta}^i)$.


Since $\mathcal{R}_{d,r}\subset\mathcal{S}_{d,r}$ the maximum risk for the model consisting of rank-$r$ states is bounded from below by that of the (smaller) rotation model $\mathcal{R}_{d,r}$
\begin{eqnarray}\label{r1}
\inf_{\widehat{\rho}_n}\,\sup_{\rho\in \mathcal{S}_{d,r}}  \mathbb{E} _\rho \|\widehat{\rho}_n -\rho\|_2^2 & \geq & \inf_{\widehat{\rho}_n}\,\sup_{\rho\in \mathcal{R}_{d,r}}  \mathbb{E} _\rho \|\widehat{\rho}_n -\rho\|_2^2 
\end{eqnarray}
Let $\pi$ be the ``uniform" distribution over $\mathcal{R}_{d,r}$. To draw a sample from this distribution, one can choose a random unitary $U$ from the Haar measure over unitaries, and defines $\rho := U\rho_0U^*$. Then the maximum risk is bounded from below by the Bayes risk
\begin{equation}\label{r2}
\sup_{\rho\in \mathcal{R}_{d,r}}  \mathbb{E} _\rho \|\widehat{\rho}_n -\rho\|_2^2\geq 
\int_{\mathcal{R}_{d,r}}   \mathbb{E} _\rho \|\widehat{\rho}_n -\rho\|_2^2\, \pi(d\rho)
\end{equation}
By applying the van Trees inequality in \cite{Gill&Levit} (see also \cite{Gill&Massar}) we get that
\begin{equation} \label{r3}
\int_{\mathcal{R}_{d,r}}   \mathbb{E} _\rho \|\widehat{\rho}_n -\rho\|_2^2\, \pi(d\rho)
\geq
\frac 1n \int_{\mathcal{R}_{d,r}} {\rm Tr}( \tilde{G} (\rho)\tilde{I}^{-1}(\rho) ) \pi(d \rho) - \frac \alpha{n^2},
\end{equation}
where $\alpha >0$ is a constant which 
does not depend on $n$. Here, $\tilde{I}(\rho)$ is the (classical) Fisher information matrix of the the data obtained by performing one measurement for each setting, and $\tilde{G}(\rho)$ is the weight matrix corresponding to the quadratic approximation of the Frobenius distance around $\rho$. Both matrices are of dimensions $2r(d-r)= {\rm dim}(\mathcal{R}_{d,r})$, and depend on the chosen parametrisation, but the trace is independent of it.
Inserting (\ref{r2}) and (\ref{r3}) into (\ref{r1}), we get
\begin{equation*}
\inf_{\widehat{\rho}_n}\,\sup_{\rho\in \mathcal{S}_{d,r}} N  \mathbb{E} _\rho \|\widehat{\rho}_n -\rho\|_2^2  \geq  
\frac{N}{n}\int_{\mathcal{R}_{d,r}} {\rm Tr}(  \tilde{G}(\rho)^{1/2}  \tilde{I}(\rho)^{-1}  \tilde{G}(\rho)^{1/2}) \pi(d \rho)  - \frac{\alpha N}{n^2}. 
\end{equation*}
Since $t\mapsto t^{-1}$ is an operator convex function we have
$$
\int  \,  \tilde{G}(\rho)^{1/2}  \tilde{I}(\rho)^{-1}  \tilde{G}(\rho)^{1/2}   \pi(d \rho)  \geq \left( \int   \tilde{G}(\rho)^{-1/2}   \tilde{I} (\rho) \tilde{G}(\rho)^{-1/2}  \pi(d\rho) \right)^{-1}
$$
and by taking the limit $n\to \infty$ we obtain the asymptotic minimax lower bound
\begin{equation}\label{eq.average.risk}
\underset{n\to \infty}{\lim\inf} \,R_{minmax} (r,k)\geq 3^k {\rm Tr} \left( \left( \int   \tilde{G}(\rho)^{-1/2}   \tilde{I} (\rho)  \tilde{G}(\rho)^{-1/2}  \pi(d\rho) \right)^{-1}\right).
\end{equation}
where $R_{minmax} (r,k;n)$ is the minimax risk defined in equation \eqref{eq.def.minimax.risk}


At this point we choose a convenient \emph{local} parametrisation around an arbitrary state $\rho\in \mathcal{R}_{d,r}$. As discussed in the beginning of the proof we showed that for this we can use the real and imaginary parts $\tilde\theta= (\tilde\theta^r,\tilde\theta^i)$ of the off-diagonal block $\Delta$, and that the corresponding weight matrix is $ \tilde{G}(\rho)= 2\mathbf{1}_{2r(d-r)}$. 
The lower bound \eqref{eq.average.risk} becomes 
$$
R_{minmax} (r,k)\geq  3^k\cdot 2\cdot {\rm Tr} \left( \left( \int_{\mathcal{R}_{d,r}}   \tilde{I} (\rho) \pi(d \rho) \right)^{-1}  \right).
$$
Another consequence of \eqref{eq.decom.rho} is that the Fisher information matrix $ \tilde{I} (\rho)$ is equal to the corresponding 
block of the Fisher information matrix $I$ of the full ($d^2$-dimensional) unconstrained model with parametrisation $\theta$ defined in \eqref{eq.def.theta}.
We will now compute the average  over states of the Fisher information with respect to the 
\emph{Pauli bases measurements}, by showing that it is equal to the average  Fisher information at $\rho_0$, for the \emph{random basis measurement}. 
As the different settings are measured independently, the Fisher information $ \tilde{I}(\rho)$ is (and similarly for $I$) 
$$
 \tilde{I}(\rho) =\sum_{{\bf s}\in \mathcal{S}_k }  \tilde{I}(\rho|{\bf s})
$$
where $ \tilde{I}(\rho|{\bf s})$ is the Fisher information corresponding to the von Neumann measurement with respect to the ONB defined by setting ${\bf s}$. More generally, with ${\bf B}_U$ as defined above, we denote by $ \tilde{I}(\rho|{\bf B}_U)$ the Fisher information corresponding to this basis. 
Due to the rotation symmetry, we have
$$
 \tilde{I}(U\rho U^*  \tilde{|}{\bf B}_U) = \tilde{I}(\rho|{\bf B})
$$
so 
\begin{eqnarray*}
\int  \pi(d\rho)   \tilde{I} (\rho) &=& 
\sum _{{\bf s}} \int \pi(d\rho)  \tilde{I}(\rho|{\bf s}) =  
\sum _{{\bf s}} \int \mu^d(dU)  \tilde{I}(U \rho_0 U^* | {\bf s}) \\
\\&=& 3^k  \int  \mu(dU )  \tilde{I}(\rho_0|{\bf B}_U) = 3^k \bar{\tilde{I}}.
\end{eqnarray*}
where  $\mu^d (dU)$ is the unique Haar measure on the unitary group on $\mathbb{C}^d$.

The average Fisher information matrix $\bar{I}$ (of the full, unconstrained model) is computed in section \ref{sec.averageFisher}  where we show that the block corresponding to $\tilde{\theta}$ parameters has average $\bar{\tilde I} = 2 \mathbf{1}_{2r(d-r)}$ such that the lower bound is
%
$$
R_{minmax} (r,k)\geq  2r(d-r).  
$$
%
%
%

\section{Numerical results}\label{sec.simulations}

In this section we present the results of a simulation study which analyses the performance of the proposed estimation methods.  The penalised and physical estimators discussed in the previous sections use a theoretical penalisation and respective threshold rates proportional to $\nu^2$. However in practice we found that the performance of the estimators can be further improved when the rates are adjusted by multiplying with an appropriate constant 
$c$ -- whose choice is informed by the data -- from a grid over a small interval which was chosen to be $[0,3]$. The last two estimators are such versions of the theoretical ones with constant $c$ chosen by using cross-validation methods which are explained in detail in section \ref{sec.procedures}. We will compare the following 5 estimators described below.
\begin{itemize}
 \item[1.] the least squares estimator $\widehat{\rho}_n^{(ls)}$ defined in \eqref{eq.least.squares}.
 \item[2.] the oracle ``estimator" $\widehat{\rho}_n^{(oracle)}$ defined below. This is strictly speaking \emph{not} an estimator since it requires the knowledge of the state $\rho$ itself, and can be computed only in simulation studies. However, the oracle is a useful benchmark for evaluating the performance of the other estimators.   
 \item[3.] the cross-validated projection estimator $\widehat{\rho}_n^{(cv)}$. Here we try to find the optimal truncation rank of the least squares estimator, by using the cross-validation method. 
 \item[4.]the cross-validated penalised estimator $\widehat{\rho}_n^{(pen-cv)}$. This is a modification of the penalised estimator $\widehat{\rho}_n^{(pen)}$ defined in \eqref{def:rank-pen}, where the value of the penalisation constant is adjusted by cross-validation. 
 \item[5.] the cross-validated physical threshold estimator $\rho_n^{(phys-cv)}$. This is a modification of the physical estimator $\widehat{\rho}_n^{(phys)}$ defined in \eqref{def:state}, where the value of the threshold constant is adjusted by cross-validation. 
\end{itemize}
We explore the estimators' behaviour by simulating datasets from states with different ranks, and with different number of measurement repetitions per setting. The methodology is described in detail below.


\subsection{Generation of random states and simulation of datasets}\label{sec.procedures}
In order to generate a density matrix of rank $r$, we first create a rank $r$ upper triangular matrix $T$ in which 
\begin{itemize}
 \item[i)] the off-diagonal elements of the first $r$ rows are random complex numbers,
 \item[ii)] the diagonal elements $T_{22}, \dots , T_{rr} $ are real, positive random numbers,
 \item[iii)] all elements of the rows $r+1,\dots, d$ are zero.
 \end{itemize}
The matrix $T$ is completed by setting $T_{11}$ such that $T_{11}\geq 0$, and $\|T\|_2^2=1$. If these conditions cannot be satisfied we repeat the procedure by generating a new set of matrix elements for $T$. When successful, we set 
$\rho := T^{*}T$ which by construction is a density matrix of rank $r$. We note that it is not our purpose to generate matrices from a particular ``uniform" ensemble, but merely to have a state with reasonably random eigenvectors, and 
whose $r$ eigenvalues are not significantly smaller than $1/r$. Following this procedure we have generated 4 states of 4 ions ($d= 2^4$) with ranks $1,2,6,10$. The rank $6$ state for instance, has non-zero eigenvalues $(0.47, 0.19, 0.12, 0.11, 0.07, 0.04)$.

For each state, we have then simulated a number of 100 independent datasets with a given number of repetitions chosen from the range $20, 100, 500$, and $2500$. In this way we can study the dependence of the MSE of each estimator on state (or rank) and number of repetitions.

%

\subsection{Computation of estimators}
We conducted the following simulation study for all the possible combinations between the states and the total number of cycles (i.e. $4 \times 4 = 16 $ different scenarios). Below, we denote by $r$ the rank of the ``true" state $\rho$, from which the data has been generated. The procedure has the following steps:


\begin{enumerate}

\item For a given number of repetitions $n$, we simulate $5$ independent datasets $D_1, \dots, D_5$, each with $n/5$ repetitions. By simply adding the number of counts for each setting and outcome, we obtain a dataset $D$ of 
$n$ repetitions. However, as we will see below, having $5$ separate ``smaller" datasets is important for the purpose of applying cross-validation. Note that such a procedure can be easily implemented in an experimental setting.
%
%

 \item We compute the \emph{least square estimator} $\widehat{\rho}_n^{(ls)}$ based on the full dataset $D$ with total number of cycles $n$.

 \item We compute the \emph{oracle ``estimator"} as follows:
 \begin{enumerate}
\item 
we compute the spectral decomposition \eqref{eq.spectral.dec.rho.ls} of $\widehat{\rho}_n^{(ls)}$, with the eigenvalues $\widehat{\lambda}_i$ arranged in decreasing order of their absolute values. For each rank $1\leq \kappa \leq d$ we define the truncated (least squares) matrix
$$
\widehat{\rho}_n(\kappa) = \sum_{i=1}^\kappa \widehat \lambda_i | \hat{\psi}_i\rangle \langle \hat{\psi}_i |.
$$
 
\item 
we then evaluate the norm-two (Frobenius) distance $e(\kappa) := \|\rho - \widehat{\rho}_n(\kappa) \|_2^2$ and define the oracle estimator 
as the truncated estimator with minimal norm two error 
   $$
   \widehat{\rho}^{(oracle)}_n =  \widehat{\rho}_n(\kappa_0), \qquad \kappa_0 =\arg\min_\kappa e(\kappa).
    $$
    Note that the oracle estimator relies on the knowledge of the true state $\rho$ which is not available in a real data set-up. It is nevertheless useful as a benchmark for judging the performance of other estimators in simulation studies. At the next point we define the cross-validation estimator which tries to find the ``optimal" rank $\kappa_0$ by replacing the unknown state $\rho$ with the least squares estimator computed on a separate batch of data.  
 \end{enumerate}

\item We compute the \emph{cross validation estimator} as follows.
\begin{enumerate}
\item For each $ j\in \{1,\dots ,5 \}$ we compute the following estimators. While holding the batch $D_j$ out, we compute the least squares estimator $\widehat{\rho}_{n;-j}^{(ls)}$ for the dataset consisting of joining the remaining 4 batches together. Similarly to the point above, we define the rank $\kappa$ truncation of this estimator by 
$\widehat{\rho}_{n;-j}(\kappa)$. We also compute the least squares estimator for the remaining batch $j$, denoted by 
$\widehat{\rho}_{n;j}^{(ls)}$.


\item For each rank $\kappa$ we evaluate the ``empirical discrepancy"
$$
\mbox{CV}(\kappa) = \frac{1}{5}\sum_{i=1}^{5}\left\|\widehat{\rho}_{n;-j}(\kappa) - \widehat{\rho}_{n;j}^{(ls)}\right\|_2^2.
$$
Since $\widehat{\rho}_{n;-j}(\kappa)$ and $\widehat{\rho}_{n;j}^{(ls)}$ are independent, and the least squares estimator is unbiased $\mathbb{E}(\widehat{\rho}_{n;j}^{(ls)}) = \rho$, the expected value of $\mbox{CV}(k)$ is
\begin{eqnarray*}
\mathbb{E}\left[\mbox{CV}(\kappa)\right] & = &
\mathbb{E}_{-1} \left[  {\rm Tr} \left(\widehat{\rho}_{n;-1}(\kappa)^2\right)\right] - 
2 {\rm Tr} \left(  \mathbb{E}_{-1}\left[   \widehat{\rho}_{n;-1}(\kappa) \right]\cdot \mathbb{E}_{1}\left[  \widehat{\rho}_{n;1}^{(ls)} \right]     \right) +
\mathbb{E}_1\left[  {\rm Tr} \left(\left(\widehat{\rho}^{(ls)}_{n;1}\right)^2\right)\right] \\
&=& 
\mathbb{E}_{-1} \left[  {\rm Tr} \left(\widehat{\rho}_{n;-1}(\kappa)^2\right)\right] - 
2 {\rm Tr} \left(  \mathbb{E}_{-1}\left[   \widehat{\rho}_{n;-1}(\kappa) \right] \rho \right) + {\rm Tr}(\rho^2) + C\\
&=& \mathbb{E}_{-1} \left[  \| \widehat{\rho}_{n;-1}(\kappa) - \rho \|^2 \right] +C
\end{eqnarray*}
where we denoted $\mathbb{E}_{-1}$ and $\mathbb{E}_{1}$ the expectation over all batches except the first, and respectively over the first batch. Therefore, the average of $\mbox{CV}(\kappa)$ is equal to the mean square error of the truncated estimator $\widehat{\rho}_{n;-1}(\kappa)$, up to a constant $C$ which is independent of $\kappa$.

\item Based on the above observation we use the the cross-validation method as a proxy for the oracle estimator. Concretely, we minimize \mbox{CV}$(\kappa)$ with respect to $\kappa$
$$
\hat{\kappa}_{cv} = \underset{k}{\arg\min} \mbox{ CV}(\kappa),
$$
and define the cross-validation estimator as the truncation to rank $\hat{\kappa}_{cv} $ 
of the \emph{full data} least squares estimator 
$\widehat{\rho}_n^{(cv)} := \widehat{\rho}_n(\hat{\kappa}_{cv})$.
\end{enumerate}

\item We compute the \emph{cross-validated rank-penalised estimator} as follows.
\begin{enumerate}
  \item Let $c$ be a penalisation constant chosen from a suitable set of discrete values in the interval $[0,3]$. Similarly to the cross-validation procedure, we hold out batch $j$, and we compute the rank-penalised estimator \eqref{def:rank-pen}, with penalty constant $c\nu^2$ for $j=1,\ldots, 5$. We denote these estimators by $\widehat{\rho}_{n;-j}^{(pen)} (c)$. We will also need the least square estimator $\rho^{(ls)}_{n;j}$ 
  for batch $j$ computed above.
  \item For each value of $c$ we evaluate the empirical discrepancy
  $$
  \mbox{CV}(c) = \frac{1}{5}\sum_{i=1}^{5}\left\|\widehat{\rho}_{n;-j}^{(pen)} (c) - \rho^{(ls)}_{n;j}\right\|_2^2
  $$
 and minimize $\mbox{CV}(c)$ with respect to the constant $c$
 $$
 \widehat{c} = \underset{c}{\arg\min} \mbox{ CV}(c).
 $$
Finally we compute the cross-validated rank penalised estimator $\widehat{\rho}^{(rk-cv)}_n$ which is defined as in \eqref{def:rank-pen}, with constant $\widehat{c}\nu^2$, on the whole dataset $D$. 
\end{enumerate}

\item We compute the \emph{physical estimator} as follows.
\begin{enumerate}
  \item As above we choose a constant $c$ from a grid over the interval $ [0,3]$. We hold out batch $j$, and we compute the physical threshold estimator \eqref{def:state}, using the algorithm below this equation, with threshold $c\cdot 4 \nu$. 
  We denote the resulting estimators by $\widehat{\rho}^{(phys)}_{n;-j}(c)$, for $j=1,\ldots, 5$.
  \item For each value of $c$ we evaluate the empirical discrepancy
  $$
  \mbox{CV}(c) = \frac{1}{5}\sum_{i=1}^{5}\left\| \widehat{\rho}^{(phys)}_{n;-j}(c) -\widehat{\rho}^{(ls)}_{n;j}\right\|_2^2.
  $$
We then minimize $\mbox{CV}(c)$ with respect to the constant $c$. 
$$
\hat{c} = \underset{c}{\arg\min} \mbox{ CV}(c)
$$
Finally we compute the cross-validated physical estimator $\widehat{\rho}^{(phys-cv)}_n$ which is defined as in \eqref{def:state}, with constant $\hat{c}\cdot 4\nu$.
\end{enumerate}

\end{enumerate}

\subsection{Simulation results}\label{sec.simulation.results}
We collect here the results of the simulation study described in the previous section. As a figure of merit we focus on the mean square error $\mathbb{E}\left((\|\widehat{\rho}_n- \rho\|_2^2\right)$ of each estimator, which is estimated by averaging  the square errors over the 100 independent repetitions of the procedure. We are also interested in how the different methods perform relative to each other, and whether the selected rank is consistent, i.e. it concentrates on the rank of the true state for large number of repetitions.  

\begin{figure}[H]
\begin{center}
\begin{minipage}[b]{0.45\linewidth}
\centering
\includegraphics[width=.9\linewidth]{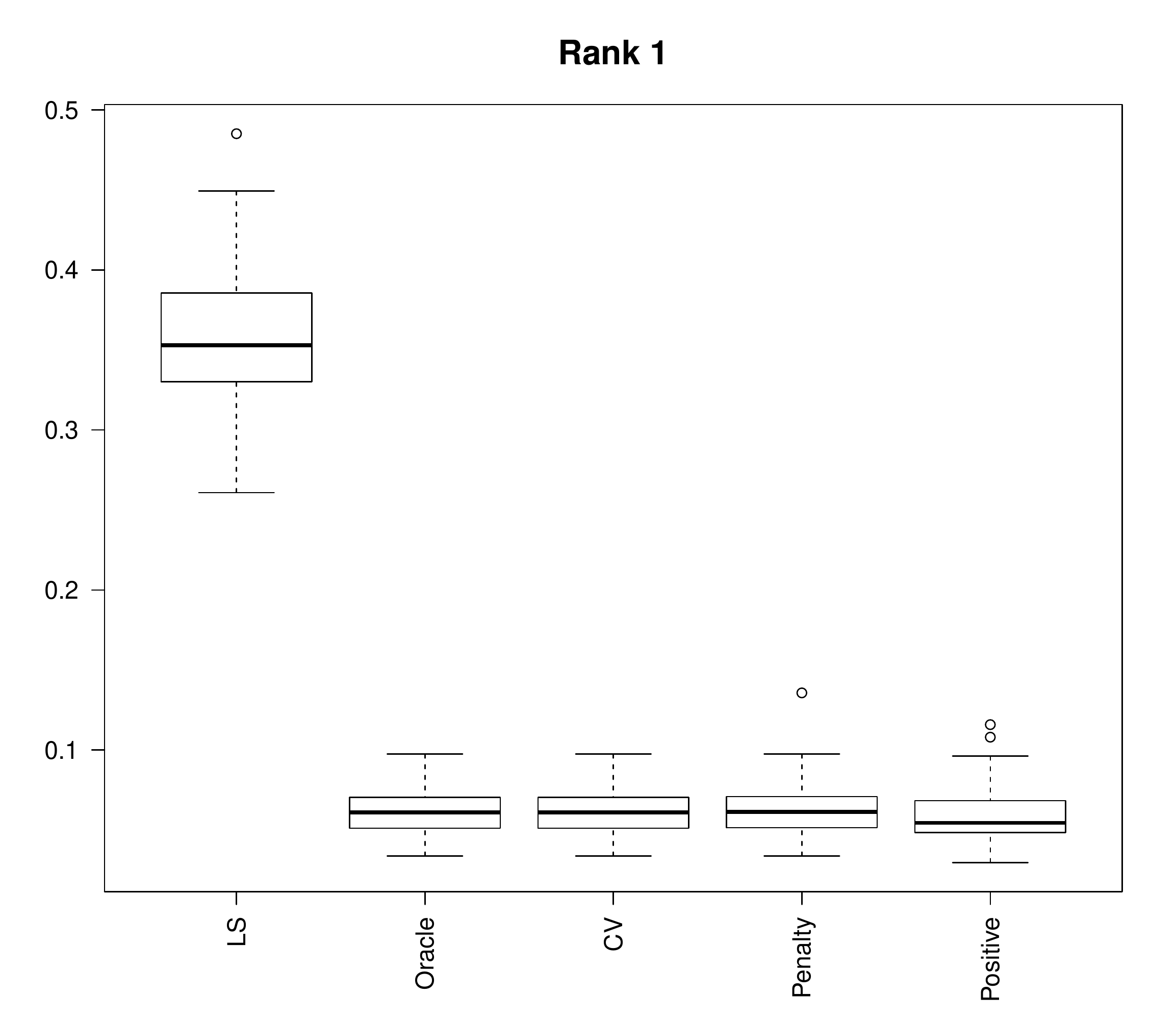}\\
a) MSEs for a rank 1 state  
 \end{minipage}
 \quad
 \begin{minipage}[b]{0.45\linewidth}
 \centering
\includegraphics[width=.9\linewidth]{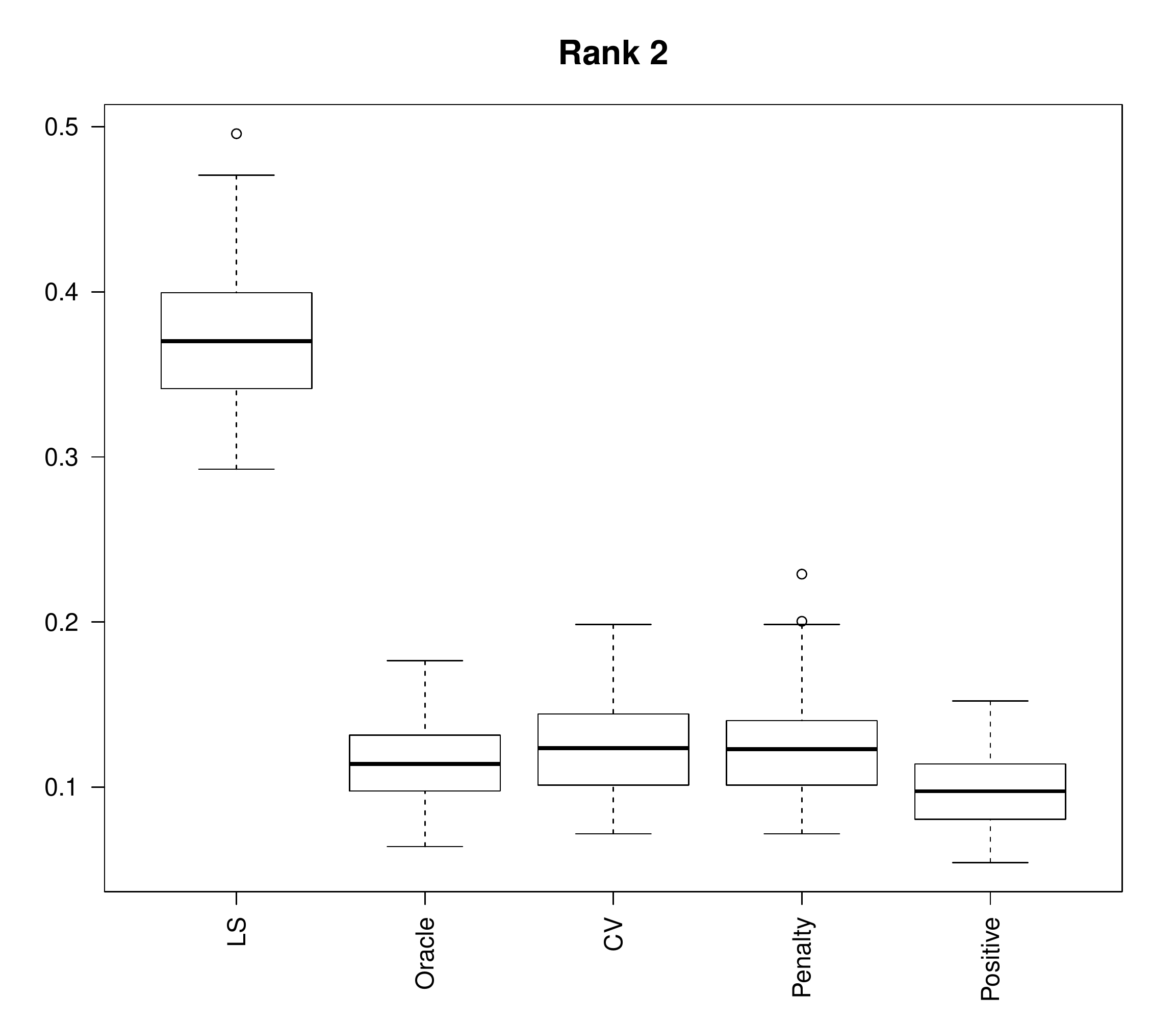}\\
b) MSEs for a rank 2 state 
\end{minipage}

\vspace{4mm}

\begin{minipage}[b]{0.45\linewidth}
\centering
\includegraphics[width=.9\linewidth]{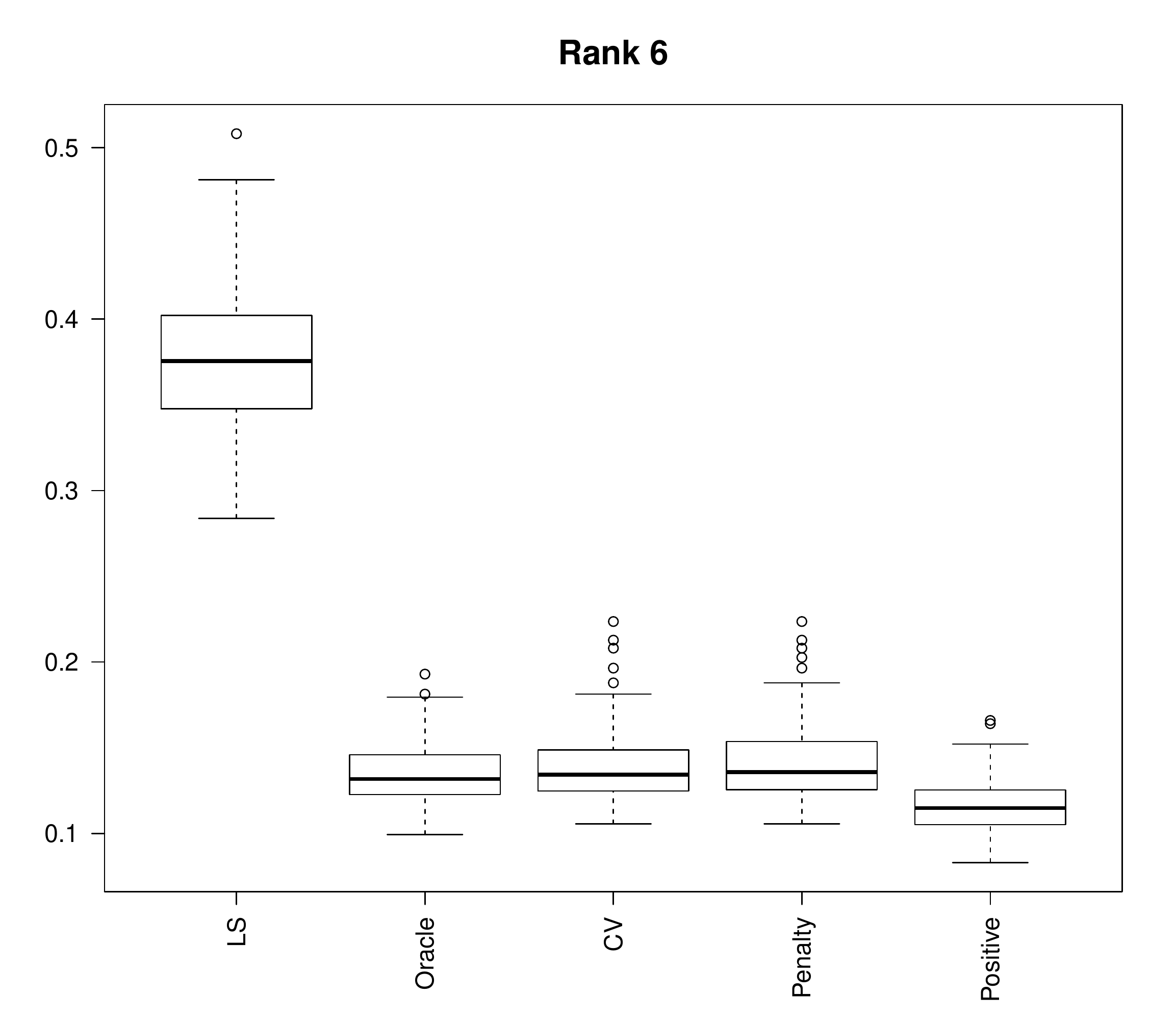}\\
c) MSEs for a rank 6 state  
 \end{minipage}
 \quad
 \begin{minipage}[b]{0.45\linewidth}
 \centering
\includegraphics[width=.9\linewidth]{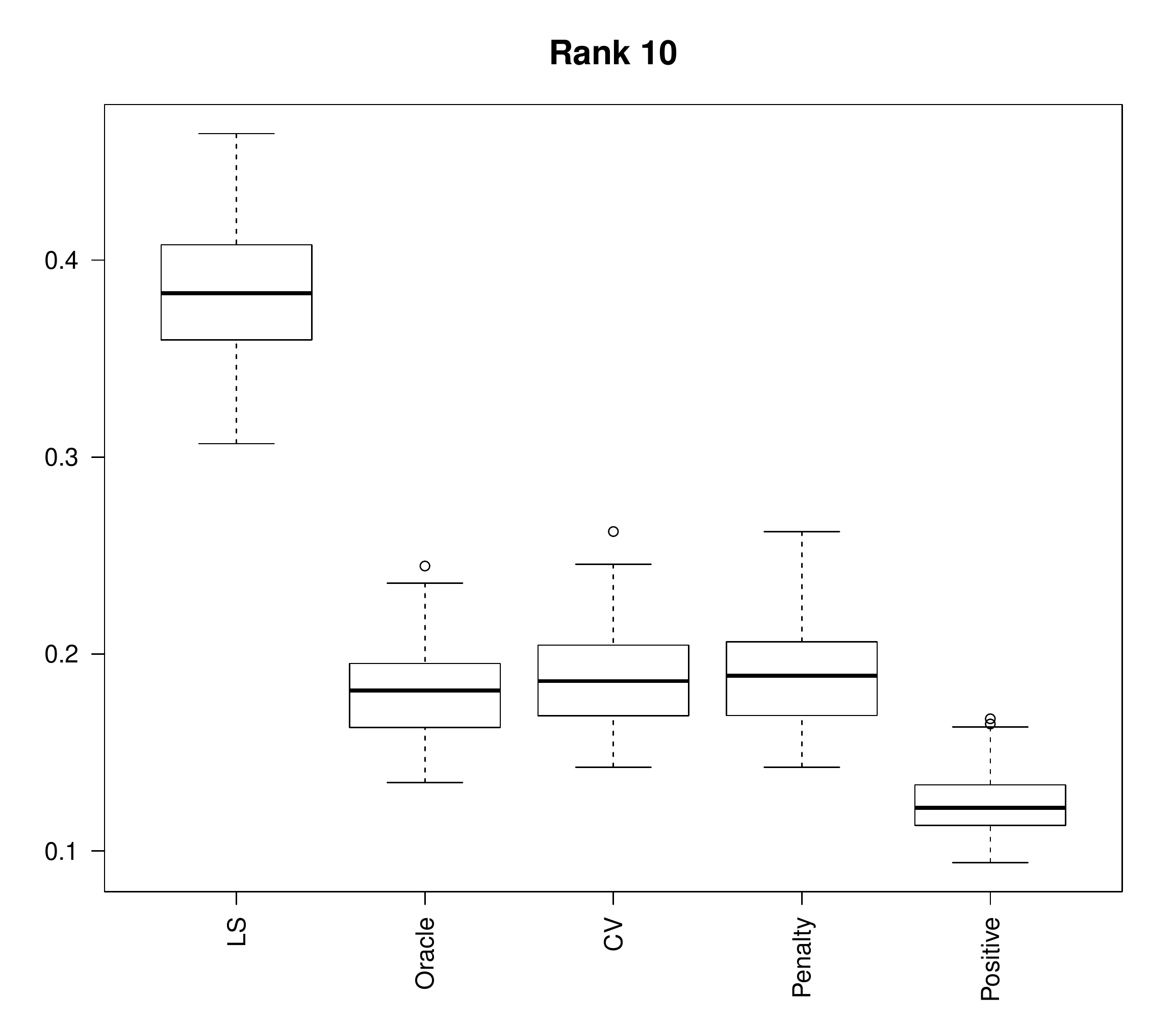}\\
d) MSEs for a rank 10 state 
\end{minipage}
\caption{Boxplots for the estimated mean square error ($\mathbb{E} \|\widehat{\rho}_n - \rho\|_2^2$)} for different 
ranks, $k=4$ ($d=16$), with $n=20$ repetitions
\label{fig:MSE_20}
\end{center}
\end{figure}
The four panels in Figure \ref{fig:MSE_20} represent the boxplots of the square errors $\|\widehat{\rho}_n- \rho\|_2^2$ for the different estimators, and different states, when the number of repetitions is $n=20$. Similarly,  Figure \ref{fig:MSE_100} shows the same boxplots at $n=100$. As expected, in both cases the least squares performs significantly worse than the other estimators, and the discrepancy is larger for small rank states. The remaining 4 estimators have similar MSE's with the physical one performing slightly better than the rest, followed by the oracle. 
Note also that the estimators' variances (indicated by the size of the boxes) are larger for the least squares than the other estimators. A similar behaviour has been observed for $n=500, 2500$ repetitions.

\begin{figure}[H]
\begin{center}
\begin{minipage}[b]{0.45\linewidth}
\centering
\includegraphics[width=.9\linewidth]{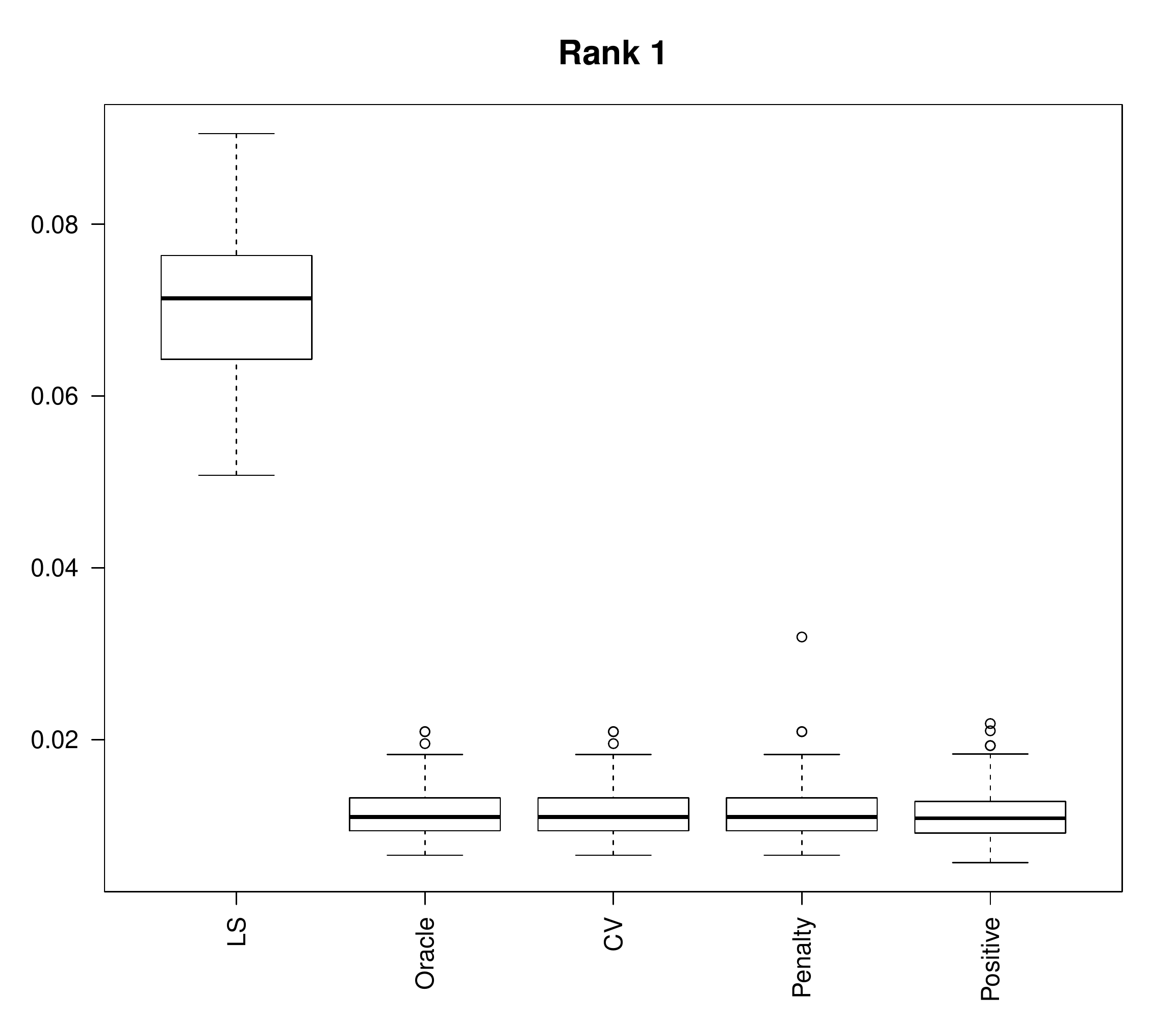}\\
a) MSEs for a rank 1 state  
 \end{minipage}
 \quad
 \begin{minipage}[b]{0.45\linewidth}
 \centering
\includegraphics[width=.9\linewidth]{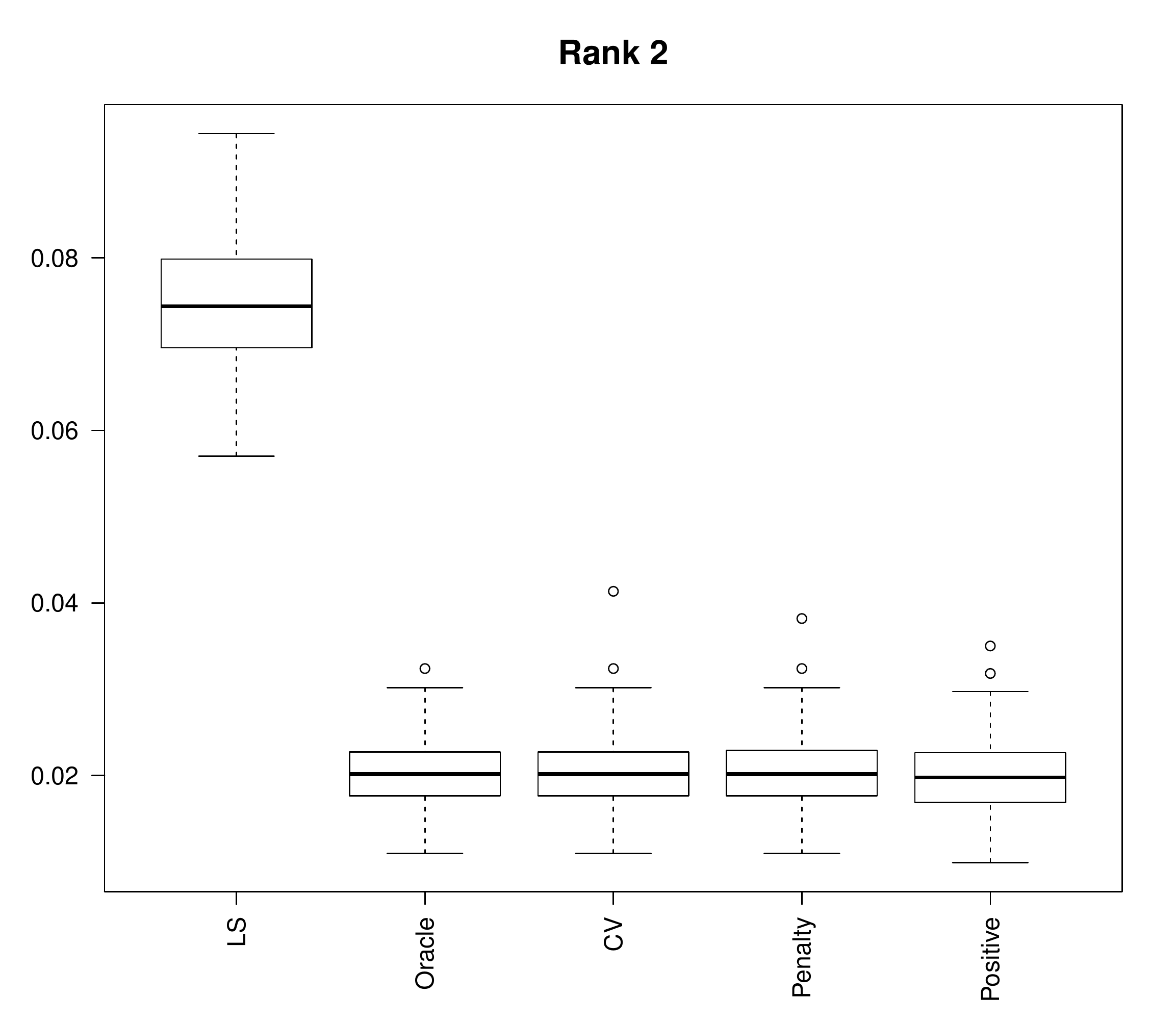}\\
b) MSEs for a rank 2 state 
\end{minipage}

\vspace{4mm}

\begin{minipage}[b]{0.45\linewidth}
\centering
\includegraphics[width=.9\linewidth]{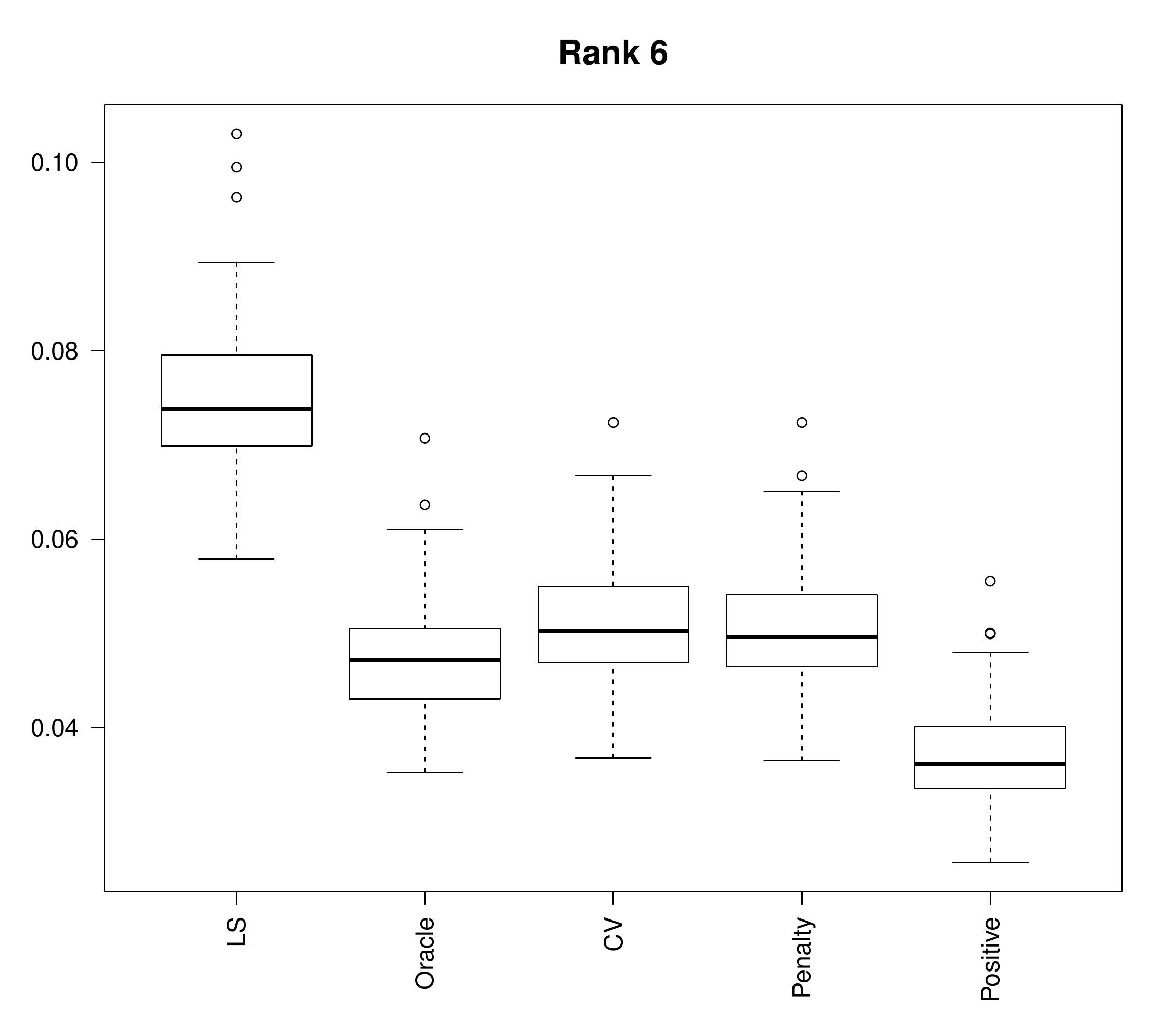}\\
c) MSEs for a rank 6 state  
 \end{minipage}
 \quad
 \begin{minipage}[b]{0.45\linewidth}
 \centering
\includegraphics[width=.9\linewidth]{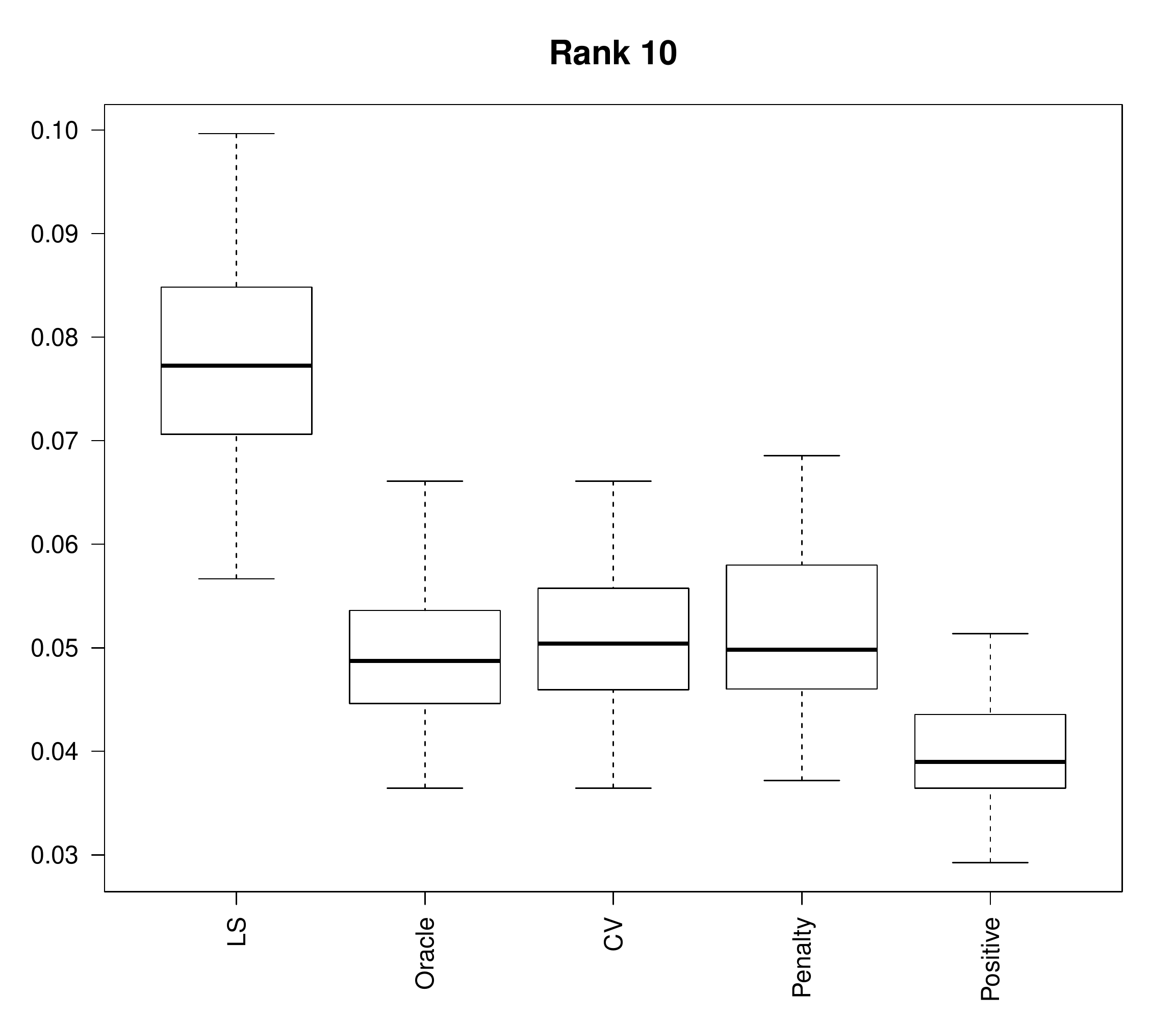}\\
d) MSEs for a rank 10 state 
\end{minipage}
\caption{Boxplots for the estimated mean square error ($\mathbb{E} \|\widehat{\rho}_n - \rho\|_2^2$)} for different 
ranks, $k=4$ ($d=16$), with $n=100$ repetitions
\label{fig:MSE_100}
\end{center}
\end{figure}
Figure \ref{fig:mse.with.n} illustrates the dependence of the MSE of a given estimator, as a function of $n$, for the four different states which have been analysed. Since the MSE decreases as $n^{-1}$ we have chosen to plot the ``renormalised" MSE given by $ n \cdot \mathbb{E} \|\widehat{\rho}_n- \rho\|_2^2$, which converges to a constant value for large $n$. As expected the limiting value increases with the rank of the state, as a proxy for the number of parameters to be estimated.
\begin{figure}[H]
\begin{center}
\begin{minipage}[b]{0.45\linewidth}
\centering
\includegraphics[width=.9\linewidth]{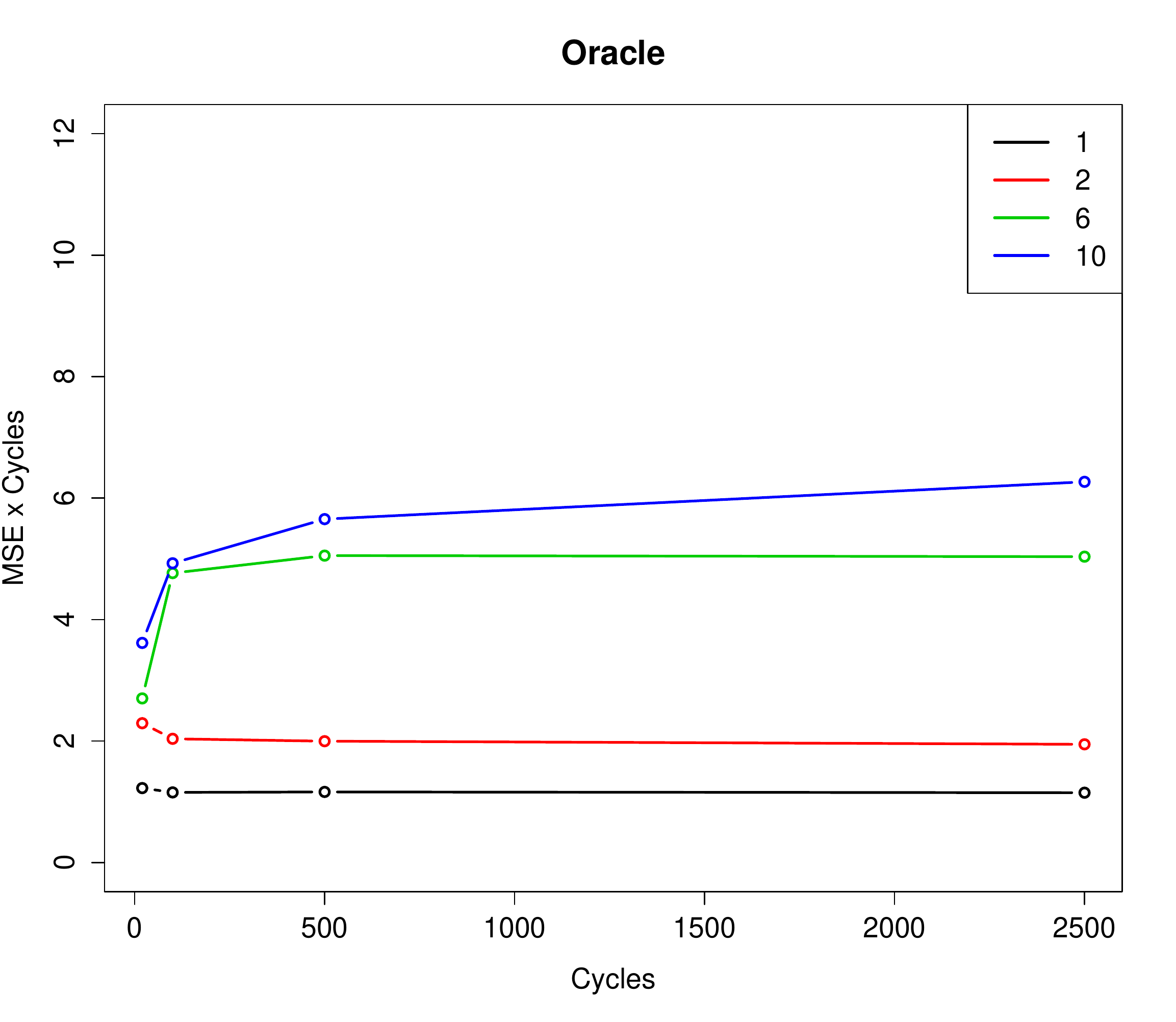}\\
a) ``Renormalised" MSEs for oracle estimator  
 \end{minipage}
 \quad
 \begin{minipage}[b]{0.45\linewidth}
 \centering
\includegraphics[width=.9\linewidth]{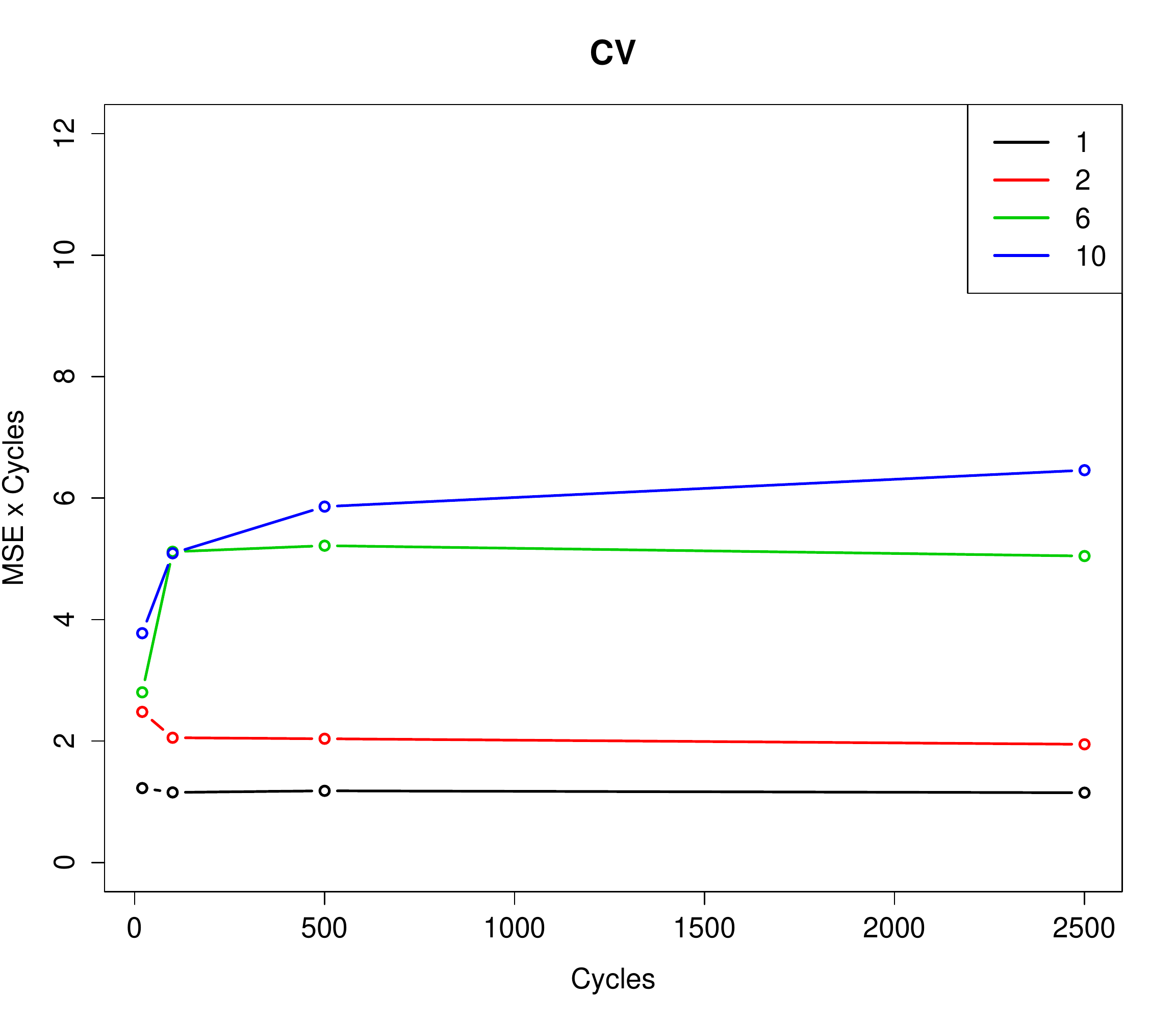}\\
b) ``Renormalised" MSEs for cross-validation estimator
\end{minipage}

\vspace{4mm}

\begin{minipage}[b]{0.45\linewidth}
\centering
\includegraphics[width=.9\linewidth]{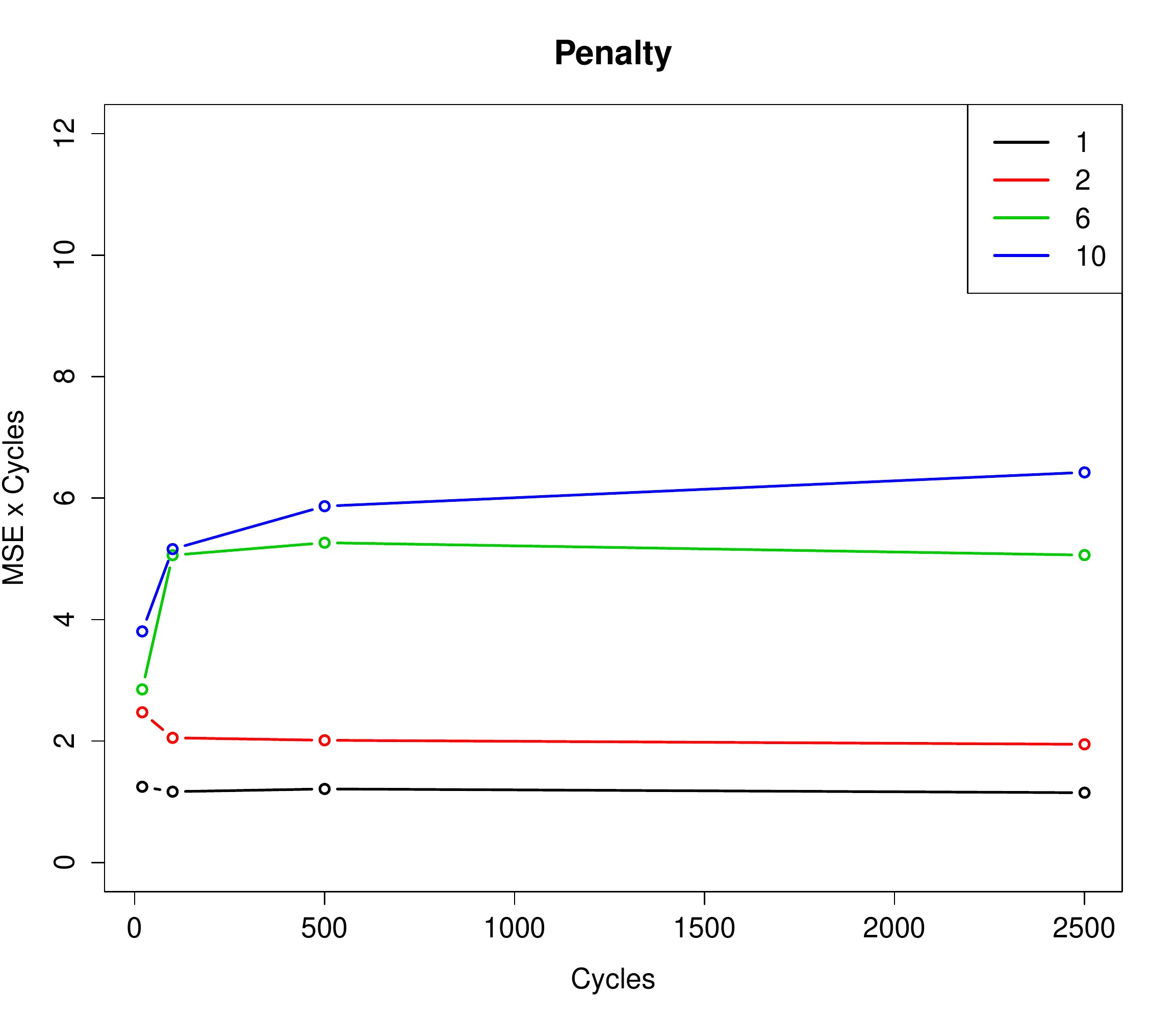}\\
c) ``Renormalised" MSEs for penalised estimator
 \end{minipage}
 \quad
 \begin{minipage}[b]{0.45\linewidth}
 \centering
\includegraphics[width=.9\linewidth]{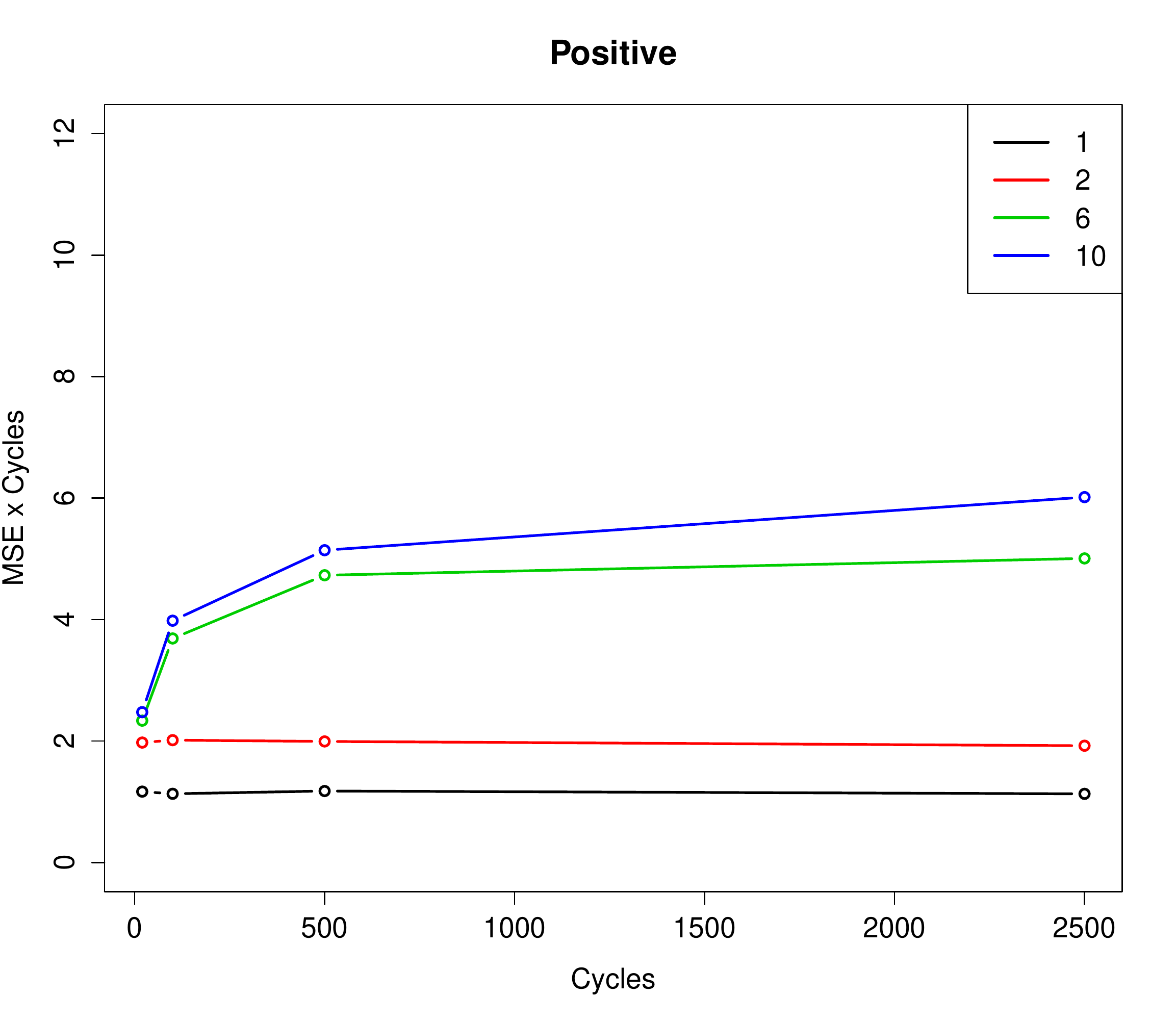}\\
d) ``Renormalised" MSEs for positive estimator 
\end{minipage}
\caption{Renormalised MSEs $n \cdot \mathbb{E} \|  \widehat{\rho}_n -\rho\|_2^2$ as a function of the number of repetitions  for states with different ranks: 1(black), 2 (red), 6 (green), 10 (blue) }.
\label{fig:mse.with.n}
\end{center}
\end{figure}

The histograms in Figure \ref{fig:histogram.chosen.rank} show the probability distributions for the chosen rank of each given estimator, as a function of the number of measurement repetitions $n$, for the state of rank $6$. 
We note that in all cases the proportion of times that the chosen rank is equal to the true rank of the state increases as with the number of repetitions. However, this convergence towards a ``rank-consistent" estimator is rather slow, as the proportion surpasses $80\%$ only when $n=2500$. Another observation is that the penalty and threshold estimators appear to have different behaviours: the former tends to underestimate the true rank, while the latter tends to overestimate it. As expected, the oracle estimator is more likely to choose the correct rank for large number of repetitions. Perhaps slightly more surprising, for small number of repetitions ($n=20$), the oracle choose a pure state in most cases.
\begin{figure}[t!]
\begin{center}
\begin{minipage}[b]{0.45\linewidth}
\centering
\includegraphics[width=.9\linewidth]{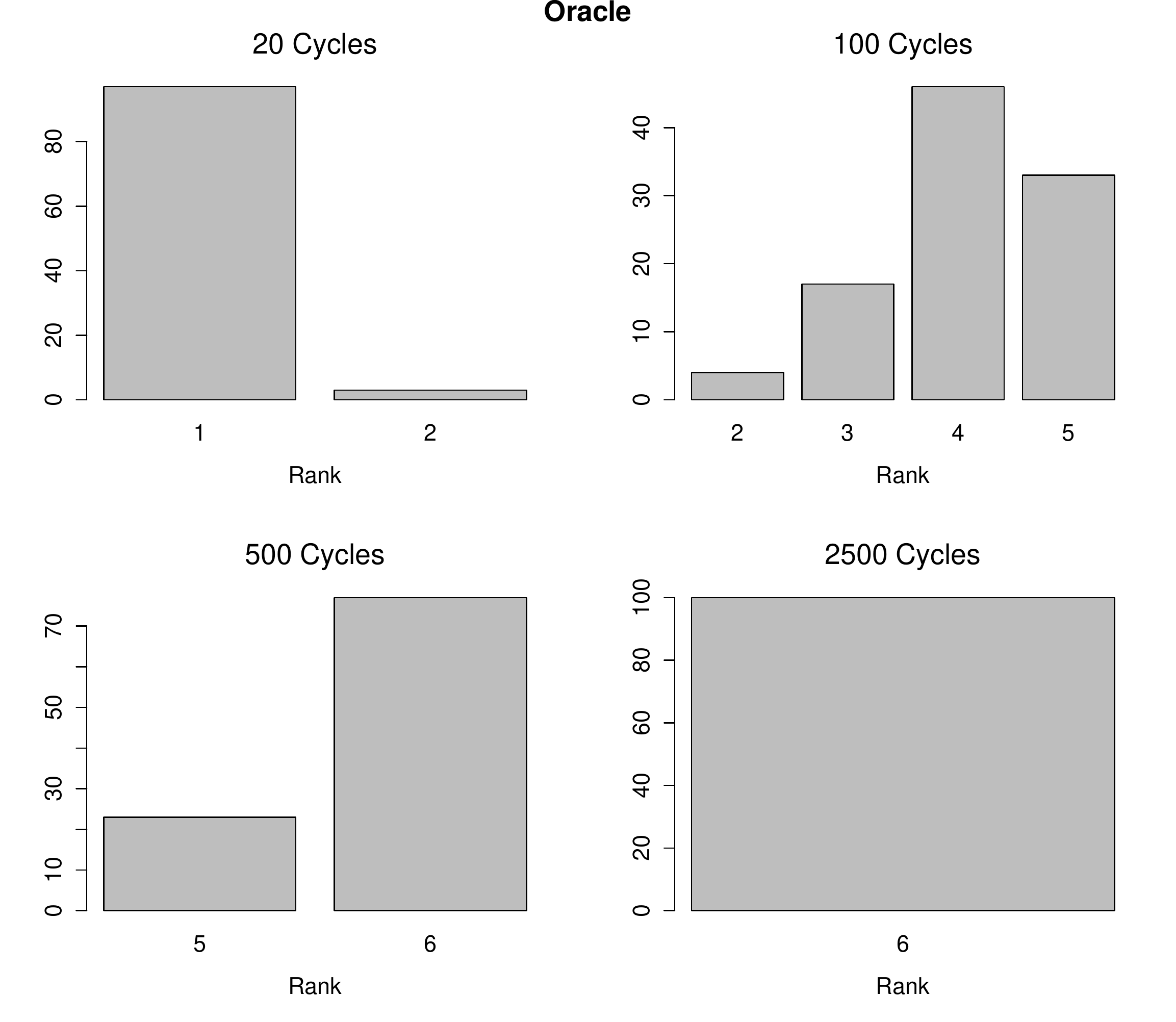}\\
a) Oracle estimator
 \end{minipage}
 \quad
 \begin{minipage}[b]{0.45\linewidth}
 \centering
\includegraphics[width=.9\linewidth]{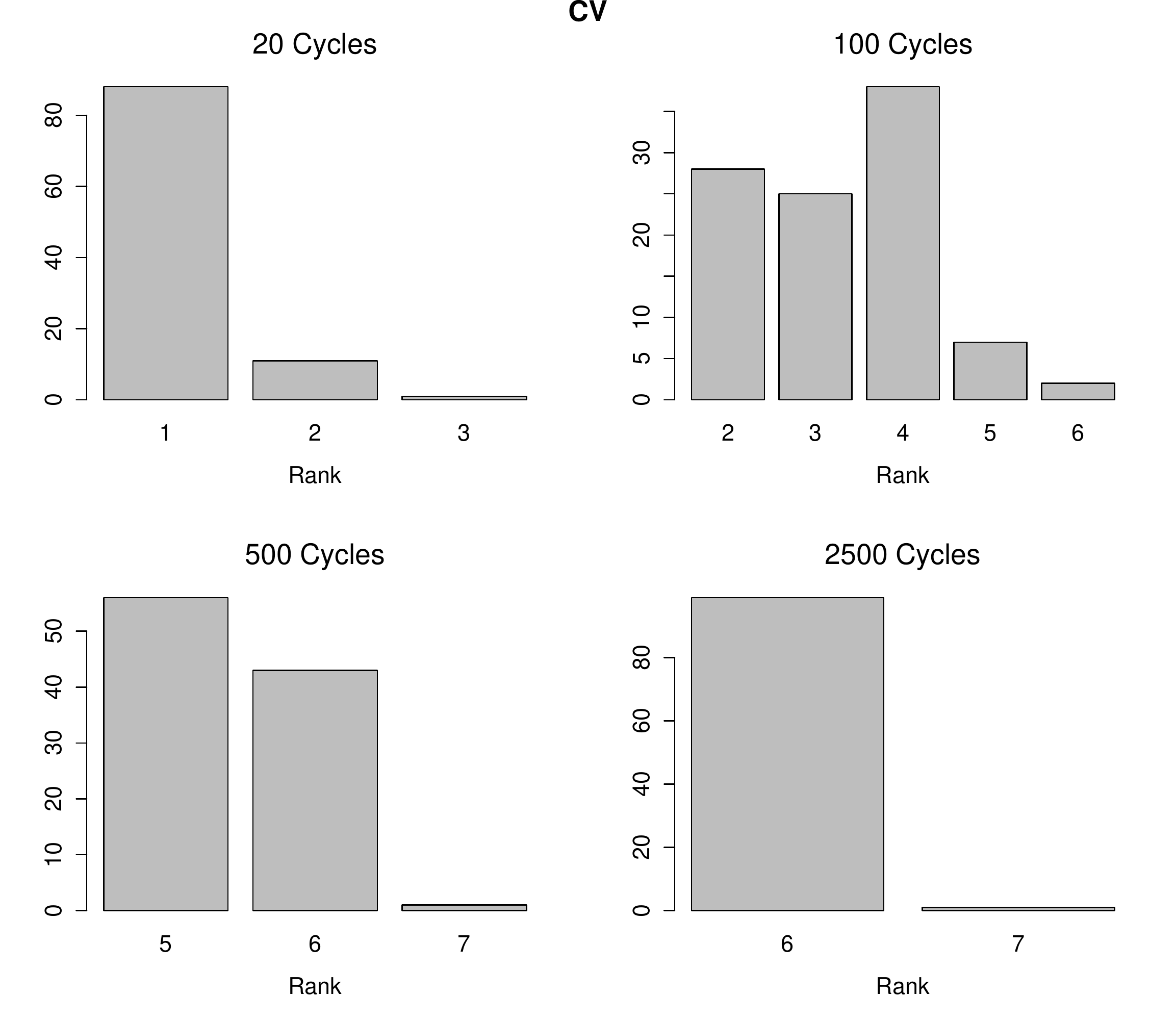}\\
b) Cross-validation estimator   
\end{minipage}

\vspace{4mm}

\begin{minipage}[b]{0.45\linewidth}
\centering
\includegraphics[width=.9\linewidth]{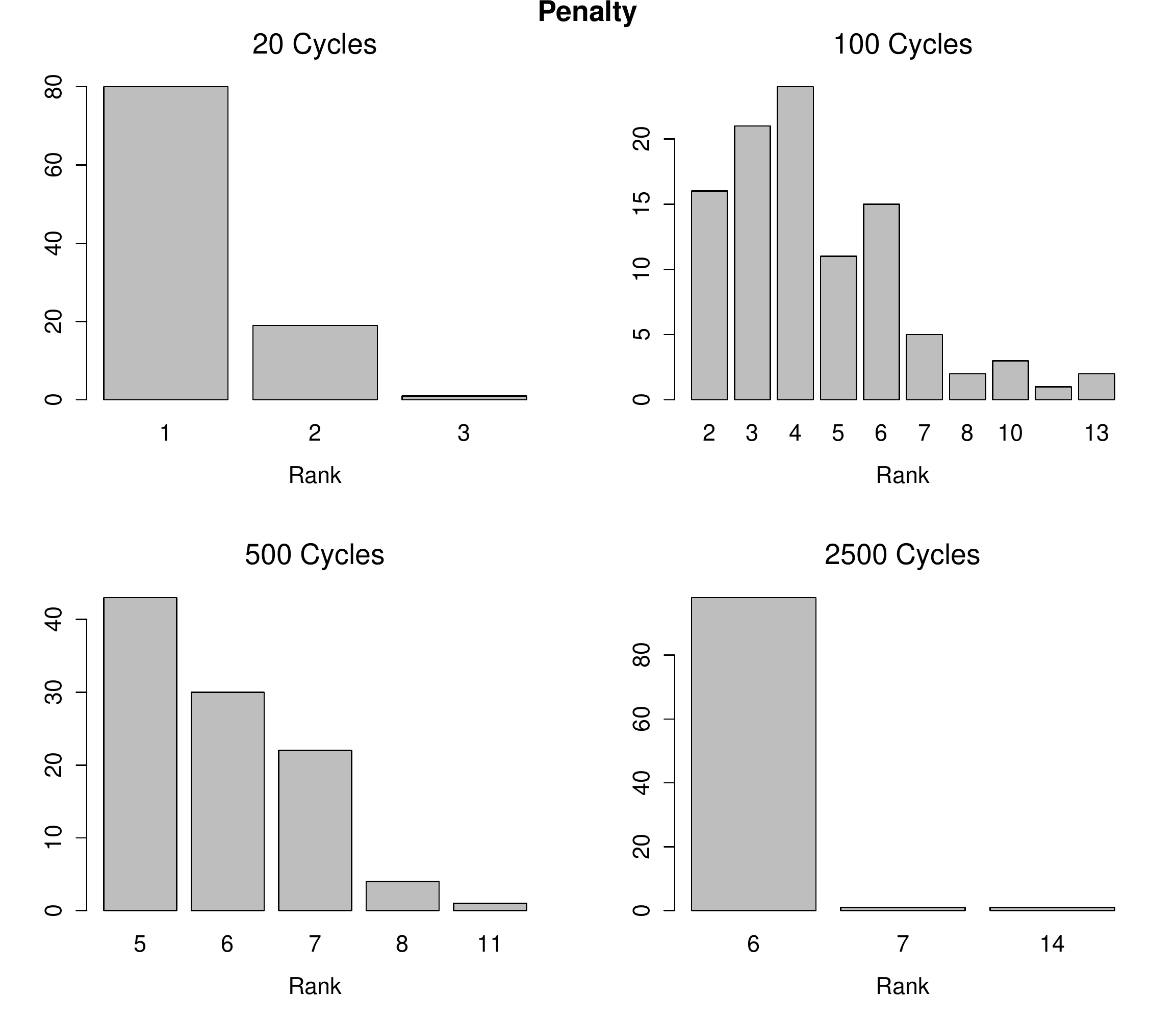}\\
c) Penalty estimator   
 \end{minipage}
 \quad
 \begin{minipage}[b]{0.45\linewidth}
 \centering
\includegraphics[width=.9\linewidth]{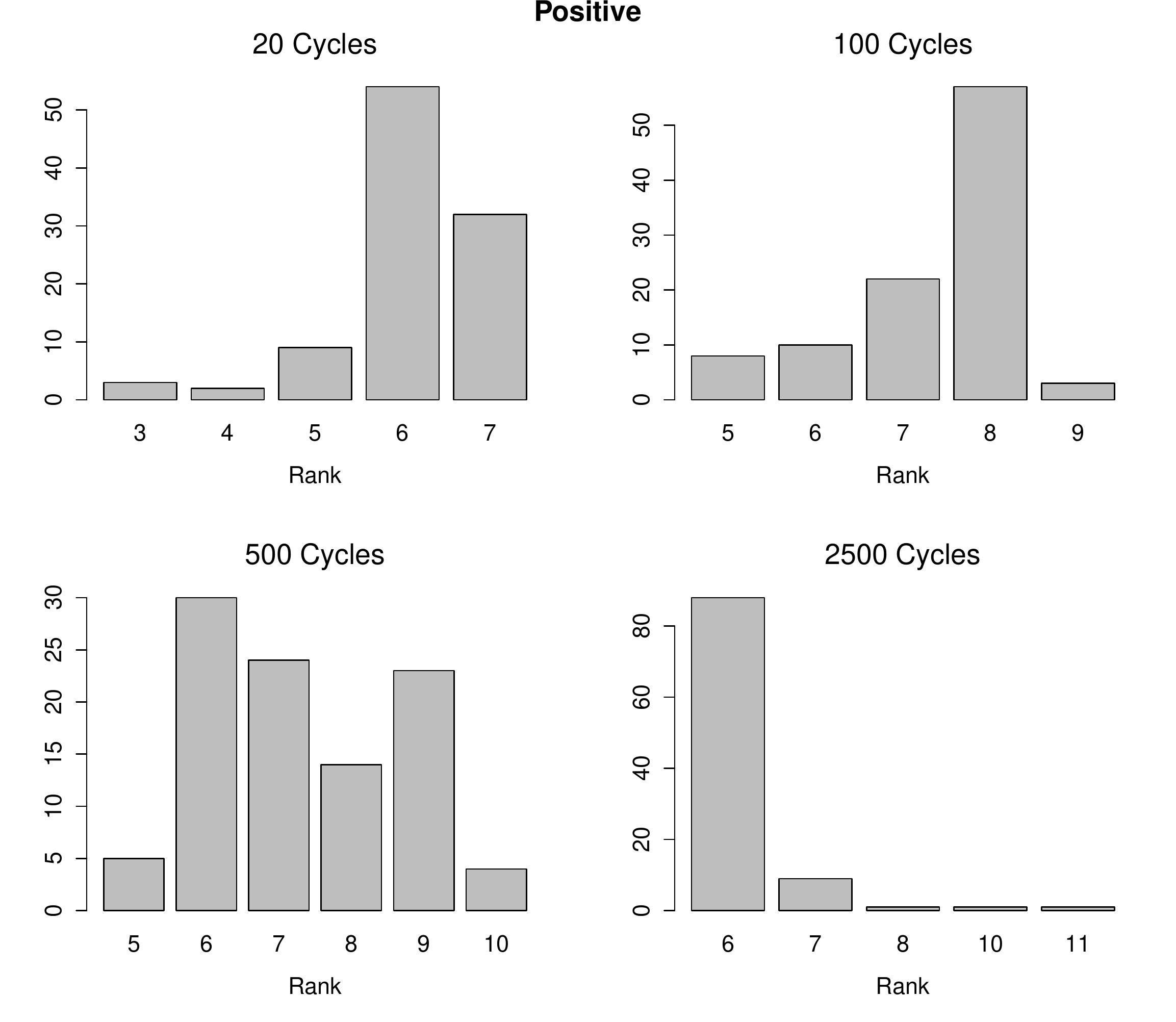}\\
d) Positive estimator   
\end{minipage}
\caption{Histograms of the empirical frequencies of the chosen rank for different estimators, as function of the number of repetitions $n$, true rank $r=6$, $k=4$ ($d=16$)}
\label{fig:histogram.chosen.rank}
\end{center}
\end{figure}

\vfill

\section{Conclusions and outlook}

Quantum state tomography, and in particular multiple ions tomography is an important enabling component of quantum engineering experiments. Since full quantum tomography becomes unfeasible for large dimensional systems, it is useful to identify lower dimensional models with good approximation properties for physically relevant states, and to develop estimation methods tailored for such models. In particular, quantum states created in the lab are often very well approximated by low rank density matrices, which are characterised by a number of parameter which is linear rather than quadratic in the space dimension. 

In this work we analysed several estimation algorithms targeted at estimating low rank states in multiple ions tomography. The procedure consists in computing the least squares estimator, which is then diagonalised, truncated to an appropriate smaller rank by setting eigenvalues below a ``noise threshold" to zero, and normalised. Among the several truncation methods proposed, the best performing one is the ``physical estimator"; this chooses the density matrix whose non-zero eigenvalues are above a certain threshold and is the closest to the least squares estimator. We proved concentration bounds and upper bounds for the mean square error of the penalised and physical estimators, as well as a lower bound for the asymptotic minimax rate for multiple ions tomography. The results show that the proposed methods have an 
optimal dependence on rank and dimension, up to a logarithmic factor in dimension. In addition, the algorithms are easy to implement numerically and their computational complexity is determined by that of the least squares estimator. 

An interesting future direction is to extend the spectral thresholding methodology to a measurement setup where a smaller number of settings is measured, which is however sufficient to identify the unknown low rank state. Another direction involves the construction of confidence intervals / regions for such estimators, beyond the  concentration bounds established here.

\emph{Acknowledgements.} M.G.'s work was supported by the EPSRC Grant No. EP/J009776/1.

\section{Appendix}\label{sec.appendix}

\subsection{Lemma on ${\bf A}^* \cdot {\bf A}$}

\begin{lem}\label{lemmaA}
Let ${\bf A}$ be the linear map defined in equation \eqref{eq.linear.map.a}. 
Then ${\bf A}^* \cdot {\bf A}$ is diagonal and
$$
[{\bf A}^* \cdot {\bf A}]_{{\bf b},{\bf b}} = 2^k 3^{d({\bf b})}, \mbox{ for all }
{\bf b} \in \{I,x,y,z\}^k.
$$
\end{lem}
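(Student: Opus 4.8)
The plan is to compute the matrix elements $[{\bf A}^*{\bf A}]_{{\bf b},{\bf c}} = \sum_{{\bf s},{\bf o}} A_{\bf b}({\bf o}|{\bf s}) A_{\bf c}({\bf o}|{\bf s})$ directly from the explicit formula \eqref{calculA}, using that all entries $A_{\bf b}({\bf o}|{\bf s})$ are real so conjugation plays no role. The structural fact that makes everything work is that \eqref{calculA} factorizes over the $k$ ions: setting $a_b(o|s) := 1$ when $b=I$ and $a_b(o|s) := o\cdot I(b=s)$ when $b\in\{x,y,z\}$, one has $A_{\bf b}({\bf o}|{\bf s}) = \prod_{j=1}^k a_{b_j}(o_j|s_j)$. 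Because the sum over $({\bf o},{\bf s})\in\mathcal{O}_k\times\mathcal{S}_k$ is a product of independent single-site sums, this gives $[{\bf A}^*{\bf A}]_{{\bf b},{\bf c}} = \prod_{j=1}^k M_{b_j,c_j}$, where $M_{b,c} := \sum_{s\in\{x,y,z\}}\sum_{o\in\{\pm1\}} a_b(o|s)\,a_c(o|s)$ is a single $4\times 4$ matrix indexed by $\{I,x,y,z\}$.

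The lemma then reduces to evaluating this one matrix $M$, which I would do by cases. If $b=c=I$ the summand is $1$, giving $M_{I,I}=3\cdot2=6$; if exactly one of $b,c$ equals $I$ the summand carries a single factor $o$, and $\sum_{o=\pm1} o = 0$ kills it, so $M_{I,c}=M_{b,I}=0$ for $b,c\neq I$; if $b=c\neq I$ then $o^2=1$ and $I(b=s)^2=I(b=s)$, so the sum over $o$ contributes a factor $2$ and the sum over $s$ contributes $1$, whence $M_{b,b}=2$; and if $b,c\neq I$ with $b\neq c$ then $I(b=s)I(c=s)=0$ for every $s$, so $M_{b,c}=0$. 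Thus $M$ is diagonal with $M_{I,I}=6$ and $M_{b,b}=2$ for $b\neq I$.

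Plugging this back, $[{\bf A}^*{\bf A}]_{{\bf b},{\bf c}} = \prod_j M_{b_j,c_j}$ vanishes unless $b_j=c_j$ for every $j$, i.e. unless ${\bf b}={\bf c}$, which establishes diagonality. On the diagonal, each of the $d({\bf b})=|E_{\bf b}|$ sites with $b_j=I$ contributes a factor $6$ and each of the remaining $k-d({\bf b})$ sites contributes a factor $2$, so $[{\bf A}^*{\bf A}]_{{\bf b},{\bf b}} = 6^{d({\bf b})}\,2^{\,k-d({\bf b})} = 2^k\,3^{d({\bf b})}$, as claimed.

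I do not expect a genuine obstacle here: the content is a routine if slightly fiddly computation, and the only steps needing care are the factorization over sites and the bookkeeping of the four single-site cases. The one conceptual point worth isolating is that summing over the two outcomes $o=\pm1$ at each site enforces the orthogonality between $\sigma_I$ and the genuine Pauli directions — the terms linear in $o$ average to zero — and this is precisely what renders ${\bf A}^*{\bf A}$ diagonal.
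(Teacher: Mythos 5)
Your proof is correct, and it reorganizes the computation in a way that is genuinely different from (and tidier than) the paper's argument. The paper computes $[{\bf A}^*{\bf A}]_{{\bf b},{\bf b}'}=\sum_{{\bf s},{\bf o}}A_{\bf b}({\bf o}|{\bf s})A_{{\bf b}'}({\bf o}|{\bf s})$ by a global case analysis: it treats ${\bf b}={\bf b}'$ directly, and for ${\bf b}\neq{\bf b}'$ it distinguishes whether $E_{\bf b}=E_{{\bf b}'}$ (in which case the product of indicators $\prod_{j\notin E_{\bf b}}I(s_j=b_j)I(s_j=b'_j)$ vanishes for every ${\bf s}$, since otherwise ${\bf b}={\bf b}'$) or $E_{\bf b}\neq E_{{\bf b}'}$ (in which case it picks $j_0$ in the symmetric difference and cancels the sum over $o_{j_0}=\pm1$). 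You instead push the product structure of equation \eqref{calculA} all the way through: the full sum tensorizes into $\prod_{j=1}^k M_{b_j,c_j}$ for a single $4\times4$ single-site matrix $M$, which you verify is ${\rm Diag}(6,2,2,2)$. This subsumes the paper's two off-diagonal cases in one stroke — a site with $b_j\neq c_j$ is killed either by $\sum_o o=0$ (when one of them is $I$, i.e.\ $j$ lies in the symmetric difference) or by $I(b=s)I(c=s)=0$ (when both are distinct Pauli labels) — and the diagonal value $6^{d({\bf b})}2^{k-d({\bf b})}=2^k3^{d({\bf b})}$ drops out of the same bookkeeping. The two proofs rest on identical cancellation mechanisms, but your tensor-factorization makes the structure more transparent and would generalize immediately to other product measurement designs; the paper's version avoids introducing the auxiliary single-site matrix at the cost of a slightly more ad hoc case split.
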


\begin{proof}
Next, we give here for the reader's convenience the proof of Lemma~\ref{lemmaA} similar to that in \cite{AlquierBut2013}. Let us compute
\begin{equation*}\label{aux2}
[{\bf A}^* \cdot {\bf A}]_{{\bf b},{\bf b}'} = \sum_{\bf s} \sum_{\bf o} A_{\bf b}({\bf o}|{\bf s}) A_{{\bf b}'}({\bf o}|{\bf s}) .
\end{equation*}
If ${\bf b} = {\bf b}'$ it is easy to see that 
$$
[{\bf A}^* \cdot {\bf A}]_{{\bf b},{\bf b}} = \sum_{\bf s} \sum_{\bf o} \prod_{j\not \in E_{\bf b}} I(s_j = b_j) = 2^k 3^{d( {\bf b} )}. 
$$
If ${\bf b} \ne {\bf b}'$, we have either $E_{\bf b} = E_{{\bf b}'}$ or $E_{\bf b} \not = E_{{\bf b}'}$. On the one hand, in case the sets $E_{\bf b}$ and $E_{{\bf b}'}$ are equal, we have
$$
A_{\bf b}({\bf o}|{\bf s}) A_{{\bf b}'}({\bf o}|{\bf s}) = \prod_{j\not \in E_{\bf b}} I(s_j = b_j)\cdot I(s_j = b'_j) .
$$
For each ${\bf s}$, the previous product is 0. Indeed, if different from 0 then $b_j=b'_j$ for all $j$ not in $E_{{\bf b}}$. As $b_j=b_{j'}=I$ for $j$ in $E_{\bf b}$, it implies that ${\bf b} = {\bf b}'$ which contradicts the assumption here.

On the other hand, if the sets $E_{\bf b}$ and $E_{{\bf b}'}$ are different, there exists at least one coordinate $j_0$ in the symmetric difference $E_{\bf b} \Delta E_{b'}$ and the sum over outcomes ${\mathbf o}$ will split over values of ${\bf o}$ where $o_{j_0}=1$ and values where $o_{j_0}=-1$:
\begin{eqnarray*}
&& \sum_{\bf o} A_{\bf b}({\bf o}|{\bf s}) A_{{\bf b}'}({\bf o}|{\bf s})\\
&=&  \sum_{{\bf o}:o_{j_0}=1} I(s_{j_0} = b_{j_0}) \prod_{j\not \in E_{\bf b}}o_jI(s_j = b_j)\cdot \prod_{l\not \in E_{{\bf b}'}}o_l I(s_l = b'_l) \\
&&- \sum_{{\bf o}:o_{j_0}=-1} I(s_{j_0} = b_{j_0})\prod_{j\not \in E_{\bf b}}o_jI(s_j = b_j)\cdot \prod_{l\not \in E_{{\bf b}'}}o_l I(s_l = b'_l) =0.
\end{eqnarray*}
We assumed here that $j_0$ belongs to $E_{\bf b} \backslash E_{{\bf b}'}$ and the same holds for $j_0$ in $E_{{\bf b}'} \backslash E_{\bf b}$.
\end{proof}
\subsection{Proof of Proposition \ref{prop:linear}}
\label{app.proof.prop.linear}
\begin{proof}[Proof of Proposition \ref{prop:linear}]
Note that $f({\bf o}|{\bf s}) = \frac 1n \sum_i I(X_{{\bf s},i} = {\bf o})$, where the random variables $X_{{\bf s},i}$ are independent for all settings ${\bf s}$ and all $i$ from 1 to $n$. To estimate the risk of the linear estimator we write
\begin{eqnarray*}
\widehat \rho^{(ls)}_n - \rho 
& = & \sum_{\bf b} \sum_{\bf o} \sum_{\bf s} (f({\bf o}|{\bf s})) - p({\bf o}|{\bf s})) \frac{A_{\bf b}({\bf o}|{\bf s}) }{2^k 3^{d({\bf b})}}\sigma_{\bf b}\\
& = & \sum_{\bf b} \sum_{\bf o} \sum_{\bf s} \frac 1n \sum_i (I(X_{{\bf s},i}={\bf o}) - p({\bf o}|{\bf s}))\frac{A_{\bf b}({\bf o}|{\bf s}) }{2^k 3^{d({\bf b})}} \sigma_{\bf b}\\
& := & \sum_{\bf s} \sum_i W_{{\bf s},i}.
\end{eqnarray*}
where $W_{{\bf s},i}$ are independent and centered Hermitian random matrices. We will apply the following extension of the Bernstein matrix inequality \cite{AhlswedeWinter2002} due to \cite{tropp}, see also \cite{Koltchinsky}, \cite{Gross2011}.

\begin{prop}[Bernstein inequality, \cite{tropp}]\label{Bernstein}
Let $Y_1,...,Y_n$ be independent, centered, $m\times m$ Hermitian random matrices. Suppose that, for some constants $V,\, W>0$ we have $\|Y_j\|\leq V$, for all $j$ from 1 to $n$, and that $\|\sum_j \mathbb{E}(Y_j^2)\|\leq W$. Then, for all $t\geq 0$,
$$
\mathbb{P}(\|Y_1+...+Y_n\|\geq t) \leq 2 m \exp \left(- \frac{t^2/2}{W + tV/3} \right).
$$
\end{prop}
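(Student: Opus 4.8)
The plan is to prove this by the matrix Laplace transform (Chernoff) method of Ahlswede--Winter, in the sharp form due to Tropp. Writing $S=\sum_j Y_j$ and observing that $\|S\|=\max(\lambda_{\max}(S),\lambda_{\max}(-S))$, I would control the two one-sided deviations $\mathbb{P}(\lambda_{\max}(S)\ge t)$ and $\mathbb{P}(\lambda_{\max}(-S)\ge t)$ separately and combine them by a union bound; this is exactly what produces the prefactor $2m$. Since the family $\{-Y_j\}$ satisfies the same hypotheses, it suffices to treat $\lambda_{\max}(S)$.

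First I would apply the exponential Markov inequality in matrix form: for any $\theta>0$,
$$\mathbb{P}(\lambda_{\max}(S)\ge t)\le e^{-\theta t}\,\mathbb{E}\,\mathrm{tr}\,e^{\theta S},$$
which follows from $\lambda_{\max}(S)\le\theta^{-1}\log\mathrm{tr}\,e^{\theta S}$ and monotonicity of $\exp$. The crux is then to bound the matrix moment generating function $\mathbb{E}\,\mathrm{tr}\,e^{\theta S}$. For this I would invoke the subadditivity of the matrix cumulant generating function for independent Hermitian summands, namely $\mathbb{E}\,\mathrm{tr}\,e^{\theta S}\le\mathrm{tr}\exp\big(\sum_j\log\mathbb{E}\,e^{\theta Y_j}\big)$. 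This is the step resting on Lieb's concavity theorem (that $A\mapsto\mathrm{tr}\exp(H+\log A)$ is concave on positive-definite $A$) combined with Jensen's inequality, and I expect it to be the main obstacle, since it is the one genuinely deep ingredient; everything else is essentially bookkeeping.

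With subadditivity in hand, the remaining work is to estimate each per-summand factor $\mathbb{E}\,e^{\theta Y_j}$. Since $Y_j$ is centered with $\|Y_j\|\le V$, I would exploit that the scalar function $x\mapsto(e^{x}-1-x)/x^2$ is increasing: applied to the eigenvalues of $\theta Y_j$, all lying in $[-\theta V,\theta V]$, it gives the operator inequality $e^{\theta Y_j}\preceq I+\theta Y_j+g(\theta)\,Y_j^2$ with $g(\theta)=(e^{\theta V}-1-\theta V)/V^2$. Taking expectations annihilates the linear term, and using $\log(I+M)\preceq M$ together with monotonicity of the trace exponential yields $\mathbb{E}\,\mathrm{tr}\,e^{\theta S}\le m\exp\big(g(\theta)\,\|\sum_j\mathbb{E}\,Y_j^2\|\big)\le m\,\exp(g(\theta)W)$, where the last step uses the hypothesis $\|\sum_j\mathbb{E}(Y_j^2)\|\le W$ and $\mathrm{tr}\,e^{B}\le m\,e^{\|B\|}$ for positive semidefinite $B$.

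Finally I would optimise $\mathbb{P}(\lambda_{\max}(S)\ge t)\le m\exp(-\theta t+g(\theta)W)$ over $\theta>0$. Choosing $\theta=V^{-1}\log(1+tV/W)$, or more directly bounding $g(\theta)\le(\theta^2/2)/(1-\theta V/3)$ for $0<\theta<3/V$ and optimising the resulting rational exponent, produces the stated exponent $-\tfrac{t^2/2}{W+tV/3}$ after an elementary numerical inequality. Adding the symmetric bound for $\lambda_{\max}(-S)$ doubles the prefactor and completes the proof.
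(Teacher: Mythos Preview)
The paper does not prove this proposition at all: it is quoted verbatim from \cite{tropp} (with auxiliary references to \cite{AhlswedeWinter2002}, \cite{Koltchinsky}, \cite{Gross2011}) and used as a black box inside the proof of Proposition~\ref{prop:linear}. There is therefore no ``paper's own proof'' to compare against.

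That said, your outline is precisely Tropp's argument, and the steps are correct. A few small remarks: (i) in passing from $\mathbb{E}\,e^{\theta Y_j}\preceq I+g(\theta)\,\mathbb{E}Y_j^2$ to a bound on $\log\mathbb{E}\,e^{\theta Y_j}$ you are using operator monotonicity of $\log$, which you might state explicitly since the student's line ``$\log(I+M)\preceq M$ together with monotonicity of the trace exponential'' compresses two distinct facts; (ii) the inequality $\mathrm{tr}\,e^A\le\mathrm{tr}\,e^B$ for $A\preceq B$ follows from Weyl's monotonicity of eigenvalues and the monotonicity of the scalar exponential; (iii) the final numerical step---either the Bennett optimisation at $\theta=V^{-1}\log(1+tV/W)$ followed by the elementary bound $h(u):=(1+u)\log(1+u)-u\ge \tfrac{u^2/2}{1+u/3}$, or the direct rational bound $g(\theta)\le\tfrac{\theta^2/2}{1-\theta V/3}$---is standard and yields exactly the stated exponent. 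Nothing is missing.
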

 
In our setup  we bound $\|W_{{\bf s},i}\| \leq V$ for all ${\bf s}$ and $i$ and $\|\sum_{\bf s} \mathbb{E} (W_{{\bf s},i}^* W_{{\bf s},i})\|\leq W$, where $V,\, W$ are evaluated below. We have 
\begin{eqnarray*}
\|W_{{\bf s},i}\| &\leq & \frac 1n\sum_{\bf b} \sum_{\bf o} \left|\frac{A_{\bf b}({\bf o}|{\bf s}) }{2^k 3^{d({\bf b})}}\right|\cdot |I(X_{{\bf s},i}={\bf o}) -p({\bf o}|{\bf s}) |\cdot \|\sigma_{\bf b}\|\\
& \leq & \frac 1n \sum_{\bf b} \frac 1{2^k 3^{d(b)}} \prod_{j \not \in E_{\bf b}} I(b_j = s_j)\sum_{\bf o} |I(X_{{\bf s},i}={\bf o})  - p({\bf o}|{\bf s})|\\
& \leq & \frac 2{n 2^k } \sum_{\ell = 0}^k \sum_{b: d(b) = \ell} \frac 1{3^\ell}
= \frac 2{n 2^k} \sum_{\ell=0}^k {k \choose \ell} \frac 1{3^\ell}
 = \frac 2{n2^k} \left(1+\frac 13\right)^k = \frac 2n \left(\frac 23\right)^k := V.
\end{eqnarray*}

Let us denote $B({\bf o}|{\bf s}) := \sum_{\bf b} {2^{-k} 3^{-d({\bf b})}} {A_{\bf b}({\bf o}|{\bf s}) }\sigma_{\bf b}$. Then  
\begin{eqnarray}\label{aux1}
&& \|\sum_{\bf s} \sum_i \mathbb{E} (W_{{\bf s},i}^* W_{{\bf s},i})\| \nonumber \\
& = & \frac 1{n^2}\| \sum_{\bf s}\sum_i \sum_{{\bf o},{\bf o'}} B^*({\bf o}|{\bf s})\cdot {\rm Cov}(I(X_{{\bf s},i}={\bf o}),I(X_{{\bf s},i}={\bf o'})) \cdot B({\bf o'}|{\bf s})\|\nonumber \\
&\leq & \frac 1{n^2} \|\sum_{\bf s} \sum_i \sum_{{\bf o}} B^*({\bf o}|{\bf s}) \cdot B({\bf o}|{\bf s})\|.
\end{eqnarray}
The last inequality follows from the fact that the covariance matrix can be written as the difference of two positive matrices
$
{\rm Cov}_{{\bf o},{\bf o'}}:= {\rm Cov}(I(X_{{\bf s},i}={\bf o}),I(X_{{\bf s},i}={\bf o'})) = p({\bf o}|{\bf s}) \delta_{{\bf o},{\bf o'}} - p({\bf o}|{\bf s})p({\bf o'}|{\bf s})
$ 
and since $p({\bf o}|{\bf s}) \leq 1$, we get ${\rm Cov} \leq \mathbf{1}$.

We replace $B({\bf o}|{\bf s})$ in (\ref{aux1}) and use Lemma~\ref{lemmaA} to get 
\begin{eqnarray*}
&& \|\sum_{\bf s} \sum_i \mathbb{E} (W_{{\bf s},i}^* W_{{\bf s},i})\| \\
&\leq & \frac 1n  \left\|\sum_{\bf s} \sum_{{\bf o}} \sum_{{\bf b}, {\bf b}'} \frac 1{2^{2k} 3^{d({\bf b}) + d({\bf b}')}} A_{\bf b}({\bf o}|{\bf s}) A_{{\bf b}'}({\bf o}|{\bf s}) \cdot \sigma_{\bf b} \sigma_{\bf b'} \right\|\\
&\leq & \frac 1{n 2^k}  \sum_{\ell = 0}^k \sum_{{\bf b}: d({\bf b}) = \ell} \frac 1{3^\ell}
= \frac 1{n 2^k} \sum_{\ell = 0}^k {k \choose \ell} \frac 1{3^\ell} = \frac 1n \left(\frac 23 \right)^k:= W.
\end{eqnarray*}

Apply the matrix Bernstein inequality in Proposition~\ref{Bernstein} to get, for any $t>0$:
$$
\mathbb{P}_\rho\left(\|\widehat \rho^{(ls)}_n - \rho \| \geq t \right) \leq 
2^{k+1} \exp \left(- \frac{t^2 }{1 + 2t/3} \frac n 2 \left(\frac 32\right)^k \right).
$$
We choose $t>0$ such that 
$$
\varepsilon = 2^{k+1} \exp\left( -\frac{t^2 }{1 + 2t/3} \cdot \frac n2 \left(\frac 32\right)^k\right),
$$
which leads to $t$ such that
$$
\frac{t^2 }{1 + 2t/3} = \frac{2}{n} \left( \frac 23\right)^k \log\left(\frac{2^{k+1}}{\varepsilon}\right) = \frac 32 v(\varepsilon).
$$
Then the convenient choice of $t$ is $ \nu(\varepsilon)= v(\varepsilon)/2 + \sqrt{v^2(\varepsilon)/2+3/2 \cdot v(\varepsilon)} $, which is equivalent to $\sqrt{3/2 \cdot v(\varepsilon)}$ when this last term tends to 0.
\end{proof}

\subsection{Proof of Proposition \ref{prop:trace}}
\label{app.prop:trace}

\begin{proof}[Proof of Proposition~\ref{prop:trace}]
Let us denote by $(\widehat \lambda_1 \geq ... \geq \widehat \lambda_d)$ the eigenvalues of 
$|\widehat \rho_n^{(ls)}|$ and by $\widetilde \lambda_1,...,\widetilde \lambda_d$ the eigenvalues of the resulting estimator $\widehat \rho_n^{(ls,n)}$. It is easy to see that the latter has the same eigenvectors as $\widehat \rho_n^{(ls)}$. 
Therefore
$$
(\widetilde \lambda_1,...,\widetilde \lambda_d) = \underset{\lambda :\sum_{j=1}^d \lambda_j = 1}
{\arg\,\,\min}\,\,\sum_{j=1}^d (\lambda_j - \widehat \lambda_j)^2.
$$
This optimisation has the explicit solution 
$$
\widetilde \lambda_j = \widehat \lambda_j +\frac L2, \, \mbox{where } \frac L2 d = 1- \sum_{j=1}^d \widehat \lambda_j.
$$
Note that 
$$
\frac L2 = \frac 1d \rm{tr}(\rho - \widehat \rho_n^{(ls)}) \leq \|\widehat \rho_n^{(ls)} - \rho\|.
$$
Therefore, $\|\widehat \rho_n^{(ls,n)} - \widehat \rho_n^{(ls)}\| = L/2 \leq \|\widehat \rho_n^{(ls)} - \rho\|$ and thus
\begin{eqnarray*}
\|\widehat \rho_n^{(ls,n)} - \rho\| &\leq & \|\widehat \rho_n^{(ls)} - \rho\| +\|\widehat \rho_n^{(ls,n)} - \widehat \rho_n^{(ls)}\| \leq 2 \nu(\varepsilon).
\end{eqnarray*}

\end{proof}

\subsection{Proof of Corollary \ref{cor:rk,state}}
\label{sec:7.4}

\begin{proof}[Proof of Corollary~\ref{cor:rk,state}]
We order the eigenvalues of $\widehat \rho_n^{(pen)}$ in decreasing order of absolute values 
$(|\widehat \lambda_1| \geq ... \geq |\widehat \lambda_d| )$, and denote by $\widetilde \lambda_1,...,\widetilde \lambda_d$ the eigenvalues of $\widehat \rho_n^{(pen,n)}$. As in the previous proof, we can see that both matrices will share the same eigenvectors and that the relation between the eigenvalues is
$$
\widetilde \lambda_j = \widehat \lambda _j +\frac L2, \mbox{ where } \frac L2 d = 1 - \sum_{j=1}^d \widehat \lambda_j.
$$
Thus, 
$$
\|\widehat \rho_n^{(pen)} - \widehat \rho_n^{(pen,n)} \|_F^2 = d (L/2)^2 \leq {\rm Tr}^2(\widehat \rho_n^{(pen)} - \rho) /d \leq \|\widehat \rho_n^{(pen)} -\rho\|_F^2 .
$$
We deduce that $\| \rho - \widehat \rho_n^{(pen,n)} \|_F \leq \|\widehat \rho_n^{(pen)} - \widehat \rho_n^{(pen,n)} \|_F + \|\widehat \rho_n^{(pen)} - \rho \|_F \leq 2  \|\widehat \rho_n^{(pen)} - \rho \|_F .$
Therefore, 
$$
\mathbb{P}\left(\| \widehat \rho_n^{(pen,n)} - \rho \|_F  > 8 c(\theta) r \nu(\varepsilon)^2\right) < \varepsilon
$$
and the previous inequality remains true for all $0 < e \leq \varepsilon < 1$.
Let us denote by 
$$
x(e) = 8 r \nu^2(e) = 16 \frac{rd}N \log (\frac{2d}e), 
$$ which is a decreasing function of $e$. This implies that $e = 2d \exp(- \frac{N x(e)}{16 rd})$. Thus,
\begin{eqnarray*}
\mathbb{E}_\rho\|\widehat \rho_n^{(pen,n)} - \rho\|_2^2 &=& \int_0^\infty \mathbb{P}_\rho(\|\widehat \rho_n^{(pen,n)} - \rho\|_2^2 > x) dx \\
& = & \int_0^{x(\varepsilon)} \mathbb{P}_\rho(\|\widehat \rho_n^{(pen,n)} - \rho\|_2^2 >x) dx + \int_{x(\varepsilon)}^\infty \mathbb{P}_\rho(\|\widehat \rho_n^{(pen,n)} - \rho\|_2^2 >x) dx\\
&\leq & x(\varepsilon ) + \int_{x(\varepsilon )}^\infty 2 d \exp(- \frac{N x }{16 rd}) dx\\
&\leq & 16 \frac{rd}N \log (\frac{2d}\varepsilon ) + 2 d \cdot 16\frac{rd}{N} \exp(-\frac{N  }{16 rd} x(\varepsilon))\\
& \leq & 16 \frac{rd}N \left( \log (\frac{2d}\varepsilon ) + \varepsilon\right)
\leq C \frac{rd}N \log (\frac{2d}\varepsilon ).
\end{eqnarray*}
\end{proof}
\subsection{Proof upper bound physical estimator}\label{proof.threshold}
\begin{proof}[Proof of Theorem~\ref{thm:recstate}]
We recall from Proposition~\ref{prop:trace} that with probability larger than $1-\varepsilon$, we have $\| \widetilde \rho_n^{(ls)} - \rho \| \leq 2 \nu(\varepsilon)$. In particular, by using the Weyl inequality \cite{Horn}, this implies that $ |\widetilde \lambda_k - \lambda_k| \leq 2 \nu(\varepsilon)$ for all $k$ from 1 to $d$, where $\lambda_1,...,\lambda_d$ are  the eigenvalues of $\rho $ arranged in decreasing order.

After a total of  $\hat{\ell} = d-\hat{r}$ iterations the algorithm stops and we have \cite{Smolin} 
$$
\widehat \lambda_{\hat{r}+1}^{(phys)} =...= \widehat \lambda_d^{(phys)}=0
$$ 
and
$$
\widehat \lambda_j^{(phys)} :=\widehat \lambda_j^{(\hat\ell)} = \widetilde \lambda_j +
\frac{1}{\hat{r}} \sum_{ k > \hat{r}} \widetilde \lambda_k
\qquad j=1,\dots, \hat{r}.
$$
From now on we assume that $\|\widetilde \rho_n^{(ls)} - \rho\| \leq 2 \nu( \varepsilon)$ which holds with probability larger than $1-\varepsilon$, cf. Corollary \ref{rho.tilde.ls}. We will show that under this assumption $\hat{r} = r$. For this we consider the two cases  $ \hat{r} >r $ and $ \hat{r} <r$ separately.

Suppose $ \hat{r} >r $. For $j \leq \hat{r}$ we have
\begin{eqnarray*}
|\widehat \lambda_j^{(phys)} - \lambda_j|& = & \left|\widetilde \lambda_j - \lambda_j + \frac{1}{\hat{r}}
\left( 1- \sum_{ k \leq \hat{r}} \widetilde \lambda_k\right)\right|\\
& = & \left|\widetilde \lambda_j - \lambda_j + \frac{1}{\hat{r}} \sum_{ k \leq \hat{r}} (\lambda_k - \widetilde \lambda_k ) \right|\\
& \leq & |\widetilde \lambda_j - \lambda_j|+\frac {1}{\hat{r}} \sum_{ k \leq \hat{r}} |\widetilde \lambda_k-\lambda_k| \leq 2 \nu+  
2\nu = 4 \nu.
\end{eqnarray*}
This implies that $\widehat{\lambda}_{\hat{r}}^{(phys)} \leq \lambda_{\hat{r}} + 4\nu = 4\nu$ since $\hat{r} > r$. However, as discussed above, by the definition of the threshold estimator, $\widehat{\lambda}_{\hat{r}}^{(phys)} \geq 4\nu$. Therefore $\hat{r}$ cannot be larger than $r$.

Suppose $ \hat{r} <r $.  Then $\widehat \lambda_r^{(phys)} \geq \lambda_r - |\widehat \lambda_r^{(phys)}- \lambda_r| >8\nu-  4\nu = 4\nu$. However this is not possible since $\widehat \lambda_r^{(phys)}=0$ under the assumption 
$\hat{r} <r $. Thus, $\widehat r = r $ with probability larger than $1-\epsilon$.

We consider now the MSE of the estimator. As shown above, we have 
$|\widehat{\lambda}^{(phys)}_j - \lambda_j| \leq 4\nu$. Let $\rho= UDU^*$ be the diagonalisation of $\rho$, where $U$ is unitary and $D$ is the diagonal matrix of eigenvalues. Similarly, let 
$\widehat{\rho}^{(ls)}_n= \widehat{U}_n {D}^{(ls)}_n\widehat{U}^*_n$ and 
$\widehat{\rho}^{(phys)}_n= \widehat{U}_n {D}^{(phys)}_n\widehat{U}^*_n$ be the decompositions of the least squares and the physical threshold estimator, where we take into account that the latter two have the same eigenbasis.
Then
\begin{eqnarray}
\|\widehat \rho_n^{(phys)} - \rho\|_2 &=& 
\|\widehat{U}_n D_n^{(phys)} \widehat{U}_n^* - U D U^* \|_2 \nonumber\\
 &\leq& 
\|\widehat{U}_n D_n^{(phys)} \widehat{U}_n^*  - \widehat{U}_n D \widehat{U}_n^*\|_2+
\|\widehat{U}_n D \widehat{U}_n^* - U D U^* \|_2 \nonumber\\
&\leq&
\|D_n^{(phys)} - D\|_2 +\|\widehat{U}_n D \widehat{U}_n^* - U D U^* \|_2\label{eq.triangle1}
\end{eqnarray}
The first norm on the right-hand side of the previous inequality is bounded as 
\begin{equation}\label{eq.diag.2}
\|D_n^{(phys)} - D\|_2^2 = \sum_{j \leq \widehat r} (\widehat \lambda_j^{(phys)}-\lambda_j)^2 \leq 16 {r} \nu^2.
\end{equation}
For the second norm we use a similar triangle inequality for the \emph{operator norm} 
\begin{eqnarray*}
\|\widehat{U}_n D \widehat{U}_n^*-  U D  U^*\| 
&\leq & \|\widehat{U}_n D \widehat{U}_n^* - \widehat{U}_n  D_n^{(ls)} \widehat{U}_n^*\| 
+ \|\widehat{U}_n  D_n^{(ls)} \widehat{U}_n^* - U D U^*\| \\
&=& \| D  -   D_n^{(ls)}\|  + \|\widehat\rho_n^{(ls)} - \rho\| \leq \nu + \nu = 2\nu.
\end{eqnarray*}
The first term is smaller than $\nu$ since $|\lambda_i - \widehat{\lambda}^{(ls)}_i|\leq \nu$ as it follows from Proposition \ref{prop:linear}, and the Weyl inequality \cite{Horn}. The second term is also bounded by $\nu$, by Proposition \ref{prop:linear}. Therefore, since $\widehat{U}_n D \widehat{U}_n^*$ and $U D  U^*$ are rank $r$ matrices, the difference is at most rank 
$2r$ and $\| \widehat{U}_n D \widehat{U}_n^* - U D  U^* \|_2 \leq 2\nu \sqrt{2 r}$. By plugging this together with \eqref{eq.diag.2} into the right side of \eqref{eq.triangle1} we get
$$
\|\widehat \rho_n^{(phys)} - \rho\|^2_2 \leq (2\sqrt{2}+ 4)^2 r\nu^2,
$$
with probability larger than $1-\epsilon$.

Moreover, the previous inequality remains true for all $0 < e \leq \varepsilon < 1$.
Let us denote by 
$$
x(e) = 48 r \nu^2(e) = 96 \frac{rd}N \log (\frac{2d}e), 
$$ which is a decreasing function of $e$. This implies that $e = 2d \exp(- \frac{N x(e)}{96 rd})$. Thus,
\begin{eqnarray*}
\mathbb{E}_\rho\|\widehat \rho_n^{(phys)} - \rho\|_2^2 &=& \int_0^\infty \mathbb{P}_\rho(\|\widehat \rho_n^{(phys)} - \rho\|_2^2 > x) dx \\
& = & \int_0^{x(\varepsilon)} \mathbb{P}_\rho(\|\widehat \rho_n^{(phys)} - \rho\|_2^2 >x) dx + \int_{x(\varepsilon)}^\infty 
\mathbb{P}_\rho(\|\widehat \rho_n^{(phys)} - \rho\|_2^2 >x) dx\\
&\leq & x(\varepsilon ) + \int_{x(\varepsilon )}^\infty 2 d \exp(- \frac{N x }{96 rd}) dx\\
&\leq & 96 \frac{rd}N \log (\frac{2d}\varepsilon ) + 2 d \cdot 96\frac{rd}{N} \exp(-\frac{N  }{96 rd} x(\varepsilon))\\
& \leq & 96 \frac{rd}N \left( \log (\frac{2d}\varepsilon ) + \varepsilon)\right)
\leq C \frac{rd}N \log (\frac{2d}\varepsilon ).
\end{eqnarray*}
\end{proof}


\subsection{The average Fisher information matrix for the full, unconstrained model, with random measurement design} 
\label{sec.averageFisher}

In this section we present a detailed calculation of the average quantum Fisher information at a the rank  
$r$ state $\rho_0$ defined in \eqref{eq.rho0}, for random measurements with uniform distribution over the measurement basis. We consider the full parametrisation by $\theta\in \mathbb{R}^{d^2}$ given by equation \eqref{eq.def.theta}. As explained in the proof, the Fisher information matrix for the parametrisation  $\tilde{\theta}$ of the rotation model $\mathcal{R}_{d,r}$ is a particular block of 
the larger Fisher matrix computed here. We will come back to this at the end of the computation.

Let 
$$
{\bf B}_U := \{|{\bf o};U\rangle := U|{\bf o}\rangle   : {\bf o} =1,\dots, 2^k \}
$$
denote the ONB obtained by rotating the standard basis by the unitary $U$. 
The Fisher info $I(\rho| {\bf B}_U)$ associated to the measurement with ONB 
${\bf B}_U$ is given by the matrix 
$$
I(\rho| {\bf B}_U)_{a,b} = \sum_{{\bf o}: p_\rho({\bf o}|{\bf B}_U)>0} \frac{1}{p_\rho({\bf o}|{\bf B}_U)}
\frac{\partial p_\rho({\bf o}|{\bf B}_U)}{\partial \theta_a}
\frac{\partial p_\rho({\bf o}| {\bf B}_U)}{\partial \theta_b}
$$
where $p_\rho({\bf o}|{\bf B}_U)$ is the probability of the outcome ${\bf o}$
\begin{eqnarray*}
p_\rho({\bf o}|{\bf B}_U) &=& \langle {\bf o},U |\rho |{\bf o}, U\rangle = \sum_{i} \rho_{ii} |\langle  {\bf o},U |i\rangle|^2 \\
&+& 
2 \sum_{i<j}{\rm Re} (\rho_{i,j}) {\rm Re} (   \langle i|  {\bf o},U\rangle\langle   {\bf o},U|j\rangle)
+ 2 \sum_{i<j} {\rm Im} (\rho_{i,j}) {\rm Im} (  \langle i|  {\bf o},U\rangle\langle   {\bf o},U|j\rangle)\\
\end{eqnarray*}
and the partial derivatives are given by
\begin{eqnarray*}
\frac{\partial p_\rho({\bf o}|{\bf B}_U) }{\partial \rho_{i,i}} &=& |\langle  {\bf o},U |i\rangle|^2\\
\frac{\partial p_\rho({\bf o}|{\bf B}_U) }{\partial {\rm Re} \rho_{i,j}} &=& 2{\rm Re} (   \langle i|  {\bf o},U\rangle\langle   {\bf o},U|j\rangle)\\
\frac{\partial p_\rho({\bf o}|{\bf B}_U) }{\partial {\rm Im} \rho_{i,j}} &=& 2{\rm Im} (   \langle i|  {\bf o},U\rangle\langle   {\bf o},U|j\rangle).
\end{eqnarray*}
The Fisher information matrix $I(\rho| {\bf B}_U)$ has the following block structure
$$
I(\rho| {\bf B}_U)=\left(
\begin{array}{ccccc}
I^{dd}(\rho| {\bf B}_U) && I^{dr}(\rho| {\bf B}_U) && I^{di}(\rho| {\bf B}_U) \\ 
&&&&\\
I^{rd}(\rho| {\bf B}_U) && I^{rr}(\rho| {\bf B}_U) && I^{ri}(\rho| {\bf B}_U) \\ 
&&&&\\
I^{id}(\rho| {\bf B}_U) && I^{ir}(\rho| {\bf B}_U) && I^{ii} (\rho| {\bf B}_U)
\end{array}\right)
$$
with superscripts indicating the 
type of parameter considered: diagonal, real or imaginary part of off-diagonal element.
The average Fisher information for a randomly chosen basis is
$$
\bar{I} (\rho):= \int  \mu(dU ) I(\rho|{\bf B}_U)
$$
where $\mu(dU)$ is the Haar measure over unitaries used for choosing the random basis. Note that by symmetry 
$\bar{I} (\rho)$ only depends on the spectrum of $\rho$. We will not compute $\bar{I} (\rho)$ for an arbitrary state $\rho$ but only at 
$\rho=\rho_0$. The corresponding Fisher information will be denoted $\bar{I} = \bar{I} (\rho_0)$ and is a function of $d$ and $r$. 
Below we compute the different blocks of $\bar{I}$. For the matrix elements we will use a suggestive notation, e.g. 
$\bar{I}^{d,r}_{ii;jk}$ denotes the element corresponding to the diagonal parameter $\theta^d_{ii} = \rho_{ii}$ and the real part of the off-diagonal element $\rho_{jk}$, etc. 

\noindent
\emph{A) Diagonal-diagonal block.}

$$
I(\rho|{\bf B}_U)^{dd}_{ii;jj} =  \sum_{{\bf o}: p_\rho({\bf o}|{\bf B}_U)>0} 
\frac{1}{p_\rho({\bf o}|{\bf B}_U)}  |\langle  {\bf o},U |i\rangle|^2  |\langle  {\bf o},U |j\rangle|^2
$$
By integrating over unitaries we obtain the corresponding matrix element of the average Fisher matrix $\bar{I}$. Since 
$p_{\rho_0}({\bf o}|{\bf B}_U)>0$ is true for all ${\bf o}$, with probability one with respect to $\mu^d(dU)$, 
we drop the condition from the sum. At the state $\rho= \rho_0$ defined in \eqref{eq.rho0}, we have
\begin{eqnarray*}
\bar{I}^{dd}_{ii;jj} &=& r \sum_{\bf o} 
\int  \frac{|\langle  {\bf o},U |i\rangle|^2  |\langle  {\bf o},U |j\rangle|^2 }{\sum_{k=1}^r | \langle {\bf o} , U| k\rangle|^2} \mu^d(dU)\\
&=&
r\cdot d 
\int  \frac{|\langle  {\bf 1},U |i\rangle|^2  |\langle  {\bf 1},U |j\rangle|^2 }{\sum_{k=1}^r | \langle {\bf 1} , U| k\rangle|^2} \mu^d(dU)\\
&=&
r\cdot d \int \frac{|\langle  \psi |i\rangle|^2  |\langle  \psi |j\rangle|^2 }{\sum_{k=1}^r | \langle \psi| k\rangle|^2} \nu^d(d\psi)
\end{eqnarray*}
where $\nu^d(d\psi)$ is the uniform measure over the projective space on $\mathbb{C}^d$. 
To compute the integral we decompose $|\psi\rangle $ as
$$
|\psi\rangle = q |\psi_1\rangle+ \sqrt{1-q^2}|\psi_2\rangle 
$$
with $|\psi_1\rangle \in\mathcal{H}_r:={\rm Span} \{|1\rangle, \dots , |r\rangle  \}$ and 
$|\psi_2\rangle\in  \mathcal{H}^\perp_r$ normalised (orthogonal) vectors, and $0\leq q\leq 1$. The uniform measure can be expressed as
$$
\nu^d(d\psi) =  m^{d,r}(dq) \times \nu^r(d\psi_1) \times   \nu^{d-r}(d\psi_2)
$$ 
where $m^{d,r}(dq) $ is the distribution of the length of the projection of a random vector in $\mathbb{C}^d$ 
onto an $r$-dimensional subspace. With this notation we have
\begin{equation}\label{eq.fisher.dd}
\bar{I}^{dd}_{ii;jj} = 
r\cdot d \int \int \int \frac{1}{q^2} |\langle  \psi |i\rangle|^2  |\langle  \psi |j\rangle|^2 m^{d,r}(dq)
 \nu^r(d\psi_1) \nu^{d-r}(d\psi_2).
\end{equation}
We distinguish $4$ sub-cases depending on whether each of the indices $i,j$ belongs to $\{1,\dots ,r \}$ or 
$\{ r+1, \dots, d\}$. 

\noindent
\emph{Sub-case 1: $i,j\leq r$.} In this case \eqref{eq.fisher.dd} becomes
\begin{eqnarray}
\bar{I}^{dd}_{ii;jj} &=& 
r\cdot d \int  \frac{1}{q^2} q^4 m^{d,r}(dq) \times \int |\langle  \psi_1 |i\rangle|^2  |\langle  \psi_1 |j\rangle|^2 
 \nu^r(d\psi_1) \nonumber \\
&=&
r\cdot d \int  \frac{1}{q^2} q^4 m^{d,r}(dq) \times \int  U^*_{1,i} U_{i,1} U^*_{1,j} U_{j,1} \mu^r (dU).
\label{eq.fisher.dd.case1}
\end{eqnarray}
In the last line we have re-written $|\psi_1\rangle$ as $U|1\rangle$ in order to use the existing formulas for integrals of monomials over the unitary group \cite{Collins}. Since we will use these formulas repeatedly, we recall that
\begin{equation}\label{eq.weingarten}
 \int  U_{i_1, j_1}  U_{i_2,j_2}  U^*_{j^\prime_1, i^\prime_1}U^*_{j^\prime_2, i^\prime_2}\mu^l (dU) = 
 \sum_{\sigma,\tau\in S_2}\delta_{i_1,i^\prime_{\sigma 1}}\delta_{i_2,i^\prime_{\sigma 2}} \delta_{j_1,j^\prime_{\tau 1}}\delta_{j_2,j^\prime_{\tau 2}} {\rm Wg}^l(\sigma\tau^{-1})
\end{equation}
where $\sigma$ and $\tau$ are permutations in $S_2 = \{ (1,1) , (2)\}$ and ${\rm Wg}^l(\cdot)$ is the Weingarten function over $S_2$:
$$
{\rm  {\rm Wg}}^l(1,1) =\frac{1}{l^2-1}, \qquad  
{\rm Wg}^l(2) =\frac{-1}{l(l^2 -1)}.
$$
The integral over $q$ can be easily evaluated as
\begin{equation}\label{eq.int.qsquare}
\int  q^2 m^{d,r}(dq) =  \sum_{k=1}^r \int |\langle k|\psi\rangle|^2 \mu^d(d\psi) = \frac{r}{d}. 
\end{equation}

By inserting \eqref{eq.weingarten} and \eqref{eq.int.qsquare} into \eqref{eq.fisher.dd.case1} we get 
\begin{eqnarray*}
\bar{I}^{dd}_{ii;jj} &=& 
r\cdot d 
\cdot \frac{r}{d}
\left(
\frac{1}{r^2-1}- \frac{1}{r(r^2 -1)}  
\right)=\frac{r}{r+1}, \quad i\neq j\\
\bar{I}^{dd}_{ii;ii} &=& 
r\cdot d 
\cdot \frac{r}{d}
\cdot 2\left(
\frac{1}{r^2-1}- \frac{1}{r(r^2 -1)}  
\right)=\frac{2r}{r+1}.\\
\end{eqnarray*}

\noindent
\emph{Sub-case 2: $i\leq r$ and $ j>r$.} 
From \eqref{eq.fisher.dd} we get
\begin{eqnarray*}
\bar{I}^{dd}_{ii;jj} &=& 
r\cdot d \int  \frac{1}{q^2} q^2(1-q^2) m^{d,r}(dq) \times \int |\langle  \psi_1 |i\rangle|^2   \nu^r(d\psi_1) \times \int |\langle  \psi_2 |j\rangle|^2 
 \nu^{d-r}(d\psi_2) \\
&=&
r\cdot d \left(1-\frac{r}{d}\right) \cdot \frac{1}{r} \cdot \frac{1}{d-r} = 1.
\end{eqnarray*}

\noindent
\emph{Sub-case 3: $i, j>r$.} Similarly, in this case
\begin{eqnarray}
\bar{I}^{dd}_{ii;jj} &=& 
r\cdot d \int  \frac{1}{q^2} (1-q^2)^2 m^{d,r}(dq) \times \int |\langle  \psi_2 |i\rangle|^2  |\langle  \psi_2 |j\rangle|^2 
 \nu^{d-r}(d\psi_2)\nonumber \\
&=&
r\cdot d \int  \frac{1}{q^2} (1-q^2)^2 m^{d,r}(dq) \times \int  U^*_{1,i} U_{i,1} U^*_{1,j} U_{j,1} \mu^{d-r} (dU)
\label{eq.dd.rr}
\end{eqnarray}
We first simplify the integral on the right side
\begin{equation}\label{eq.itegral.inverseq}
\int  \frac{1}{q^2} (1-q^2)^2 m^{d,r}(dq) = \int  \frac{1}{q^2} m^{d,r}(dq) -2+ \frac{r}{d}.
\end{equation}
To evaluate the remaining integral, we consider a multivariate Gaussian random variable 
$c= (c_1, \dots ,c_{2d})\sim N(0, I_{2d})$, and denote by $g^{2d}(dc)$ its probability distribution. From this we construct the complex vector with uniform distribution over the unit ball in $\mathbb{C}^d$
$$
|\psi\rangle := \frac{1}{\|c\|} \sum_{i=1}^d (c_i+ ic_{d+1} ) |i\rangle, \qquad \|c\|^2= \sum_{l=1}^{2d}c_i^2.
$$
We can now write
\begin{eqnarray}
 \int  \frac{1}{q^2} m^{d,r}(dq)  &=& \int g^{2d}(dc) \frac{\sum_{l=1}^{2d}{c^2_l}}{ \sum_{l=1}^{2r}{c^2_l}} = 1+
 \int g^{2d}(dc) \frac{\sum_{l=2r+1}^{2d}{c^2_l}}{ \sum_{l=1}^{2r}{c^2_l}} \nonumber \\
& =& 1 + \int \|c^1\|^2 g^{2(d-r)}(dc^1)  \cdot \int  \frac{1}{\| c^2 \|^2}g^{2r}(dc^2) \nonumber\\
&=&1+ 2(d-r) \cdot \frac{1}{2r-2} = \frac{d-1}{r-1}.\label{eq.inverse.chi}
\end{eqnarray}
Above we used the fact that $\|c^2\|^2$ is a $\chi^2$ variable with $2r$ degrees of freedom, so the second integral is the mean of its inverse.  By inserting \eqref{eq.inverse.chi} into \eqref{eq.itegral.inverseq} we obtain
\begin{equation}\label{eq.funny.integral}
\int  \frac{1}{q^2} (1-q^2)^2 m^{d,r}(dq)= -2 +\frac{r}{d} + \frac{d-1}{r-1} = \frac{(d-r)(d-r+1)}{d(r-1)}
\end{equation}

Finally, by inserting \eqref{eq.funny.integral} into \eqref{eq.dd.rr}  and applying \eqref{eq.weingarten} we obtain 
\begin{eqnarray*}
\bar{I}^{dd}_{ii;jj} &=& 
r\dot d 
\cdot \frac{(d-r)(d-r+1)}{d(r-1)}
\cdot 
\left(
\frac{1}{(d-r)^2-1}- \frac{1}{(d-r)((d-r)^2 -1)}  
\right)\\
&=& 
\frac{ r}{r-1},\quad i\neq j
\\
\bar{I}^{dd}_{ii;ii} &=& 
\frac{ 2r}{r-1}.
\end{eqnarray*}
Note that for $r=1$ these matrix elements are infinite. This is due to large contributions from measurements which have one basis vector close to being orthogonal to the one dimensional vector state.. This is somewhat akin to what happens in the case of a Bernoulli variable (coin toss), in the case when the probability is close to zero or one. While this phenomenon is interesting, it does not play any role in our analysis for which the diagonal matrix elements are not relevant parameters.

\vspace{2mm}

\noindent
\emph{B) Diagonal-Real block.}
$$
I(\rho|{\bf B}_U)^{dr}_{ii;jk} = 2 \sum_{{\bf o}: p_\rho({\bf o}|{\bf B}_U)>0} 
\frac{1}{p_\rho({\bf o}|{\bf B}_U)}  |\langle  {\bf o},U |i\rangle|^2  \cdot
{\rm Re} ( \langle  j| {\bf o},U \rangle \langle{\bf o},U| k\rangle)
$$
As before, the corresponding entry of the average Fisher information matrix is
\begin{eqnarray*}
\bar{I}^{dr}_{ii;jk} &=& 2\cdot d\cdot r \int \frac{|\langle  \psi |i\rangle|^2\cdot     
{\rm Re}( \langle  j| \psi \rangle \langle\psi | k\rangle )  }{\sum_{k=1}^r | \langle \psi| k\rangle|^2} \nu^d(d\psi)\\
&=& 
 2\cdot d\cdot r \int\!\!\!\!\int\!\!\!\!\int \frac{1}{q^2} |\langle  \psi |i\rangle|^2  \cdot     {\rm Re}( \langle  j| \psi \rangle \langle\psi | k\rangle )  m^{d,r}(dq)
 \nu^r(d\psi_1) \nu^{d-r}(d\psi_2).
\end{eqnarray*}

\noindent
\emph{Sub-case 1: $i,j,k \leq r$.}
In this case the integral is
\begin{eqnarray*}
\bar{I}^{dr}_{ii;jk} &=&  2\cdot d\cdot r \int q^2  m^{d,r}(dq)\int  |\langle  \psi_1 |i\rangle|^2   {\rm Re}( \langle  j| \psi_1 \rangle \langle\psi_1 | k\rangle )  \nu^r(d\psi_1) \\
&=&d\cdot r \cdot \frac{r}{d}  \int \mu^r(dU) U_{i1} U^*_{1i} ( U_{j1}U^*_{1k} +    U_{k1}U^*_{1j} ) =0,
\end{eqnarray*}
where we applied formula \eqref{eq.weingarten} for the integrals over the unitaries.

\noindent
\emph{Sub-case 2: $i,j \leq r $ and  $k>r$.}
In this case the integral is
\begin{eqnarray*}
\bar{I}^{dr}_{ii;jk} &=&   2\cdot d\cdot r \int q\sqrt{1-q^2}  m^{d,r}(dq) \\
&\times &
{\rm Re} \int  |\langle  \psi_1 |i\rangle|^2   \langle  j| \psi_1\rangle \nu^r(d\psi_1) \int  \langle\psi_2 | k\rangle   \nu^{d-r}(d\psi_2) =0.
\end{eqnarray*}
\emph{Sub-case 3: $i \leq r $ and $ j, k>r$.}
In this case the integral is
\begin{eqnarray*}
\bar{I}^{dr}_{ii;jk} &=& 2\cdot d\cdot r \int (1-q^2)  m^{d,r}(dq) \\
&\times &
 \int  |\langle  \psi_1 |i\rangle|^2  \nu^r(d\psi_1)\cdot {\rm Re} \int   \langle  j| \psi_2\rangle \langle\psi_2 | k\rangle   \nu^{d-r}(d\psi_2) =0.
\end{eqnarray*}
\emph{Sub-case 4: $i > r $ and $ j, k\leq r$.}
In this case the integral is 
\begin{eqnarray*}
\bar{I}^{dr}_{ii;jk} &=&  2\cdot d\cdot r \int (1-q^2)  m^{d,r}(dq) \\
&\times &
 \int  |\langle  \psi_2 |i\rangle|^2  \nu^{d-r}(d\psi_2)\cdot {\rm Re} \int   \langle  j| \psi_1\rangle \langle\psi_1 | k\rangle   \nu^{r}(d\psi_1) =0.
\end{eqnarray*}
\emph{Sub-case 5: $i > r $ and $ j, k\leq r$.}
In this case the integral is 
\begin{eqnarray*}
\bar{I}^{dr}_{ii;jk} &=&  2\cdot d\cdot r \int (1-q^2)  m^{d,r}(dq) \\
&\times &
 \int  |\langle  \psi_2 |i\rangle|^2  \nu^{d-r}(d\psi_2)\cdot {\rm Re} \int   \langle  j| \psi_1\rangle \langle\psi_1 | k\rangle   \nu^{r}(d\psi_1) =0.
\end{eqnarray*}
\emph{Sub-case 6: $i ,k> r $ and $j\leq r$.}
In this case the integral is 
\begin{eqnarray*}
\bar{I}^{dr}_{ii;jk} &=& 2\cdot d\cdot r \int q\sqrt{1-q^2}  m^{d,r}(dq) \\
&\times &
 {\rm Re} \int  |\langle  \psi_2 |i\rangle|^2 \langle\psi_2 | k\rangle  \nu^{d-r}(d\psi_2)\cdot  
 \int   \langle  j| \psi_1\rangle   \nu^{r}(d\psi_1) =0.
\end{eqnarray*}
\emph{Sub-case 7: $i ,j,k> r $.}
In this case the integral is 
\begin{eqnarray*}
\bar{I}^{dr}_{ii;jk} &=&  2\cdot d\cdot r \int \frac{(1-q^2)^2}{q^2}  m^{d,r}(dq) \\
&\times &
 \int  |\langle  \psi_2 |i\rangle|^2    {\rm Re}   \langle  j| \psi_2\rangle  \langle\psi_2 | k\rangle \nu^{d-r}(d\psi_2) =0.
\end{eqnarray*}
The last integral is zero for the same reason as in sub-case 1.

In conclusion all diagonal-real matrix elements are zero
$$
\bar{I}^{dr}_{ii;j,k} =0, \qquad {\mathrm for~all~}  i= 1,\dots ,2^k,  ~{\mathrm and} ~1\leq j<k \leq 2^k.
$$

\vspace{2mm}
\noindent
\emph{C) Diagonal-Imaginary block.}

Similarly to the case of diagonal-real elements, we obtain that all diagonal-imaginary matrix elements are zero
$$
\bar{I}^{di}_{ii;jk} =0, \qquad {\mathrm for~all~}  i= 1,\dots ,2^k,  ~{\mathrm and} ~1\leq j<k \leq 2^k.
$$

\vspace{2mm}
\noindent
\emph{D) Real-Real block.}
$$
I(\rho|{\bf B}_U)^{rr}_{ij;kl} = 4 \sum_{{\bf o}: p_\rho({\bf o}|{\bf B}_U)>0} 
\frac{1}{p_\rho({\bf o}|{\bf B}_U)}  
{\rm Re} ( \langle  i| {\bf o},U \rangle \langle{\bf o},U| j\rangle)\cdot {\rm Re} ( \langle  k| {\bf o},U \rangle \langle{\bf o},U| l\rangle)
$$
As before the corresponding entry of the average Fisher information matrix is
\begin{eqnarray*}
\bar{I}^{rr}_{ij;kl} &=&4\cdot  d\cdot r 
\int \frac{{\rm Re}( \langle  i| \psi \rangle \langle\psi | j\rangle ) \cdot {\rm Re}( \langle  k| \psi \rangle \langle\psi | l\rangle )}
{\sum_{k=1}^r | \langle \psi| k\rangle|^2} \nu^d(d\psi)\\
&=& 
4\cdot d\cdot r \int\!\!\!\!\int\!\!\!\!\int \frac{1}{q^2}   
{\rm Re}( \langle  i| \psi \rangle \langle\psi | j\rangle )  \cdot {\rm Re}( \langle  k| \psi \rangle \langle\psi | l\rangle ) m^{d,r}(dq)
 \nu^r(d\psi_1) \nu^{d-r}(d\psi_2).
\end{eqnarray*}
The same reasoning as before can be applied to transform the integral into a single integral, or a product of integrals over unitaries.
If a single index is larger than $r$ while the other 3 are smaller, or conversely a single index is smaller than $r$ while the other 
are larger, then we have two integrals over unitaries which are zero since the monomial is of odd order. Therefore the following cases remain to be analysed.

\noindent
\emph{Sub-case 1: $i ,j,k,l\leq r $.}
In this case the integral is
\begin{eqnarray*}
\bar{I}^{rr}_{ij;kl} &=& 4\cdot d\cdot r \int q^2  m^{d,r}(dq)
\int {\rm Re}( \langle  i| \psi_1 \rangle \langle\psi_1 | j\rangle ) \cdot {\rm Re}( \langle  k| \psi_1 \rangle \langle\psi_1 | l\rangle )   
\nu^r(d\psi_1) \\
&=&d\cdot r \cdot \frac{r}{d} \cdot 
\int \mu^r(dU) (U_{i1} U^*_{1j} +U_{j1} U^*_{1i}  ) ( U_{k1}U^*_{1l} +    U_{l1}U^*_{1k} ) .
\end{eqnarray*}
Using formula \eqref{eq.weingarten} we find that the integral over unitaries is zero unless we deal with a diagonal matrix element of $I$, i.e. $i= k$ and $j=l$. For the latter we have
\begin{eqnarray*}
\bar{I}^{rr}_{ij;ij} &=&
d\cdot r \cdot \frac{r}{d} \cdot 
\int \mu^r(dU) (U_{i1} U^*_{1j} +U_{j1} U^*_{1i}  ) ( U_{i1}U^*_{1j} +    U_{j1}U^*_{1i} ) \\
&=&
2 \cdot d\cdot r \cdot \frac{r}{d}  \left(
\frac{1}{r^2-1}- \frac{1}{r(r^2 -1)}  
\right)= \frac{2r}{r+1}.
\end{eqnarray*}
\noindent
\emph{Sub-case 2: $i ,j,k,l> r $.} In this case, the integral is 

\begin{eqnarray*}
\bar{I}^{rr}_{ij;kl} &=& 4\cdot d\cdot r \int \frac{(1-q^2)^2}{q^2}  m^{d,r}(dq)
\int {\rm Re}( \langle  i| \psi_2 \rangle \langle\psi_2 | j\rangle ) \cdot {\rm Re}( \langle  k| \psi_2 \rangle \langle\psi_2 | l\rangle )   
\nu^{d-r}(d\psi_2) \\
&=&d\cdot r \cdot  \frac{(d-r)(d-r+1)}{d(r-1)} \cdot 
\int \mu^{d-r}(dU) (U_{i1} U^*_{1j} +U_{j1} U^*_{1i}  ) ( U_{k1}U^*_{1l} +    U_{l1}U^*_{1k} ) .
\end{eqnarray*}
where we have used formula \eqref{eq.funny.integral} for the integral over $q$. A similar calculation as above 
shows that all off-diagonal elements are zero and 
\begin{eqnarray*}
\bar{I}^{rr}_{ij;ij} &=& 2 d\cdot r \cdot  \frac{(d-r)(d-r+1)}{d(r-1)} \cdot \left(\frac{1}{(d-r)^2-1} - \frac{1}{(d-r)( (d-r)^2 -1)}\right)\\
&=&\frac{2r}{r-1}.
\end{eqnarray*}
\noindent
\emph{Sub-case 3: ($i ,j\leq r$ and $k,l> r $) or ($i ,j> r $ and $ k,l\leq r $).}
In this case we deal with a product of integrals over unitaries of the form
$$
\int \mu^r(dU) U_{i1} U^*_{1j} =0.
$$
so all matrix elements are zero.

\noindent
\emph{Sub-case 4: $i ,k\leq r $ and $ j,l> r $.}
Again, the off-diagonal elements are zero and 
\begin{eqnarray*}
\bar{I}^{rr}_{ij;ij}= 
&=& 4\cdot d\cdot r \int (1-q^2)  m^{d,r}(dq)
\int {\rm Re}( \langle  i| \psi_1 \rangle \langle\psi_2 | j\rangle )\cdot  {\rm Re}( \langle  i| \psi_1 \rangle \langle\psi_2 | j\rangle )   
\nu^r(d\psi_1)\nu^r(d\psi_2)  \\
&=&2 d\cdot r \cdot (1-\frac{r}{d}) \cdot\frac{1}{r}\frac{1}{d-r}= 2.
\end{eqnarray*}

In conclusion, the only non-zero elements of the real-real block are on the diagonal and are given by
\begin{eqnarray*}
\bar{I}^{rr}_{ij;ij} &=&\frac{2 r}{r+1},\qquad  1 \leq i<j\leq r,\\
\bar{I}^{rr}_{ij;ij} &= &2,  \qquad\qquad~ i\leq r  ~{\mathrm and } ~r <j \leq 2^k,\\
\bar{I}^{rr}_{ij;ij} &= &\frac{2r}{r-1}, \qquad~ r< i<j\leq 2^k.
\end{eqnarray*}

\noindent
\emph{E) Imaginary-Imaginary block.}
This block is similar to the real-real one. All off diagonal elements are zero, and the diagonal ones are
\begin{eqnarray*}
\bar{I}^{ii}_{ij;ij}&=&  
\frac{2r}{r+1},\qquad  1 \leq i<j\leq r,\\
\bar{I}^{ii}_{ij;ij}&=&  
2, \qquad\qquad~ i\leq r  ~{\mathrm and } ~r <j \leq 2^k,\\
\bar{I}^{ii}_{ij;ij}&=&  
\frac{2r}{r-1},\qquad~ r< i<j\leq 2^k.\\
\end{eqnarray*}

\noindent
\emph{F) Real-Imaginary block.} Next we show that all real-imaginary off-diagonal elements are equal to zero.
$$
I(\rho|{\bf B}_U)^{ri}_{ij;kl} = - 4\cdot \sum_{{\bf o}: p_\rho({\bf o}|{\bf B}_U)>0} 
\frac{1}{p_\rho({\bf o}|{\bf B}_U)}  
{\rm Re} ( \langle  i| {\bf o},U \rangle \langle{\bf o},U| j\rangle){\rm Im} ( \langle  k| {\bf o},U \rangle \langle{\bf o},U| l\rangle)
$$
By integration we obtain the corresponding matrix element of the average Fisher information matrix
\begin{eqnarray*}
\bar{I}^{rr}_{ij;kl} &=& - 4\cdot d\cdot r 
\int \frac{{\rm Re}( \langle  i| \psi \rangle \langle\psi | j\rangle ) \cdot {\rm Im}( \langle  k| \psi \rangle \langle\psi | l\rangle )}
{\sum_{k=1}^r | \langle \psi| k\rangle|^2} \nu^d(d\psi)\\
&=& 
- 4\cdot d\cdot r \int \!\!\!\!\int \!\!\!\!\int \frac{1}{q^2}   
{\rm Re}( \langle  i| \psi \rangle \langle\psi | j\rangle )\cdot  {\rm Im}( \langle  k| \psi \rangle \langle\psi | l\rangle ) m^{d,r}(dq)
 \nu^r(d\psi_1) \nu^{d-r}(d\psi_2).
\end{eqnarray*}
Again, if one index is smaller that $r$ while the other three are larger, or otherwise, then the matrix element in zero. We analyse the remaining cases.
 
\noindent
\emph{Sub-case 1: $i ,j,k,l\leq r $.}
In this case the integral is
\begin{eqnarray*}
\bar{I}^{ri}_{ij;kl} &=& -4\cdot d\cdot r \int q^2  m^{d,r}(dq)
\int {\rm Re}( \langle  i| \psi_1 \rangle \langle\psi_1 | j\rangle )  {\rm Im}( \langle  k| \psi_1 \rangle \langle\psi_1 | l\rangle )   
\nu^r(d\psi_1) \\
&=& i\cdot d\cdot r \cdot \frac{r}{d} 
\int \mu^r(dU) (U_{i1} U^*_{1j} +U_{j1} U^*_{1i}  ) ( U_{k1}U^*_{1l} -    U_{l1}U^*_{1k} ) .
\end{eqnarray*}
Using formula \eqref{eq.weingarten} we find that the integral over unitaries is zero unless $i= k$ and $j=l$. However even in this case, two of the four terms are zero, and the other two cancel each other.

\noindent
\emph{Sub-case 2: $i ,j,k,l> r $.}
Here the integrals over the unitaries are similar as in \emph{case 1} above, with the difference that they taken with respect to 
$\mu^{d-r}(dU)$. Therefore the matrix elements are zero.

\noindent
\emph{Sub-case 3: ($i ,j\leq r$ and $k,l> r $) or ($i ,j> r , k,l\leq r $).}
In this case we deal with a product of integrals over unitaries of the form
$$
\int \mu^r(dU) U_{i1} U^*_{1j} =0.
$$
so all matrix elements are zero.

\noindent
\emph{Sub-case 4: $i ,k\leq r $ and $ j,l> r $ }
Again, the off-diagonal elements are zero and 
\begin{eqnarray*}
\bar{I}^{ri}_{ij;ij}= 
&=& - 4\cdot d\cdot r \int (1-q^2)  m^{d,r}(dq)
\int {\rm Re}( \langle  i| \psi_1 \rangle \langle\psi_2 | j\rangle )\cdot  {\rm Im}( \langle  i| \psi_1 \rangle \langle\psi_2 | j\rangle )   
\nu^r(d\psi_1)\nu^r(d\psi_2)  \\
&=& i\cdot d\cdot r \cdot \left(1-\frac{r}{d}\right) 
\int \int \left( U_{i1} V^*_{1j} + V_{j1}U^*_{1i}  \right) \left( U_{i1} V^*_{1j} - V_{j1} U^*_{1i}\right)\mu^r(dU)\mu^{d-r}(dV)\\
&=&0.
\end{eqnarray*}
In the last integral, two terms are zero and two have different signs and cancel each other. In conclusion all elements of the real-imaginary block are equal to zero.

\vspace{4mm}
\noindent \emph{Summary of the computation of $\bar{I}$}. We found that all off-diagonal blocks of $\bar{I}$ are zero
$$
\bar{I}^{dr}= \bar{I}^{di}= \bar{I}^{ri}= \bar{I}^{rd} = \bar{I}^{id} = \bar{I}^{dr} =0.
$$ 
Moreover the real and imaginary diagonal blocks are diagonal, equal to each other and have three distinct values depending on the position of the indices $i<j$ with respect to $r$: 
$$
\bar{I}^{rr/ii} = {\rm Diag}
\left\{
\begin{array}{cccc}
\bar{I}^{rr/ii}_{ij;ij}&=&  
 \frac{2r}{r+1}, & 1\leq i,j\leq r \\[1mm]
\bar{I}^{rr/ii}_{ij;ij}&=&  2, &  1\leq i\leq r, ~{\textrm and}~r<j\leq 2^k \\[1mm]
\bar{I}^{rr/ii}_{ij;ij}&=&  \frac{2r}{r-1} & r<i<j\leq 2^k.
\end{array}
\right.
$$ 
Finally, the  $d\times d$ block $\bar{I}^{dd}$ is not diagonal but has a simple form
$$
\bar{I}^{dd}_{ii;jj} =
\left\{
\begin{array}{cccc}
\bar{I}^{dd}_{ii;ii} &=& \frac{2r}{r+1}, & 1\leq i\leq r \\[1mm]
\bar{I}^{dd}_{ii;ii} &=& \frac{2r}{r-1}, &   r<i\leq 2^k \\[1mm]
\bar{I}^{dd}_{ii;jj} &=& \frac{r}{r+1}, & i,j\leq r , ~{\textrm and}~ i\neq j \\[1mm]
\bar{I}^{dd}_{ii;jj} &=& \frac{r}{r+1}, & 1\leq i\leq r ~{\textrm and} ~ r<j\leq 2^k\\[1mm]
\bar{I}^{dd}_{ii;jj} &=& \frac{r}{r+1}, &1\leq j\leq r ~{\textrm and} ~ r<i\leq 2^k \\[1mm]
\bar{I}^{dd}_{ii;jj} &=& \frac{r}{r-1} &    r<i,j , ~{\textrm and}~ i\neq j .\\[1mm]
\end{array}
\right.
$$

The model $\mathcal{R}_{d,r}$ can be seen (locally) as the restriction of the full unconstrained model $\mathcal{S}_{r,d}$ parametrised by $\theta$, to the subset of parameters $\tilde{\theta}$ which are real and imaginary parts of matrix elements $\rho_{i,j}$ with $i\leq r<j$. Therefore the corresponding average Fisher information matrix is equal to the corresponding block of $\bar{I}$, i.e. $\bar{\tilde{I}}= 2\mathbf{1}_{2r(d-r)}$.

\qed


\end{document}